\let\oldnl\nl
\newcommand{\nonl}{\renewcommand{\nl}{\let\nl\oldnl}}
\newtheorem{theorem}{Theorem}[section]
\newtheorem{corollary}[theorem]{Corollary}
\newtheorem{proposition}[theorem]{Proposition}
\newtheorem{lemma}[theorem]{Lemma}
\newtheorem{definition}[theorem]{Definition}
\newtheorem{claim}[theorem]{Claim}
\newcommand{\R}{\mathbb{R}}
\newcommand{\bigo}{\mathcal{O}}
\DeclareMathOperator{\sign}{sign}
\DeclareMathOperator{\poly}{poly}
\newcommand{\E}[2]{\mathbb{E}_{{#1}}\left[{#2}\right]}
\newcommand{\q}[1]{``{#1}''}
\newcommand\inner[2]{|\langle #1, #2 \rangle|}
\newcommand{\ceil}[1]{\lceil {#1} \rceil}
\newcommand{\norm}[1]{\| {#1} \|}
\newcommand{\lm}{\mathrm{Large Margin}}
\newcommand{\mg}{\mathrm{Margin Gap}}
\newcommand{\Ebb}{\mathbb{E}}
\newcommand{\ED}{\text{ED}}
\newcommand{\MD}{\text{MD}}
\DeclareMathOperator*{\argmin}{arg\,min}
\title{Point Location and Active Learning: Learning Halfspaces Almost Optimally}
\author{%
  Max Hopkins\thanks{Department of Computer Science and Engineering, UCSD, California, CA 92092. Email: \texttt{nmhopkin@eng.ucsd.edu}. Supported by NSF Award DGE-1650112.}
  \and
  Daniel Kane\thanks{Department of Computer Science and Engineering / Department of Mathematics, UCSD, California, CA 92092. Email: \texttt{dakane@eng.ucsd.edu}. Supported by NSF Award CCF-1553288 (CAREER) and a Sloan
Research Fellowship.}
    \and
    Shachar Lovett\thanks{Department of Computer Science and Engineering, UCSD, California, CA 92092. Email: \texttt{slovett@cs.ucsd.edu}. Supported by NSF Award CCF-1909634.}
    \and
    Gaurav Mahajan\thanks{Department of Computer Science and Engineering, UCSD, California, CA 92092. Email: \texttt{gmahajan@eng.ucsd.edu}}
}
\begin{document}

\maketitle

\begin{abstract}
Given a finite set $X \subset \mathbb{R}^d$ and a binary linear classifier $c: \mathbb{R}^d \to \{0,1\}$, how many queries of the form $c(x)$ are required to learn the label of every point in $X$? Known as \textit{point location}, this problem has inspired over 35 years of research in the pursuit of an optimal algorithm. Building on the prior work of Kane, Lovett, and Moran (ICALP 2018), we provide the first nearly optimal solution, a randomized linear decision tree of depth $\tilde{O}(d\log(|X|))$, improving on the previous best of $\tilde{O}(d^2\log(|X|))$ from Ezra and Sharir (Discrete and Computational Geometry, 2019). As a corollary, we also provide the first nearly optimal algorithm for actively learning halfspaces in the membership query model. En route to these results, we prove a novel characterization of Barthe's Theorem (Inventiones Mathematicae, 1998) of independent interest. In particular, we show that $X$ may be transformed into approximate isotropic position if and only if there exists no $k$-dimensional subspace with more than a $k/d$-fraction of $X$, and provide  a similar characterization for exact isotropic position.
\end{abstract}
\newpage
\section{Introduction}
Consider the following combinatorial question: an arbitrary finite set of points $X \subset \mathbb{R}^d$ is divided into two parts by an arbitrary hyperplane $h \in \mathbb{R}^d$. If for any $x \in \mathbb{R}^d$ you are allowed to ask its label with respect to $h$, i.e.
\[
\sign(\langle x, h \rangle) \in \{-,0,+\},
\]
what is the minimum number of questions needed to determine the label of every point in $X$? It is easy to observe that using such ternary questions (known as linear queries), deciding between the $\Omega(|X|^d)$ possible labelings requires at least $\Omega(d\log(|X|))$ queries. A matching upper bound, on the other hand, seems to be not so simple. Indeed the search for such an algorithm has been ongoing for over 35 years~\cite{der1983polynomial,meiser1993point,cardinal2015solving,kane2018generalized,ezra2019nearly}, and for good reason: slight variants and special cases of this simple question underlie many fundamental problems in computer science and mathematics. 

This problem has been studied in the literature from two distinct, but equivalent, viewpoints. In machine learning, the question is a variant of the well studied problem of learning linear classifiers. From this standpoint, $X$ is viewed as an arbitrary set of data points, and $h$ as a hyperplane chosen by some adversary. The hyperplane defines a halfspace, which in turn acts as a binary (linear) classifier for points in $X$. The goal of the learner is to label every point in $X$ with respect to this classifier in as few linear queries as possible. In the machine learning literature, this type of learning model is referred to as active learning with membership query synthesis~\cite{angluin1988queries}, a semi-supervised branch of Valiant's~\cite{valiant1984theory} PAC-learning model.

On the other hand, in computational geometry and computer graphics, the problem is often viewed from its dual standpoint called \textit{point location}. In this view, $X$ is thought of as a set of hyperplanes in $\mathbb{R}^d$ rather than a set of points, and partitions the space $\mathbb{R}^d$ into cells. Given a point $h\in \mathbb{R}^d$, point location asks the mathematically equivalent question: how quickly can we determine in which cell $h$ lies? Work in this area has often centered around its relation to important combinatorial problems like k-SUM or KNAPSACK~\cite{der1983polynomial,kane2018near,ezra2019nearly}, which reduce to special cases of point location. Since these views are equivalent, for the remainder of this paper we will adopt the standpoint common to machine learning, but will generally still refer to the problem as point location.

Regardless of which view is taken, up until this point the best algorithm for point location on $n$ points in $\mathbb{R}^d$ took $\tilde{O}(d^2\log(n))$\footnote{We use $\tilde{O}$ to hide poly-logarithmic factors in the dimension $d$.} queries, leaving a quadratic gap between the upper and lower bounds. In this work we close this gap up to a sub-polynomial factor, providing significant progress towards resolving the decades old question regarding the tightness of the $\Omega(d\log(n))$ information theoretic lower bound. In particular, we prove the existence of a nearly optimal randomized linear decision tree (LDT) for the point location problem in two models: bounded-error ($\delta$-reliable), and zero-error (reliable). For the former, we build a randomized LDT of depth $O(d\log^2(d)\log(n/\delta))$, and require an additional additive $d^{1+o(1)}$ factor for the more difficult reliable model. Our bounded-error linear decision tree can be combined with standard results on PAC-learning halfspaces~\cite{vapnik1974theory,valiant1984theory,Blumer,hanneke2016optimal} to provide the first nearly optimal learner for this class over arbitrary distributions.

\subsection{Computational and Learning Models}\label{sec:comp-models}
In this work we study point location from the viewpoint of randomized linear decision trees (LDTs), a model common to both computational geometry and machine learning. Randomization allows us to study two different solution regimes: zero-error, and bounded-error.
\subsubsection{Linear Decision Trees}
A linear decision tree $T$ is a tree whose internal nodes represent ternary linear queries, and whose leaves represent solutions to the problem at hand. Given a hyperplane $h\in\mathbb{R}^d$, a linear query is one of the form:
\[
Q_x(h) = \sign(\langle x, h \rangle ) \in \{-,0,+\},
\]
where $x \in \mathbb{R}^d$. The outgoing edges from an internal node indicate the branch to follow based on the sign of the linear query. The leaves, in our case, stand for labelings $T(h): X \to \{-,0,+\}$ of the instance space $X$. 
To execute the linear decision tree on input $h$, we begin at the root node. At each node $v$, we test the sign of a linear query with respect to some $x \in \mathbb{R}^d$ and proceed along the corresponding outgoing edge. We do this until we reach a leaf, at which point, we output its corresponding labeling. The complexity of a linear decision tree is measured by its depth, denoted $d(T)$, the length of the longest path from the root node to a leaf. Since it will be convenient in the following definitions, we will also define the input specific depth $d(T,h)$ to be the number of internal nodes reached on input $h$. 
\subsubsection{Randomized LDTs: Zero-Error}
A randomized linear decision tree $T=(\mathcal{D},\mathcal T)$ is a distribution $\mathcal{D}$ over a (possibly infinite) set $\mathcal T$ of deterministic LDTs. In computational geometry, many works~\cite{kane2017,kane2018generalized,ezra2019nearly} on the point location problem focus mostly on randomized LDTs which label all points in $X$ without making any errors. We will say that a randomized LDT satisfying this property \textit{reliably} computes the point location problem. Formally,
\begin{definition}[Reliable computation]
A randomized decision tree $T=(\mathcal{D},\mathcal T)$ reliably solves the point location problem on a set $X \subset \mathbb{R}^d$ if for any hyperplane $h \in \mathbb{R}^d$, the following holds:
\[
\forall T' \in \mathcal T, \forall x \in X: T'(h)(x) = \sign(\langle x, h \rangle)
\]
\end{definition}
In other words, every deterministic LDT that makes up our randomized LDT must correctly solve the point location problem. The main difference between this randomized model and deterministic LDTs then lies in how we measure the complexity of the object. For a deterministic LDT, the standard measure of complexity is its depth, the number of queries which must be made in order to learn the sign of every point on the worst-case input. In a randomized LDT on the other hand, for a given $h$ there may exist a few trees in $\mathcal T$ with large input specific depth--so long as these trees are not chosen too often. More precisely, our measure of complexity will be the worst-case expected input specific depth of a tree drawn from $\mathcal{D}$.
\begin{definition}
The expected depth of a randomized decision tree $T=(\mathcal{D},\mathcal T)$, denoted $\ED(T)$ is:
\[
\text{ED}(T) = \max\limits_{h \in \mathbb{R}^d}\underset{T' \sim \mathcal{D}}{\mathbb{E}}[d(T',h)].
\]
\end{definition}
Equivalently, we can think of this parameter as the expected number of queries to learn the worst case input.
\subsubsection{Randomized LDTs: Bounded-Error}
While reliably solving point location is interesting from a computational geometry perspective and useful for building zero-error algorithms for k-SUM and related problems~\cite{kane2018near,ezra2019nearly}, for many applications such a strict requirement is not necessary. In the PAC-learning model, for instance, we are only interested in approximately learning the hidden hyperplane with respect to some adversarially chosen distribution over $\mathbb{R}^d$. Here we do not need every LDT in our distribution to return the right answer - it is sufficient if most of them do. We will say an LDT $\delta$-reliably computes (or solves) the point location problem on $X$ if with probability at least $1-\delta$ it makes no errors. Formally, 
\begin{definition}[$\delta$-reliable computation]
A randomized decision tree $T=(\mathcal{D},\mathcal T)$ $\delta$-reliably solves the point location problem on a set $X \subset \mathbb{R}^d$ if for any hyperplane $h \in \mathbb{R}^d$, the following holds:
\[
\Pr\limits_{T' \sim \mathcal{D}}\left [\exists x \in X: T'(h)(x) \neq \sign(\langle x, h \rangle) \right ] < \delta
\]
\end{definition}
Since we are now allowing a small probability of error, as is commonly the case in randomized algorithms we will change our complexity measure to be worst-case rather than average. In particular, our complexity measure for $\delta$-reliably computing the point location problem will be the maximum depth of any $T' \in \mathcal T$.

\begin{definition}
The maximum depth of a randomized decision tree $T=(\mathcal{D},\mathcal T)$, denoted $\MD(T)$ is:
\[
\text{MD}(T) = \underset{T' \in \mathcal T}{\max}\left ( d(T') \right ).
\]
\end{definition}
Using this worst-case complexity measure will help us transfer our results to an active variant of PAC-learning theory which uses a similar worst-case measure called query complexity.
\subsubsection{PAC and Active Learning}
Probably Approximately Correct (PAC) learning is a learning framework due to Valiant~\cite{valiant1984theory} and Vapnik and Chervonenkis~\cite{vapnik1974theory}. In the PAC-learning model, given a set (instance space) $X$, and a family of classifiers $\mathcal H$ (where $h \in \mathcal H$ maps elements in $X$ to a binary label in $\{0,1\}$), first an adversary chooses a distribution $D$ over $X$ and a classifier $h \in \mathcal H$. Then, the learner receives labeled samples from the distribution with the goal of approximately learning $h$ (up to $\varepsilon$ accuracy) with high  probability $(1-\delta)$ in the fewest number of samples. The complexity of learning a concept class $(X,\mathcal H)$ is given by its \textit{sample complexity}, the minimum number of samples $n(\varepsilon,\delta)$ such that there exists a learner $A$ which maps samples from $D$ to labelings which achieve this goal:
\begin{align}\label{eq:PAC-def}
\forall D,h: \Pr_{S \sim D^{n(\varepsilon,\delta)}}\left [ \Pr_{x \sim D}\left[A(S)(x) \neq h(x)\right] \leq \varepsilon \right ] \geq 1-\delta.
\end{align}
Active learning, on the other hand, more closely mirrors today's challenges with big data where samples are cheap, but labels are expensive. Using the same adversarial setup, active learning\footnote{The model of active learning we use is referred to as the Pool + Membership Query Synthesis (MQS) model}~\cite{angluin1988queries} provides the learner with unlabeled samples from $D$ along with an oracle which the learner can query to learn the label $h(x)$ for any $x\in X$. The goal of active learning is to adaptively query the most informative examples,  and thus exponentially decrease the required number of labeled points over Valiant's initial ``passive'' model. In this case, the complexity measure of learning $(X,\mathcal H)$ is given by its \textit{query complexity} $q(\varepsilon,\delta)$, the minimum number of queries (oracle calls) required to achieve the same guarantee from \Cref{eq:PAC-def}.
\subsubsection{Non-homogeneous Hyperplanes and Binary Queries}
In the previous sections, we assumed that both the labeling of $X$ and the queries making up our LDT are ternary and given by a homogeneous (through the origin) hyperplane. While these assumptions are standard in the point location literature, in learning theory it is more common to consider binary queries, and, when possible, the more general class of non-homogeneous hyperplanes. In this generalized scenario, the labels of $X$ and linear queries of our decision tree instead take the form:
\[
Q_x(h,b)=\text{sign}(\langle x, h \rangle + b) \in \{-,+\}.
\]
While we assume throughout most of the body (Sections \ref{sec:proof-overview}-\ref{sec:verification}) that we are in the ternary, homogeneous case, we show in \Cref{sec:non-homogeneous} how to extend our results to the binary, non-homogeneous case.
\subsection{Results}\label{sec:results}
We prove the existence of nearly optimal LDTs (learners) for all three of these regimes. For the bounded-error regime, our result is only a $\text{poly}\log(d)$ factor away from being optimal.
\begin{theorem}\label{intro:bounded-LDT}
Let $X \subset \mathbb{R}^d$ be an arbitrary finite set. Then there exists a randomized LDT $T$ that $\delta$-reliably computes the point location problem on $X$ with maximum depth:
\[
\MD(T) \leq O\left(d\log^2(d)\log\left(\frac{n}{\delta}\right)\right)
\]
\end{theorem}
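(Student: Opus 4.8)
The plan is to build the tree in two stages: a preprocessing/conditioning step that transforms $X$ into a well-spread configuration, followed by a recursive halving step that kills a constant fraction of the "undecided" pairs per $O(d)$ queries. First I would use the characterization of Barthe's theorem advertised in the abstract: either $X$ (after rescaling) can be put into approximate isotropic position by a linear map $A$, or there is a low-dimensional subspace $V$ of dimension $k$ containing more than a $k/d$ fraction of $X$. In the first case, isotropic position gives us the key geometric leverage: a random query direction $x$ chosen from the appropriate distribution will, with constant probability, have $|\langle x, h\rangle|$ comparable to $\|x\|\|h\|/\sqrt d$ simultaneously for many points of $X$, so that a single sign query resolves the labels of an expected constant fraction of the currently-ambiguous points, or at least sharply reduces the "margin uncertainty" of a constant fraction of them. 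In the second case, we recurse: project onto $V$ and onto $V^\perp$, solve point location on each piece (the $V$ piece has $|X \cap V|$ points in dimension $k$, and crucially its depth bound scales with $k$, not $d$), and combine. A clean potential-function / recursion analysis then shows the total depth is $\tilde O(d\log n)$; setting the failure probability to $\delta$ costs the extra $\log(1/\delta)$ by standard amplification, and the $\log^2 d$ absorbs the depth of the Barthe recursion together with the logarithmic overhead in each halving round.

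Concretely, the recursive halving works as follows. Maintain a set $P$ of pairs (or more precisely, a representation of which sign-patterns are still consistent with the queries made so far); initially the relevant complexity is $O(d\log n)$ bits of information, since there are at most $n^{O(d)}$ cells. Each round, after conditioning to (approximate) isotropic position, draw $O(d\log d)$ random queries from the conditioned distribution; argue via the isotropic second-moment bound and an anti-concentration / small-ball estimate for $\langle x, h\rangle$ that with probability $1-1/\poly(d)$ these queries jointly eliminate at least half of the surviving cells (equivalently, cut the remaining information by a constant fraction). After $O(\log n)$ such rounds only the true cell survives, giving total depth $O(d\log d\log n)$ before accounting for the Barthe recursion; folding in that recursion (depth $O(\log d)$, since dimension drops by at least $1$ each level but the fraction guarantee $k/d$ forces geometric decrease of a suitable measure) yields the stated $O(d\log^2 d\log(n/\delta))$. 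The $\delta$ dependence enters only through a final union bound over the $O(\log n)$ rounds and the desired overall reliability, each round being boosted to failure probability $\ll \delta/\log n$ at the cost of a constant factor in the per-round query count.

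The main obstacle, as I see it, is the second case of the Barthe dichotomy and its interaction with the halving argument: when a subspace $V$ carries a large fraction of $X$, the naive recursion "solve $V$, solve $V^\perp$, combine" is not obviously correct, because the label of a point $x \in V$ under $h$ depends on the full $h$, not just its projection to $V$ — the contribution $\langle x, h_{V^\perp}\rangle$ is zero for $x \in V$, so that part is fine, but points near $V$ but not in it require care, and the recombination must certify signs consistently across the two sub-instances without blowing up the query count. Handling this cleanly — presumably by first querying enough directions to pin down $h$'s behavior on $V$, then treating the residual instance in $V^\perp$ with the remaining points, and arguing the depth adds rather than multiplies — is where the real work lies. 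A secondary subtlety is proving the per-round halving bound in isotropic position: one needs a robust anti-concentration statement saying that a random isotropic direction is not nearly orthogonal to $h$ for too large a fraction of the configuration simultaneously, which is exactly the quantitative content that the new Barthe characterization is designed to supply, so the two difficulties are really two faces of the same coin.
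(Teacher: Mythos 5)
There is a genuine gap at the heart of your sketch: the claim that, once $X$ is in approximate isotropic position, $O(d\log d)$ random queries ``jointly eliminate at least half of the surviving cells,'' or that a single random direction resolves a constant fraction of the ambiguous points. Isotropic position only guarantees the second-moment bound $\sum_x \mu(x)\,m(x,h)^2 \approx 1/d$, which yields merely that an $\Omega(1/d)$ fraction of points have normalized margin $\Omega(1/\sqrt{d})$ --- not a constant fraction. If each $\tilde O(d)$-query round only certifies a $1/d$ fraction of the points, you land back at $\tilde O(d^2\log n)$, which is exactly the Ezra--Sharir bound the theorem improves on. The paper's actual mechanism couples coverage to cost: it finds a margin gap (parameters $k,t,s$ with a $k/d$ fraction above margin $t$ and all but a $5k/d$ fraction below $t/s$), builds a $(d-O(k))$-dimensional ``small margin'' subspace from the covariance matrix of sampled small-margin points, and then infers every large-margin point by binary-searching the relative margins of only the $O(k)$ basis vectors of the orthogonal complement against a reference point --- so a $k/d$ fraction is learned in $O(k\log d\log(d/p))$ queries for \emph{whatever} $k$ the instance presents. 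Your proposal contains no substitute for this coupling, and the ``robust anti-concentration statement'' you defer to is not supplied by the Barthe characterization, which is a purely structural statement about when an isotropic transform exists.

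Two secondary points. First, your worry about recombining the $V$ and $V^\perp$ sub-instances in the dense-subspace case is solving a problem the paper does not have: since point location is invariant under invertible linear maps and the sign of $x\in V$ depends only on the projection of $h$ onto $V$, the algorithm simply restricts to $X\cap V$ (a $\ge k/d$ fraction), learns it, and leaves everything else to the outer boosting over the un-inferred points; no cross-subspace recombination is needed. Second, your $\delta$-dependence via a per-round union bound (failure $\ll \delta/\log n$ per round) feeds $\log n$ into the $\log(d/p)$ factor of each weak learner and costs an extra $\log\log n$ beyond the stated bound; the paper avoids this with a reweighting-plus-majority-vote boosting in which each point is labeled in a constant fraction of the $O(\log(n/\delta))$ rounds, so that only a constant per-round success probability is required.
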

Combining this with standard PAC-learning results for classes with finite VC-dimension~\cite{Blumer}, we prove the existence of the first nearly optimal active learner for halfspaces (either homogeneous or non-homogeneous), denoted $\mathcal H_d$, over arbitrary distributions.
\begin{corollary}\label{intro:active}
The query complexity of actively learning $(\mathbb{R}^d,\mathcal H_d)$ over any distribution, with membership queries is:
\[
q(\varepsilon,\delta) \leq O\left(d\log^2(d)\log\left(\frac{d}{\varepsilon\delta}\right)\right )
\]
\end{corollary}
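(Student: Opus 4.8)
The plan is to derive \Cref{intro:active} from \Cref{intro:bounded-LDT} via a standard reduction from active learning of a finite-VC class to point location on a suitably chosen finite sample. First I would recall the classical PAC result of Blumer et al.\ \cite{Blumer}: for a class $\mathcal{H}$ of VC-dimension $\Theta(d)$ (here $\mathcal{H}_d$, whether homogeneous or non-homogeneous, has VC-dimension $d$ or $d+1$), there is a (passive) sample bound $m = O\!\left(\frac{d\log(1/\varepsilon)+\log(1/\delta)}{\varepsilon}\right)$ such that, with probability $\ge 1-\delta/2$ over $S\sim D^m$, \emph{any} hypothesis in $\mathcal{H}_d$ consistent with the true labels on $S$ has error at most $\varepsilon$ with respect to $D$. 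So it suffices for the active learner to correctly recover the label of every point in the unlabeled pool $S$.

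Next I would invoke \Cref{intro:bounded-LDT} with the finite set $X = S$, so $|X| = m = n$, and error parameter $\delta/2$: this yields a randomized LDT $T$ of maximum depth $O\!\left(d\log^2(d)\log(m/\delta)\right)$ that, with probability $\ge 1 - \delta/2$, correctly labels every point of $S$ with respect to the hidden hyperplane $h$. Here I would note that each internal node of the LDT is a query of the form $\sign(\langle x, h\rangle)$ (or $\sign(\langle x, h\rangle + b)$ in the non-homogeneous case, using the reduction of \Cref{sec:non-homogeneous}), which is exactly a membership query to the labeling oracle — the learner simply synthesizes the query point $x$ prescribed by the tree. Running $T$ therefore uses at most $O\!\left(d\log^2(d)\log(m/\delta)\right)$ membership queries and, on the good event, recovers the labels of all of $S$; the learner then outputs any consistent hypothesis in $\mathcal{H}_d$ (one exists, namely $h$ itself). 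Combining the two failure events by a union bound gives overall success probability $\ge 1-\delta$ and error $\le \varepsilon$.

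Finally I would simplify the query bound: substituting $m = O\!\left(\frac{d\log(1/\varepsilon)+\log(1/\delta)}{\varepsilon}\right)$ gives $\log(m/\delta) = O\!\left(\log\frac{d}{\varepsilon\delta}\right)$, so the total query complexity is $O\!\left(d\log^2(d)\log\frac{d}{\varepsilon\delta}\right)$, matching the claimed bound. I expect the only genuine subtlety to be bookkeeping the two sources of randomness and two failure probabilities — the draw of the pool $S$ and the internal coins of the randomized LDT — and making sure the generalization guarantee is stated in the "any consistent hypothesis" form so that recovering the labels on $S$ (rather than directly estimating error) is enough; the non-homogeneous extension is handled by citing \Cref{sec:non-homogeneous}, and the homogeneous-vs-non-homogeneous VC-dimension difference only affects constants absorbed into the $O(\cdot)$.
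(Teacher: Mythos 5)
Your proposal is correct and follows essentially the same route as the paper: draw a passive sample of size $n(\varepsilon,\delta)=O\bigl((d\log(1/\varepsilon)+\log(1/\delta))/\varepsilon\bigr)$, run the $\delta/2$-reliable LDT of \Cref{intro:bounded-LDT} on that sample as a point location instance, and union-bound the two failure events. The only detail the paper spells out that you gloss over is that the LDT's ternary queries must be simulated by binary membership queries (two queries $Q_x(h)$ and $Q_{-x}(h)$ in the homogeneous case), which affects only constants.
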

By adding a verification step to our LDT that $\delta$-reliably computes point location, we prove the existence of an LDT that reliably computes point location at the cost of an additive $d^{1+o(1)}$ factor.
\begin{theorem}\label{intro:zero-LDT}
Let $X \subset \mathbb{R}^d$ be an arbitrary finite set. Then there exists a randomized LDT $T$ that  reliably computes the point location problem on $X$ with expected depth:
\[
\ED(T) \leq O\left(d\log^2(d)\log(n)\right) + d \cdot 2^{O\left(\sqrt{\log(d)\log\log(d)}\right)}.
\]
\end{theorem}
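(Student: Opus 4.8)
The plan is to bootstrap the zero-error tree from the bounded-error tree of \Cref{intro:bounded-LDT} by a \emph{verify-and-restart} scheme. Fix a constant failure probability, say $\delta = 1/2$, and let $T_0$ be the $\delta$-reliable LDT of \Cref{intro:bounded-LDT}, so $\MD(T_0) \le O(d\log^2(d)\log(n))$. On the hidden hyperplane $h$ we run $T_0$, obtaining a candidate labeling $L : X \to \{-,0,+\}$ that equals the true labeling $x \mapsto \sign(\langle x, h\rangle)$ with probability at least $1-\delta$. We then run a \emph{deterministic} verification subroutine which, given $L$, either (i) certifies that $L$ is the true labeling, or (ii) reports failure; on failure we discard all progress and recurse, re-running the whole procedure with fresh randomness. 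What we need from the subroutine is: it never certifies an incorrect labeling, it always certifies a correct one, and it makes at most $d \cdot 2^{O(\sqrt{\log(d)\log\log(d)})}$ queries.

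Granting these, correctness and the depth bound follow quickly. Reliability: the true $h$ is consistent with every answer seen on any root-to-leaf path, so if the path's queries force the labeling to be $L$ then $h$'s labeling is $L$; since the only leaves that output a labeling are those where verification certified, every such leaf is correct for all $h$ reaching it. Expected depth: each round (one run of $T_0$ followed by verification) makes at most $D := O(d\log^2(d)\log(n)) + d \cdot 2^{O(\sqrt{\log(d)\log\log(d)})}$ queries, and a round triggers a recursion only when $L$ differs from the true labeling, which has probability less than $\delta$; since the rounds use independent randomness, a geometric-series bound gives $\ED(T) \le D/(1-\delta)$, which is the claimed bound. (One may instead take $\delta = 1/\poly(n)$ — which costs nothing in $T_0$ since $\log(n/\delta) = O(\log n)$ — and, rather than recursing, fall back once to any fixed deterministic reliable LDT of depth $\poly(n,d)$; the failure branch then contributes $o(1)$.)

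It remains to build the verification subroutine. Let $X_0 = L^{-1}(0)$. First query a linear basis of $\mathrm{span}(X_0)$, at most $d$ queries: if any answer is nonzero, report failure; otherwise $h \perp \mathrm{span}(X_0)$, which \emph{simultaneously} certifies $\langle x, h\rangle = 0$ for every $x \in X_0$ and confines $h$ to $W := \mathrm{span}(X_0)^\perp$, of dimension $d' \le d$. Writing $\bar{x}$ for the projection of $x$ onto $W$, the remaining task is to certify the strict inequalities $\sign(\langle \bar{x}, h\rangle) = L(x)$ for $x \notin X_0$ — equivalently, to drive the version space $V \subseteq W$ into the relatively open cell $C_L = \{ h' \in W : \langle L(x)\bar{x}, h'\rangle > 0 \text{ for all } x \notin X_0\}$. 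By conic duality (up to relative interiors), $V \subseteq C_L$ amounts to the cone generated by our signed queries containing $\cone\{L(x)\bar{x} : x \notin X_0\} = C_L^{\ast}$. The obstruction is that $C_L^{\ast}$ may have $\Omega(n)$ extreme rays, so we cannot just query its generators, and the naive fix — enclosing $C_L^{\ast}$ in a simplicial cone with $O(d')$ rays — fails because the enclosing cone must also contain $h$ in the interior of its polar, and $h$ is unknown. I would resolve this by an adaptive \emph{dimension-reduction recursion}: repeatedly (a) use the approximate-isotropic-position characterization of Barthe's theorem to place a reweighted version of the current point set in near-isotropic position inside the current subspace — or, when no such placement exists, pass instead to the heavy low-dimensional subspace the characterization exhibits and recurse into it — so that a $\mathrm{polylog}(d')$-sized batch of queries provably pins $h$ to a region of substantially smaller effective dimension, and (b) recurse on the residual problem, invoking the \emph{exact} isotropic-position characterization on the slices that degenerate (where points collapse onto the cell boundary), which is what keeps the procedure zero-error. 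Balancing the recursion to about $O(\sqrt{\log(d)/\log\log(d)})$ levels, each contributing a $\mathrm{polylog}(d)$ overhead on top of the dimension, yields the total of $d \cdot 2^{O(\sqrt{\log(d)\log\log(d)})}$ queries; the construction is carried out in \Cref{sec:verification}.

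The main obstacle is exactly this verification subroutine: a deterministic recursion that (i) is guaranteed to drive the version space inside $C_L$ whenever $L$ is correct, (ii) does so within the $d^{1+o(1)}$ budget uniformly over all finite $X \subset \R^d$ and all $h$, including degenerate configurations (many points on $h$, near-coplanar sets, arbitrarily thin cells), and (iii) is genuinely zero-error — which is what forces the use of exact rather than approximate isotropic position on the degenerate subspaces. Getting the accumulated recursion overhead down to $2^{O(\sqrt{\log(d)\log\log(d)})}$, rather than a larger subpolynomial factor or an extra $\log n$, is the delicate part; by comparison the bootstrap, the reliability argument, and the expected-depth calculation are routine.
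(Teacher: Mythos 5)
Your outer scaffold (run the bounded-error tree, deterministically check the answer, restart on failure, geometric series for the expected depth) is sound and matches the paper's high-level plan, but the entire content of the theorem lives in the verification subroutine, and there your proposal has a genuine gap. You set up verification as: given an arbitrary candidate labeling $L$ of an arbitrary $X$, drive the version space into the cell $C_L$ using $d^{1+o(1)}$ deterministic queries. You correctly observe that $C_L$ may have $\Omega(n)$ facets and that enclosing its dual cone naively fails, but the proposed fix --- an adaptive recursion where ``a $\polylog(d')$-sized batch of queries provably pins $h$ to a region of substantially smaller effective dimension,'' powered by the Barthe characterizations --- is asserted, not argued. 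No mechanism is given for why any fixed small batch of queries certifies anything about the $n$ facet inequalities simultaneously, nor why the procedure is complete on correct $L$, nor why its cost is independent of $n$. In effect you have restated the theorem as the existence of this subroutine and reverse-engineered the shape of the recursion from the target bound $2^{O(\sqrt{\log d\log\log d})}$.

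The paper's verification works quite differently, and the difference is essential. It does \emph{not} verify the output labeling against the cell $C_L$; it verifies the $O(d\log d)$ \emph{internal assumptions} of each weak learner, namely the inequalities $|\langle v,h\rangle|\le \tfrac{1}{c(d)}|\langle x_{\mathrm{ref}},h\rangle|$ for the sampled small-margin points $v$ used to build the low-margin subspace --- these are the only unverified steps, and certifying them certifies the labeling. The crucial structural ingredient you are missing is that \textsc{IsoLearn}'s inference also outputs a factor-$2$ approximation of each inferred point's margin \emph{relative to its reference point} (\Cref{eq:approx-relative}); since every small-margin point $v\in S_j$ is inferred by some later learner $i>j$, its inequality can be rewritten as a comparison between the two reference points $x_i$ and $x_j$ alone. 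This collapses all the inequalities into a \emph{matrix verification} problem on the $O(d\log n)$ reference points (\Cref{lemma:pl-to-veri}). Matrix verification is then reduced back to point location in \emph{fewer} dimensions by finding, for each row, the minimizing column via batched pairwise comparisons $\sign(\langle C_{ij_1}x_{j_1}-C_{ij_2}x_{j_2},h\rangle)$, each batch living in a $b$-dimensional span (\Cref{lemma:ver-to-pl}); balancing this two-way recurrence is what produces the $d\cdot 2^{O(\sqrt{\log d\log\log d})}$ term. Finally, to keep the bound free of extra $\log n$ factors the paper verifies batches of $d^2$ weak learners at a time, each batch being a $\poly(d)$-size point location instance. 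None of these steps appears in your proposal, and without the relative-margin bookkeeping the reduction to a small verification instance is not available; so as written the proof does not go through.
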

For large enough $n$, this bound is optimal up to a polylogarithmic factor in $d$. 
\subsection{Related Work}\label{sec:related-work}
\subsubsection{Point Location}\label{sec:point location}
The early roots of the point location problem as we study it\footnote{In its most general form, the point location problem asks about any type of partition, and reaches back to the mid 1970's or earlier~\cite{dobkin1976multidimensional}.} stem from the study of other classic combinatorial problems. As such, Meyer auf Der Heide's~\cite{der1983polynomial} early work giving an $O(n^4\log(n))$ linear search algorithm for the $n$-dimensional Knapsack problem is often considered the seminal work for this area. Meiser~\cite{meiser1993point} later stated the problem in the form we consider, and provided an 
$O(d^5\log(n))$ depth linear decision tree for a general set of $n$ hyperplanes in $\R^d$. These results were later improved by Cardinal, Iacono, and Ooms~\cite{cardinal2015solving}, Kane, Lovett, and Moran~\cite{kane2018generalized,kane2018near} and finally by Ezra and Sharir~\cite{ezra2019nearly} to an expected depth of $\tilde{O}(d^2\log(n))$. 

Our work expands on the margin-based technique introduced by Kane, Lovett, and Moran in~\cite{kane2018generalized}, which by itself gives an $\tilde{O}(d^3\log(n))$ expected depth LDT. Kane, Lovett, and Moran use the fact that point location is invariant to invertible linear transformations of the data to transform $X$ into isotropic position. It is then possible to take advantage of the structure introduced by this transformation to apply a margin-based technique from their earlier work with Zhang~\cite{kane2017}. We employ a similar overall strategy, repeatedly transforming the remaining unlabeled points into isotropic position, but by novel structural analysis and a new inference technique based upon dimensionality reduction, we improve to a nearly tight $\tilde{O}(d\log(n))+d^{1+o(1)}$ expected depth randomized LDT.
\subsubsection{Learning Halfspaces}
Learning halfspaces is one of the oldest and most studied problems in learning theory. We cover here only a small fraction of works, those which are either seminal or closely related to our results. The first such work is the classic of Blumer, Ehrenfeucht, Haussler, and Warmuth~\cite{blumer1989learnability}, who showed using VC theory~\cite{vapnik1974theory} that $d$-dimensional halfspaces can be PAC-learned in only $O((d\log(1/\varepsilon)+\log(1/\delta))/\varepsilon)$ labelled examples, and that this bound is nearly tight (much later, Hanneke~\cite{hanneke2016optimal} tightened the result). In the years since, many works showed that active learning could in some cases exponentially decrease the number of labelled examples required to learn. Many of these works focused on learning halfspaces in the Pool-based model (where in contrast to MQS, only sampled points may be queried) with restricted distributions. A series of papers~\cite{balcan2007margin,dasgupta2006coarse,freund1997selective} showed increasingly improved bounds for learning homogeneous halfspaces over (nearly) uniform distributions on the unit ball. Later this work was extended to more general classes of distributions~\cite{balcan2013active,balcan2017sconcave}, finally giving a nearly optimal $\tilde O(d \log(1/\varepsilon))$ algorithm. These results were then extended to the non-homogeneous case in a more powerful query model that allowed the learner to compare points~\cite{kane2017,hopkins2019aid}.

Work that focused on learning over adversarial distributions, on the other hand, tended to use the stronger MQS learning model. The most efficient theoretical algorithms in this regime are (implicit) results from the point location literature, most recently Ezra and Sharir's~\cite{ezra2019nearly} result translates to a roughly $\tilde{O}(d^2\log(1/\varepsilon))$ query algorithm under adversarial distributions. On the practical end, a number of works~\cite{chen2017near, alabdulmohsin2015efficient} presented MQS algorithms that seemed experimentally to achieve the $d\log(1/\varepsilon)$ lower bound, but none could do so provably. Our work is the first to provably match the practical performance of these heuristic methods.
\paragraph{Paper organization.}
In \Cref{sec:proof-overview} we provide a high-level overview of our proof. 
In \Cref{sec:oracle} we describe the margin oracle, which is a new concept central in this work.
In \Cref{sec:weakLearner}, we cover our core weak learner. In \Cref{sec:boosting}, we prove the existence of an LDT that $\delta$-reliably computes point location, and show how it provides a nearly optimal active learner of homogeneous halfspaces. In \Cref{sec:verification}, we show how to efficiently verify our learners in order to build an LDT that reliably computes point location. In \Cref{sec:non-homogeneous} we show how to extend our results to non-homogeneous halfspaces and binary queries.

\section{Proof Overview}\label{sec:proof-overview}
We begin with a proof sketch that highlights our overall strategy. Our overview itself will be split into four main sections. In \Cref{overview:approach}, we provide a high-level sketch of our approach as a whole. In \Cref{sec:overview-ww}, we examine how to build an efficient bounded-error weak learner that is at the core of both \Cref{intro:bounded-LDT} and \Cref{intro:zero-LDT}. In \Cref{sec:overview-boosting}, we show how to boost this weak learner to obtain \Cref{intro:bounded-LDT} and \Cref{intro:active}. In \Cref{sec:overview-verification}, we show how to boost and verify the weak learner of \Cref{sec:overview-ww} to obtain the zero-error LDT of \Cref{intro:zero-LDT}. 

In order to provide a simple explanation of our strategy, we leave out many technical details which we cover in the following sections. 
\subsection{Overall Approach} \label{overview:approach}
To start, we briefly provide some intuition for our general approach. We will assume for simplicity throughout Sections \ref{sec:proof-overview}-\ref{sec:verification} that we are in the homogeneous case, and show in \Cref{sec:non-homogeneous} how our arguments generalize to the non-homogeneous case. Recall then the setup: we are given a finite set $X \subset \R^d$, there is an unknown homogeneous hyperplane (given by a normal vector) $h \in \R^d$, and our goal is to label $X$ using linear queries of the form $\sign(\langle x, h \rangle)$, where we can use any $x \in \R^d$.

Assume we are given some point of reference $x_{\text{ref}} \in \R^d$ and an orthonormal basis $\{v_1,\ldots,v_d\}$ for $\R^d$, and that it is possible to learn for each $i$ up to high accuracy $\frac{\langle v_i, h \rangle}{\langle x_{\text{ref}}, h \rangle}$. Since each point $x \in X$ can be written in terms of this orthonormal basis, this would allow us to estimate $\frac{\langle x, h \rangle}{\langle x_{\text{ref}}, h \rangle}$ up to high accuracy, and thus learn the label of $x$. This strategy alone will work to label all of $X$ efficiently, unless it contains many points with small margin (inner product with $h$). 

Kane, Lovett, and Moran~\cite{kane2018generalized}, using a different inference strategy, ran into this same fundamental issue. They circumvent the problem (for sets in general position) via two key facts: point location is invariant to invertible linear transformations, and there exists such a transformation on $X$ that ensures many points have large margin. Unfortunately, if the remaining points have very small margin, then this method loses polynomial factors in the dimension $d$. Concretely, the result of~\cite{kane2018generalized} required $\tilde{O}(d^3 \log n)$ queries.

We solve this problem, and achieve a near-linear dependence on the dimension $d$, by using these small margin points to apply dimensionality reduction. In essence, we argue that for every hyperplane $h$ and a ``nice'' set of points $X \subset \R^d$, there exists a parameter $1 \le k \le d$ for which the following holds. It is possible to split $\R^d$ into two orthogonal subspaces: one of dimension $d-O(k)$ with \q{low margin}, and another of dimension $O(k)$ with \q{high margin}. In addition, a $k/d$ fraction of the points in $X$ have large projection to the high margin subspace. This allows us to significantly reduce the dimension of the problem (from $d$ to $k$), and allows our learner to label large margin points in $\tilde{O}(k)$ queries rather than $\tilde{O}(d)$ queries. In essence, we couple the fraction of points that we label to the number of queries we make.

Formally, the \q{nice} structure we need is an approximate isotropic position. Building on previous works on \q{vector scaling} problems~\cite{barthe1998reverse,forster2002linear,dvir2014breaking}, we give an exact characterization of when a point set $X$ can be transformed to such a configuration (See \Cref{prop:dense-subspace} for details). This allows us to extend our analysis to an arbitrary finite set of points. In summary, this procedure allows us to infer a $k/d$ fraction of an arbitrary point set $X$ using only $\tilde{O}(k)$ queries. \Cref{intro:bounded-LDT} and \Cref{intro:zero-LDT} both follow from applying different boosting procedures to this process. The former uses a weighting scheme between iterations to ensure every point is learned with high probability, while the latter runs a slightly more costly verification process on each learner to ensure that no mistakes have been made.  
\subsection{Weak Learner}\label{sec:overview-ww}
Both of our main results, then, are based upon boosting a highly efficient weak learner: a randomized algorithm $A$ that makes linear queries and returns (abusing notation) a partial labeling $A: X \to \{-,0,+,\bot\}$ in which $\bot$ is interpreted as ``don't know.'' Before introducing our core weak learner, we need to introduce some further notation and terminology.
\begin{definition}
Let $X \subset \mathbb{R}^d$ be a finite set of unit vectors, and $\mu$ a distribution over $X$. We denote such a pair by $(X,\mu)$. In cases where the ambient dimension is not clear from context, we denote it by $(X \subset \R^d, \mu)$.
\end{definition}
Note that assuming $X$ consists of unit vectors does not lose any generality for point location or active learning homogeneous hyperplanes, since normalizing points does not change their label. For notational simplicity, we often refer to a fraction of $X$ with respect to $\mu$ simply as a fraction of $(X,\mu)$. For instance, it is useful when dealing with weak learners to be able to talk about the fraction of $(X,\mu)$ they label. We call this value their \textit{coverage}.
\begin{definition}[Coverage]
Given a pair $(X, \mu)$ and $A$ a weak learner, $A$'s coverage with respect to $(X,\mu)$, $C_{\mu}(A)$, is a random variable (over the internal randomness of $A$) denoting the measure of $X$ learned:
\begin{align*}
    C_{\mu}(A) = \Pr_{x \sim \mu}[A(x) \neq \bot].
\end{align*}
\end{definition}
It will also be useful to have a way to talk about the correctness of a weak learner. For this, we adopt notation introduced in~\cite{hopkins2020noise}:
\begin{definition}[Reliability]
Given a pair $(X,\mu)$, we say that a learning algorithm $A$ is $p$-reliable if with probability $1-p$ over the internal randomness of $A$, the labeling output by $A$ makes no errors. If $A$ never makes an error, we call in reliable.
\end{definition}
These definitions are all we need to present our core weak learner, \textsc{PartialLearn}:
\begin{theorem}[Informal \Cref{thm:arbitrary-ww-learner}: \textsc{PartialLearn}]\label{overview:arbitrary-ww-learner}
Given a pair $(X \subset \R^d,\mu)$ and $p>0$, there exists a $p$-reliable weak learner \textsc{PartialLearn} with the following guarantees. For any (unknown) hyperplane $h \in \R^d$, with probability $1-p$ there exists $\frac{1}{10} \le k \le d$ such that:
\begin{enumerate}
    \item \textsc{PartialLearn} has coverage at least $k/d$.
    \item \textsc{PartialLearn} makes at most $O(k\log(d)\log(d/p))$ queries.
\end{enumerate}
\end{theorem}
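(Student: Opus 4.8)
## Proof Proposal for Theorem \ref{overview:arbitrary-ww-learner} (\textsc{PartialLearn})

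The plan is to build \textsc{PartialLearn} around the structural decomposition promised in the proof overview: for the "nice" (approximately isotropic) point set $(X, \mu)$, one finds a parameter $1 \le k \le d$ together with an orthogonal splitting $\mathbb{R}^d = V_{\text{high}} \oplus V_{\text{low}}$, where $\dim(V_{\text{high}}) = O(k)$ is the "high margin" subspace, $\dim(V_{\text{low}}) = d - O(k)$ is the "low margin" subspace, and a $k/d$-fraction of $(X,\mu)$ has large projection onto $V_{\text{high}}$. The first step is to reduce to this isotropic case via Proposition~\ref{prop:dense-subspace}: either $X$ already admits a linear transformation into approximate isotropic position (in which case we apply the transformation, recalling that point location is invariant under invertible linear maps), or there is a $j$-dimensional subspace containing more than a $j/d$-fraction of $X$, in which case we recurse into that subspace with the conditional distribution $\mu$ — this recursion terminates and identifies the working dimension. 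Once in (approximate) isotropic position, I would extract $V_{\text{high}}$ as the span of the top eigenvectors of the second-moment matrix of $X$ restricted appropriately relative to the (unknown) direction $h$; the margin oracle from \Cref{sec:oracle} is the tool that lets us access the relevant inner products $\langle v_i, h\rangle$ up to the needed accuracy without knowing $h$ in advance.

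The second step is the inference engine. Having identified $V_{\text{high}}$ of dimension $O(k)$, pick a reference point $x_{\text{ref}}$ with large margin (its existence in isotropic position is guaranteed by the structural analysis), and learn each ratio $\langle v_i, h\rangle / \langle x_{\text{ref}}, h\rangle$ for an orthonormal basis $\{v_i\}$ of $V_{\text{high}}$ up to high accuracy. Each such ratio is learned by binary search / the margin oracle in $O(\log(d)\log(d/p))$ queries, and there are $O(k)$ of them, giving the claimed $O(k \log(d)\log(d/p))$ total. For any $x \in X$ whose projection onto $V_{\text{high}}$ is large (a $k/d$-fraction of $(X,\mu)$), this gives a good enough estimate of $\langle x, h\rangle / \langle x_{\text{ref}}, h\rangle$ to determine $\text{sign}(\langle x, h\rangle)$ — because the component of $x$ in $V_{\text{low}}$ contributes only a small (low-margin) amount to $\langle x, h\rangle$, so the dominant term is the one we have estimated well. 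Points with small projection onto $V_{\text{high}}$ get labeled $\bot$. This establishes coverage $\ge k/d$ (item 1) and the query bound (item 2). The $p$-reliability follows by taking union bounds over the $O(k) = O(d)$ binary searches, each run to confidence $1 - p/\text{poly}(d)$, so that with probability $1-p$ every ratio estimate is within tolerance and hence no $\bot$-labeled point is mislabeled.

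The main obstacle I expect is making the "high margin / low margin" split quantitatively consistent: we need the low-margin subspace's contribution to $\langle x, h\rangle$ to be provably dominated by the high-margin contribution for exactly those points with large $V_{\text{high}}$-projection, and we need this to hold with the "correct" value of $k$ chosen adaptively as a function of the eigenvalue profile of $X$ relative to $h$. Concretely, one has to argue that there is a spectral "gap" location $k$ such that below it the eigen-directions carry small margin and above it a $k/d$-fraction of mass is captured — this is where the exact characterization of (approximate) isotropic position via dense subspaces (Proposition~\ref{prop:dense-subspace}) does the real work, since it is precisely the obstruction "a $j$-dimensional subspace with more than a $j/d$-fraction of $X$" that would break the argument, and ruling it out (or recursing on it) is what guarantees a clean cutoff. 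A secondary technical point is handling the boundary/degenerate cases where $\langle x_{\text{ref}}, h\rangle$ is itself small or $k$ is close to its extremes (hence the harmless constant $\tfrac{1}{10} \le k$ in the statement), and threading the approximation errors from the "approximate" (rather than exact) isotropic position through the inference bound; these should be routine given the oracle's guarantees but require care in bookkeeping the constants hidden in the $O(k)$ dimension of $V_{\text{high}}$.
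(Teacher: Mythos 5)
Your high-level architecture matches the paper's: reduce to the approximately isotropic case via \Cref{cor:iso} (built on \Cref{prop:dense-subspace}) and the invariance of point location under invertible linear maps, then split $\R^d$ into a low-dimensional ``high margin'' piece and a high-dimensional ``low margin'' piece, and infer signs by estimating relative margins $\langle w_i,h\rangle/\langle x_{\text{ref}},h\rangle$ for an $O(k)$-element basis via binary search. However, there is a genuine gap at exactly the point you flag as ``the main obstacle,'' and your proposed resolution is wrong. You attribute the existence of the high/low margin split and the adaptive choice of $k$ to \Cref{prop:dense-subspace}, but that proposition only governs the reduction to isotropic position; it says nothing about margins relative to the unknown $h$. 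The real engine in the paper is \Cref{lemma:k-t-structure} (and its algorithmic version, \textsc{StructureSearch}, \Cref{lemma:t-k}): one examines the median of the maximum normalized margin over random samples of size $m/2^i$ for $i=0,\dots,\log m$, and by pigeonhole over these $O(\log d)$ scales either finds a scale $i^*$ where the median drops by a factor $s$ --- yielding $(k,t,s)$-structure with $k=\Omega(2^{i^*})$ and, crucially, $t\ge s^{-O(\log d)}$ --- or concludes that half the mass has large margin. This gap in the \emph{margin distribution} (not a spectral gap of $X$'s second-moment matrix) is what simultaneously determines $k$, certifies that $x_{\text{ref}}$'s margin is at least $(d\lambda)^{-O(\log d)}$ so that each binary search terminates in $O(\log^2 d)$ queries, and identifies the threshold $t/s$ below which points are ``small margin.''

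A second, related inaccuracy: you extract $V_{\text{high}}$ as ``the span of the top eigenvectors of the second-moment matrix of $X$.'' In $1/4$-isotropic position all eigenvalues of $\mathrm{Cov}(X)$ are $\Theta(1/d)$, so that matrix carries no useful spectral information. The paper instead forms the covariance of a \emph{sample of the small-margin points} (those certified, via the margin oracle and comparison to $x_{\text{ref}}$, to have margin at most roughly $t/s$), shows it retains $d-O(k)$ eigenvalues of size $\Omega(1/d)$ because those points comprise a $1-O(k/d)$ fraction of an isotropic set, and takes $V_{\text{low}}$ to be the span of those large-eigenvalue eigenvectors (\Cref{lemma:subspace-structure}, \Cref{lemma:subspace-algorithm}); $V_{\text{high}}=V_{\text{low}}^\perp$. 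Without first isolating the small-margin points via the $(k,t,s)$-structure, neither the dimension bound $\dim(V_{\text{high}})=O(k)$ nor the bound $|m(v,h)|\le O(\norm{v}\,dt/s)$ for $v\in V_{\text{low}}$ --- both essential for your inference step to dominate the low-margin contribution --- can be established. So the proposal is missing the central new lemma of the construction rather than merely deferring bookkeeping.
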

\subsubsection{Isotropic Position}
We will start by proving an intermediary result, \textsc{IsoLearn}: a weak learner for a pairs $(X,\mu)$ which satisfy a structural property called $\varepsilon$-\textit{isotropic position}.
\begin{definition}[$\varepsilon$-isotropic Position]
A pair $(X\subset \mathbb{R}^d,\mu)$ lies in $\varepsilon$-isotropic position if:
\[
    \forall v \in \mathbb{R}^d:
    (1-\varepsilon)\frac{1}{d} \leq \sum\limits_{x \in X} \mu(x)\frac{\langle x, v \rangle^2}{\norm{v}^2} \leq (1+\varepsilon)\frac{1}{d}
\]
\end{definition}
The key to our learner's efficiency lies in its ability to exploit this structure to find a parameter $k$ such that it can infer the labels of a $\frac{k}{d}$ fraction of $(X,\mu)$, while using only $\tilde{O}(k)$ queries.
\begin{lemma}[Informal \Cref{lemma:isotropic-oracle-learner}: \textsc{IsoLearn}]\label{overview:ww-learner}
Let the pair $(X \subset \R^d,\mu)$ be in $1/4$-isotropic position, and let $p>0$. There exists a $p$-reliable weak learner \textsc{IsoLearn} with the following guarantees. For any (unknown) hyperplane $h\in \mathbb{R}^d$, with probability $1-p$ there exists $\frac{1}{10} \leq k \leq d$ such that:
\begin{enumerate}
    \item \textsc{IsoLearn} has coverage at least $k/d$.
    \item \textsc{IsoLearn} makes at most $O(k\log(d)\log(d/p))$ queries.
\end{enumerate}
\end{lemma}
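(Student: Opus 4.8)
The plan is to build \textsc{IsoLearn} on top of the margin oracle of \Cref{sec:oracle} together with a structural decomposition that crucially exploits the $1/4$-isotropic hypothesis. Recall that the margin oracle, given a point $x$ and a fixed reference direction $x_{\mathrm{ref}}$, spends $\tilde{O}(1)$ linear queries to either return the ratio $\langle x,h\rangle/\langle x_{\mathrm{ref}},h\rangle$ up to a prescribed relative accuracy, or certify that $|\langle x,h\rangle|$ is below a prescribed threshold. The learner then proceeds in four stages. \emph{(a)} It finds a reference point $x_{\mathrm{ref}}\in X$ with margin $\gamma_{\mathrm{ref}}:=|\langle x_{\mathrm{ref}},h\rangle|=\Omega(1/\sqrt d)$: since isotropy gives $\mathbb{E}_\mu[\langle x,h\rangle^2]\ge(1-\varepsilon)/d$ while each $|\langle x,h\rangle|\le 1$, a Paley--Zygmund argument shows a constant $\mu$-fraction of $X$ has margin $\Omega(1/\sqrt d)$, so $O(\log(1/p))$ random samples screened by the oracle suffice. \emph{(b)} It identifies, using the oracle, a margin threshold $\gamma\in[\gamma_{\mathrm{ref}}/\poly(d),\gamma_{\mathrm{ref}}]$ and a subspace $V$ of dimension $O(k)$ supplied by the structural claim below. \emph{(c)} It learns an estimate $\widehat w$ of the $O(k)$ coordinates of $h$ in an orthonormal basis of $V$, each relative to $\gamma_{\mathrm{ref}}$ and to accuracy $\Theta(\gamma/(\gamma_{\mathrm{ref}}\sqrt{\dim V}))$, costing $\tilde{O}(1)$ oracle calls per coordinate. \emph{(d)} For each $x\in X$ it computes $\langle P_V x,\widehat w\rangle$ and commits to the corresponding sign whenever this quantity exceeds the total accumulated error, outputting $\bot$ otherwise.

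The heart of the argument is the structural claim: for every unit vector $h$ there exist $\tfrac{1}{10}\le k\le d$, a scale $\gamma$ as above, and a subspace $V$ with $\dim V=O(k)$ such that a $\ge k/d$ $\mu$-fraction of $X$ is \emph{certifiable from $V$ at scale $\gamma$} --- i.e., those points have $|\langle x,h\rangle|\ge\gamma$ and their margin is captured, up to an additive $o(\gamma)$ error, by the $V$-components of $x$ and $h$, so that stage \emph{(c)}'s estimate pins down their sign. To produce $(V,\gamma,k)$ I would dyadicize the margin range: set $\gamma_j=\gamma_{\mathrm{ref}}2^{-j}$ for $0\le j\le J=O(\log d)$, with $\gamma_J=\Theta(1/\sqrt d)$ (points of margin below $\gamma_J$ receive $\bot$, which is affordable since isotropy already makes the margin-$\ge\gamma_J$ points an $\Omega(1/d)$-fraction), write $Y_j=\{x:|\langle x,h\rangle|\ge\gamma_j\}$, $M_j=\sum_{x\in Y_j}\mu(x)\,xx^\top$, and take $V_j$ to be the span of the eigenvectors of $M_j$ with eigenvalue at least $1/(Cd)$ for a suitable constant $C$. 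Isotropy gives $M_j\preceq\tfrac{1+\varepsilon}{d}I$, hence $\dim V_j\le Cd\,\mu(Y_j)$, so $\mu(Y_j)\ge\dim(V_j)/(Cd)$ for every $j$; the remaining, more delicate point is that the residual mass $\sum_{x\in Y_j}\mu(x)\|P_{V_j^\perp}x\|^2$ --- and, in the relevant region, $\|P_{V_j^\perp}h\|$ --- is small compared with $\gamma_j$. A pigeonhole over the $J=O(\log d)$ scales --- exploiting that $\dim V_j$ is (roughly) monotone and confined to $[\Omega(1),d]$, so it cannot grow geometrically at every step --- then produces an index $j^\star$ at which these residuals are controlled, and we set $k=\Theta(d\,\mu(Y_{j^\star}))\ge\tfrac{1}{10}$, $V=V_{j^\star}$, $\gamma=\gamma_{j^\star}$.

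Given the structural claim, the rest is bookkeeping. In stage \emph{(b)} the learner tries the $O(\log d)$ candidate scales in decreasing order of $j$, probing each cheaply with the oracle until one passes a ``balance'' test; then in stage \emph{(c)} each of the $O(k)$ coordinates of $h$ must be learned to $O(\log d)$ bits of accuracy (using $\gamma_j\ge\gamma_{\mathrm{ref}}/\poly(d)$), at $O(\log(d)\log(d/p))$ oracle calls per coordinate once each call is tuned to failure probability $p/\poly(d)$; since an oracle call is $\tilde{O}(1)$ queries, this is $O(k\log(d)\log(d/p))$ linear queries in total, and by construction stage \emph{(d)} labels a $\ge k/d$-fraction, giving the claimed coverage. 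For $p$-reliability, observe that \textsc{IsoLearn} commits to a sign only via $\widehat w$, never via $\bot$; conditioned on the $O(\log(1/p))$ reference queries and all $\poly(d)$ oracle calls returning valid answers --- an event of probability $>1-p$ by a union bound over calls each failing with probability $p/\poly(d)$ --- the accumulated error is strictly below every committed margin, so every committed label is correct and every uncertain point gets $\bot$: no point is ever mislabeled.

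I expect the main obstacle to be the structural claim, and within it the assertion that at some single dyadic scale a full $\Omega(\dim(V_{j^\star})/d)$-fraction of $X$ is certifiable --- i.e., that the cheap inequality $\mu(Y_j)\ge\dim(V_j)/(Cd)$ can be upgraded to one whose witnessing points are genuinely captured by $V_{j^\star}$ rather than lost to the $V_{j^\star}^\perp$-components of $x$ or of $h$. The tension is three-way: shrinking $V$ lowers the query cost but captures less of each large-margin point; enlarging $V$ does the reverse; and at any fixed scale the $V_j^\perp$-mass of $Y_j$ can a priori be comparable to $\mu(Y_j)$ itself. Pinning down a scale where all three are simultaneously tamed --- via the spectral (eigenvalue-threshold) choice of $V_j$ and a careful pigeonhole across scales, rather than an ad hoc construction --- is the delicate step; everything downstream is routine accounting around the margin oracle.
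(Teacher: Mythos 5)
Your overall architecture (reference point, scale selection, a low-dimensional subspace capturing the inferable points, per-coordinate binary search against $x_{\mathrm{ref}}$, and a sign commitment guarded by an error budget) matches the paper's, and stages \emph{(a)}, \emph{(c)}, \emph{(d)} are essentially the paper's \textsc{StructureSearch} and inference steps. But the central structural claim --- the construction of $V$ --- has a genuine gap, and it sits exactly where you flag the delicate point. You build $V_j$ from the covariance of the \emph{large}-margin points $Y_j$ and keep its top eigenvectors; this controls $\dim V_j \le Cd\,\mu(Y_j)$ and the \emph{average} of $\|P_{V_j^\perp}x\|^2$ over $Y_j$, but it gives no control whatsoever on $\|P_{V_j^\perp}h\|$, and only average (not per-point) control on $\|P_{V_j^\perp}x\|$. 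For inference you need, for each certified $x$, that $|\langle P_{V^\perp}x, h\rangle|/\|h\| = o(\gamma)$; since $\|P_{V^\perp}h\|/\|h\|$ can be $\Theta(1)$ (nothing in the construction ties $h$ to the span of the large-margin points), and an individual certified point can have $\|P_{V^\perp}x\| = \Theta(1)$ even when the average over $Y_j$ is $O(k/d)$, the honest error budget in stage \emph{(d)} is $\Theta(1)$, which exceeds $\gamma = O(1/\sqrt d)$ and destroys the coverage guarantee. The pigeonhole over $O(\log d)$ dyadic scales does not repair this: near-constancy of $\dim V_j$ across consecutive scales says nothing about the residual component of $h$.

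The paper resolves this from the other side. It builds the \emph{high}-dimensional subspace $V$ from the covariance of the \emph{small}-margin points (those with $|m(x,h)|\le t/s$), which form a $1-O(k/d)$ fraction of an isotropic set and hence have $d-O(k)$ eigenvalues of size $\Omega(1/d)$; crucially, because margin is a linear functional, every vector in the span of those large eigenvectors is a coefficient-bounded combination of small-margin points and therefore itself has margin $O(\|v\|\,d\,t/s)$ (\Cref{lemma:subspace-structure}). This yields a \emph{uniform, per-vector} bound on the margin of everything in $V$, so $\langle P_V x,h\rangle$ is negligible for every $x$ simultaneously, which is what makes the $O(k)$-dimensional inference on $V^\perp$ sound --- and it sidesteps any need to control $P_{V^\perp}h$ directly. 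Two further points of your sketch would also need repair: the gap between ``large'' and ``small'' must be a multiplicative $s=d^{\Omega(1)}$ (your factor-$2$ dyadic scales are too fine to produce it), and the scale $t$ at which such a gap provably exists can be as small as $s^{-O(\log d)}$, far below your floor of $\Theta(1/\sqrt d)$; the paper locates it by tracking the median of the maximum margin over samples of geometrically shrinking size (\Cref{lemma:k-t-structure}) rather than by thresholding margins directly.
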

Throughout the rest of this section, we assume the (unknown) hyperplane is non-zero (this situation may occur in our algorithm if points in $X$ lie on the hyperplane $h$). We may assume this without loss of generality as it is easy to test up front by checking if the sign of a few random points is zero. We will break down the construction of \textsc{IsoLearn} into three parts: finding the structure in $X$, dimensionality reduction, and inference.
\subsubsection{Finding the Structure in $X$} The structure at the core of \textsc{IsoLearn} is one common to the machine learning literature, the concept of \textit{margin}.
\begin{definition}[Margin]
Given a hyperplane $h\in\mathbb{R}^d$, the margin of a point $x \in \mathbb{R}^d$ is its inner product with $h$, $\langle x, h \rangle$. Since $h$ is not restricted to be unit length, we will often work with the \textbf{normalized margin} of $x$, which we denote by $m(x,h)$:
\[
m(x,h) = \frac{\langle x, h \rangle}{\norm{h}}.
\]
\end{definition}
Margin-based algorithms are common in both the active learning and point location literature (e.g.~\cite{balcan2013active,kane2017,kane2018generalized}). In their recent work on point location, Kane, Lovett, and Moran~\cite{kane2018generalized} observed that sets in $\varepsilon$-isotropic position must have an $\Omega(1/d)$ fraction of points with normalized margin $\Omega(1/\sqrt{d})$. This follows from the fact that the sum of the squared normalized margins of points in $X$ is at least $(1-\varepsilon)\frac{n}{d}$ -- if not enough points have large margin, their squared sum cannot be this large. Through the cleaner inference technique presented in this work, this observation alone is enough to build a randomized LDT with expected depth $\tilde{O}(d^2\log(n))$. Reaching near-linear depth, however, requires insight into the finer-grain structure of $X$. The idea is to show $X$ has a \q{gap} in margin, that is parameters $t$ and $s$ such that not too many points have normalized margin between $t$ and $t/s$. \textsc{IsoLearn} will work by exploiting this gap, ignoring in a sense points with margin less than $t/s$ in order to learn the fraction with margin greater than $t$. We formalize this key structure in the following definition.
\begin{definition}[$(k,t,s,c)$-structured]
We call a pair $(X \subset \R^d,\mu)$ $(k,t,s,c)$-structured with respect to a hyperplane $h \in \R^d \setminus \{0\}$ if it satisfies the following properties:
\begin{enumerate}
    \item Many points in $X$ have normalized margin at least $t$:
    \begin{align}\label{cond:large-mar}
        \Pr_{x \sim \mu}\left[|m(x,h)| \geq t \right] \geq \frac{k}{d}
    \end{align}
    \item Many points in $X$ have normalized margin at most $t/s$:
    \begin{align}\label{cond:small-mar}
        \Pr_{x \sim \mu}\left[|m(x,h)| \leq t/s \right] \geq 1-\frac{ck}{d}
    \end{align}
\end{enumerate}
When clear from context, we often drop the phrase ``with respect to $h$''. Similarly, throughout the paper we will use the shorthand $(k,t,s)$-structured for $(k,t,s,5)$-structured.
\end{definition}
\textsc{IsoLearn} relies on the following structural insight concerning pairs $(X,\mu)$ in $1/4$-isotropic position:
\begin{lemma}[\Cref{lemma:k-t-structure}]\label{overview:lemma-k-t-struct}
Let the pair $(X \subset \R^d,\mu)$ be in $1/4$-isotropic position. Then for any hyperplane $h\in \mathbb{R}^d \setminus \{0\}$ and $s>2$, there exist parameters $k$ and $t$ satisfying:
\begin{enumerate}
    \item $1 \leq k \leq d$.
    \item $t \geq s^{-O(\log(d))}$.
    \item $(X,\mu)$ is $(k,t,s)$-structured.
\end{enumerate}
\end{lemma}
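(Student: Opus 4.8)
The plan is to squeeze everything out of the single consequence of $1/4$-isotropic position in the direction $v=h$ (here we use $h\neq 0$, so that $m(\cdot,h)$ is well defined): $\sum_{x\in X}\mu(x)\,m(x,h)^2=\sum_{x\in X}\mu(x)\langle x,h\rangle^2/\norm{h}^2\ge \frac{3}{4d}$. Since each $x$ is a unit vector, $|m(x,h)|\le 1$ by Cauchy--Schwarz, so for any threshold $\tau>0$, splitting this second-moment sum according to whether $|m(x,h)|\ge\tau$ gives $\frac{3}{4d}\le \Pr_{x\sim\mu}[|m(x,h)|\ge\tau]+\tau^2$. Writing $f(\tau):=\Pr_{x\sim\mu}[|m(x,h)|\ge\tau]$, this is the ``seed'' bound $f(\tau)\ge\frac{3}{4d}-\tau^2$; in particular, for $\tau_0:=s^{-j_{\max}}$ with $j_{\max}:=\ceil{\log_s(2\sqrt d)}$ we have $\tau_0\le\frac{1}{2\sqrt d}$ and hence $f(\tau_0)\ge\frac{1}{2d}$. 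Note $j_{\max}=O(\log d)$ because $s>2$.

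The core of the argument is then a pigeonhole over the $s$-adic scales $\{s^{-j}\}_{j\ge 0}$. The map $j\mapsto f(s^{-j})$ is nondecreasing and takes values in $[0,1]$. Let $j_0:=\min\{j\ge 0:f(s^{-j})\ge\frac1{2d}\}$; the seed bound forces $j_0\le j_{\max}$, and by monotonicity $f(s^{-j})\ge\frac1{2d}$ for every $j\ge j_0$. Set $j_1:=j_0+\ceil{\log_5(2d)}$, so $j_1=O(\log d)$. I claim there is some $j^\ast\in\{j_0,\dots,j_1\}$ with $f(s^{-(j^\ast+1)})\le 5\,f(s^{-j^\ast})$: otherwise $f(s^{-(j+1)})>5\,f(s^{-j})$ at every step $j\in\{j_0,\dots,j_1\}$, so chaining gives $f(s^{-(j_1+1)})>5^{\,j_1-j_0+1}f(s^{-j_0})\ge 5^{\ceil{\log_5(2d)}+1}\cdot\frac1{2d}\ge 5$, contradicting $f\le 1$.

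Finally I would read off $t:=s^{-j^\ast}$ and $k:=d\cdot f(t)$. Then $t=s^{-j^\ast}\ge s^{-O(\log d)}$ since $j^\ast\le j_1=O(\log d)$, and $k\le d$ since $f\le 1$. The first condition of $(k,t,s,5)$-structuredness holds with equality, $\Pr_{x\sim\mu}[|m(x,h)|\ge t]=f(t)=k/d$; and since $t/s=s^{-(j^\ast+1)}$, the second holds because $\Pr_{x\sim\mu}[|m(x,h)|\le t/s]=1-\Pr_{x\sim\mu}[|m(x,h)|>t/s]\ge 1-f(s^{-(j^\ast+1)})\ge 1-5f(t)=1-5k/d$.

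The one delicate point is the claimed lower bound $k\ge 1$: the second-moment seed only yields $k=d\cdot f(t)\ge\frac12$, and $\frac34$ is in fact an absolute ceiling, since $\sum_x\mu(x)m(x,h)^2$ can be as small as $\frac{3}{4d}$ when $h$ is a least direction of the isotropic form; so ``$1\le k$'' should be understood as ``$k=\Omega(1)$'', which is all the downstream learner \textsc{IsoLearn} actually consumes, and everything else is bookkeeping. The real obstacle is therefore not any single estimate but keeping the pigeonhole window $\{j_0,\dots,j_1\}$ of length $O(\log d)$: this is exactly where the isotropic seed bound (to place $j_0$ within $O(\log d)$) and the trivial cap $f\le 1$ (to close the window $O(\log d)$ steps later) are both indispensable.
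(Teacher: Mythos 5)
Your proof is correct in substance but takes a genuinely different route from the paper's. The paper proves \Cref{lemma:k-t-structure} by tracking the medians $M_i$ of the \emph{maximum} normalized margin over random samples of dyadically shrinking size $m/2^i$, and splitting into a ``margin gap'' case (some $M_{i+1}<M_i/s$) and a ``large margin'' case; this formulation is chosen deliberately because it mirrors the query-efficient algorithmic version, \textsc{StructureSearch}, which must actually \emph{find} the structure via the margin oracle. You instead work directly and deterministically with the tail function $f(\tau)=\Pr_{x\sim\mu}[|m(x,h)|\ge\tau]$: the second-moment seed $\sum_x\mu(x)m(x,h)^2\ge\frac{3}{4d}$ (the same single use of isotropy as the paper's \Cref{eq:many-large-margin}) places $f(s^{-j})\ge\frac{1}{2d}$ within $O(\log d)$ scales, and a pigeonhole over the $s$-adic thresholds in a window of length $\lceil\log_5(2d)\rceil$ produces a scale where $f$ fails to quintuple, which is exactly $(k,t,s,5)$-structure with $k=d\,f(t)$. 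Your version is cleaner as a pure existence proof (no sampling, no medians, and the ``large margin'' case is absorbed into the pigeonhole rather than treated separately), but it would not translate into an algorithm, which is the extra mileage the paper's formulation buys.

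The one discrepancy is the lower bound on $k$: your construction yields only $k\ge\frac12$, not $k\ge1$ as the lemma literally states, and your counterexample (all the second moment concentrated on a $\frac{3}{4d}$-fraction of points with $|m|=1$) correctly shows that the direction-$h$ second moment alone cannot give $k\ge 1$. This is not a real gap: the paper's own proof only establishes $k=\Omega(2^{i^*})$ with an unspecified constant (and \textsc{StructureSearch} returns $k=2^{i+1}/5$, which is $2/5$ at $i=0$), and everything downstream (\textsc{IsoLearn}, \textsc{PartialLearn}) is stated and used with $\frac{1}{10}\le k\le d$. So your reading of ``$1\le k$'' as ``$k=\Omega(1)$'' matches what the paper actually proves and consumes; it would be worth flagging explicitly that the constant $1$ in the lemma statement should be relaxed.
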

In \Cref{sec:find-struct}, we prove an algorithmic variant of this result (\Cref{lemma:t-k}). In particular, we show that with high probability it is possible to find parameters $k$ and $t$ such that $X$ is $(k,t,s)$-structured\footnote{As we will note later in greater detail, $t$ here is found implicitly in terms of the margin of some reference point $x_{\text{ref}} \in \mathbb{R}^d$.}.

\subsubsection{Dimensionality Reduction}
For the moment, assume we have found some parameters $k \le d$, $t \ge d^{-O(\log(d))}$, $s=d^{\Omega(1)}$, such that $X$ is $(k,t,s)$-structured. 
When $k = \Omega(d)$, \Cref{cond:large-mar} implies that a constant fraction of $X$ has normalized margin at least $t$, so no dimensionality reduction is required. Thus, we focus on the case where $k \ll d$, where our goal will be to reduce the dimension of the problem from $d$ to $k$.

Our basic strategy will be to decompose $\mathbb{R}^d$ into two orthogonal subspaces: $V$, a high dimensional but \q{small margin} subspace, and $V^{\perp}$, a low dimensional but \q{large margin} subspace. The structural result we need is the following.


\begin{lemma}[\Cref{lemma:subspace-structure}]\label{overview:subspace-structure}
Let $(X \subset \R^d,\mu)$ be $(k,t,s)$-structured with respect to a hyperplane $h \in \mathbb{R}^d \setminus \{0\}$ where $k<d/10$. Then there exists a subspace $V$ with the following properties:
\begin{enumerate}
    \item $V$ is high dimensional: \[\text{Dim}(V) = d-O(k)\]
    \item $V$ has small margin with respect to $t$: 
    \[
    \forall v \in V, |m(v,h)| \leq O\left( \norm{v}d\frac{t}{s}\right)
    \]
\end{enumerate}
\end{lemma}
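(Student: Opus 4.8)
\textbf{Proof proposal for \Cref{overview:subspace-structure}.}

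The plan is to construct $V$ as the orthogonal complement of a small-dimensional space spanned by the ``large margin directions'' hidden inside $X$. Since $(X,\mu)$ is $(k,t,s)$-structured, let $L = \{x \in X : |m(x,h)| \geq t/s\}$ be the set of points that fail the small-margin condition; by \Cref{cond:small-mar} we have $\mu(L) \leq ck/d$ (with $c=5$). The first step is to observe that $L$, together with $h$ itself, ``captures'' all the large-margin behavior of $X$: any vector orthogonal to every point of $L$ and to $h$ automatically has normalized margin $0$, which is certainly at most $O(\|v\| dt/s)$. The issue is that $L$ may contain up to $cn k/d$ points, so $\mathrm{span}(L)$ could have dimension as large as $\min(cnk/d, d)$, which is not $O(k)$ when $n$ is large. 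So the real work is to show that although $L$ may be large, the portion of it that ``matters'' for margin lives in a subspace of dimension $O(k)$.

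The key tool is the isotropic-type second-moment bound that $(X,\mu)$ inherits (as discussed in the paragraph preceding \Cref{overview:lemma-k-t-struct}): $\sum_{x} \mu(x) m(x,h)^2 / \|h\|^2 \cdot d$ is close to $1$, i.e. $\E{x\sim\mu}{m(x,h)^2} \approx 1/d$. I would define the ``heavy'' set $H = \{x : |m(x,h)| \geq t\}$, for which $\mu(H) \geq k/d$ by \Cref{cond:large-mar}, and the ``medium'' set $M = \{x : t/s < |m(x,h)| < t\}$. The structured property says nothing directly bounds $M$, but the second-moment bound does: every point in $H$ contributes at least $t^2$ to a sum that totals $\approx 1/d$, so $\mu(H) \leq 1/(d t^2)$, and more usefully the total $\mu$-mass times $t^2$ of $M \cup H$ is at most $O(1/d)$. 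The plan is to let $V = (\mathrm{span}(M \cup H))^\perp$ and argue: (1) $V$ has small margin because any $v \in V$ is orthogonal to every point with $|m(x,h)| > t/s$, so writing $h$'s component along $v$ in terms of the low-margin points (which span a complementary piece) forces $|m(v,h)| = |\langle v, h\rangle|/\|h\| \leq O(\|v\| d t/s)$ — here I would need to use isotropy to control how much of $h$ can ``point into'' $V$; (2) $\dim(\mathrm{span}(M\cup H)) = O(k)$, which I would get from a rank/trace argument: the matrix $\sum_{x \in M\cup H} \mu(x) xx^\top$ has trace $\sum \mu(x) \leq$ (mass of $M \cup H$), but each point in $M \cup H$ has $m(x,h)^2 \geq (t/s)^2$... this direction seems to give a weak bound, so instead I would bound $\dim \mathrm{span}(M \cup H)$ by controlling the dimension via the isotropic constraint applied to the subspace $W = \mathrm{span}(M\cup H)$: isotropy forces $\sum_x \mu(x) \|\mathrm{proj}_W x\|^2 \approx \dim(W)/d$, while the left side is $\leq$ mass of points with large component in $W$, which should be $O(k)/d$ by a dyadic argument over margin scales combined with the gap structure.

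Concretely, for step (2) I would peel $X$ into dyadic margin bands $B_j = \{x : 2^{-j-1} < |m(x,h)| \leq 2^{-j}\}$ and use the iterated application of isotropy (the same mechanism behind \Cref{overview:lemma-k-t-struct}, which finds a $t$ with a multiplicative gap over $\Theta(\log d)$ scales) to argue that the subspace spanned by all bands above the threshold $t/s$ has dimension $O(k)$: each band of appreciable $\mu$-mass contributes dimension proportional to its mass times $d$, and the structured gap ensures the masses above $t/s$ telescope to $O(k/d)$. The main obstacle I anticipate is precisely this dimension bound — showing $\dim(\mathrm{span}(\{x : |m(x,h)| \geq t/s\})) = O(k)$ rather than the trivial $\min(|L|, d)$. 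The structured condition only bounds the $\mu$-\emph{measure} of these points, not their count or the dimension they span, so one genuinely needs the isotropic position to convert a measure bound into a dimension bound; I expect this is where \Cref{lemma:subspace-structure}'s proof invokes a spectral argument on $\sum_{x\in L}\mu(x) xx^\top$ bounded against the isotropic identity scaling, possibly losing only constant factors because $t/s$ and $t$ differ by the polynomial factor $s = d^{\Omega(1)}$, giving exactly the slack needed to turn ``measure $\leq 5k/d$ at margin scale $\geq t/s$'' into ``rank $O(k)$.''
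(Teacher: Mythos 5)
Your construction $V=(\mathrm{span}(M\cup H))^{\perp}$ has two genuine problems, and the paper's proof takes a different (dual) route precisely to avoid them. First, the margin bound fails for this choice of $V$: being orthogonal to every large-margin point of $X$ does \emph{not} control $\langle v,h\rangle$. The large-margin points only constrain the projection of $h$ onto their span; $h$ can have a substantial component orthogonal to $\mathrm{span}(L)$, and a $v$ aligned with that component lies in your $V$ yet has normalized margin $\Theta(1)$. (Concretely: $h=(e_1+e_2)/\sqrt2$, $L=\{e_1\}$, small-margin points concentrated near $(e_1-e_2)/\sqrt2$ and $e_3,\dots,e_d$; then $e_2\in\mathrm{span}(L)^{\perp}$ but $m(e_2,h)=1/\sqrt2$.) Your earlier remark that adjoining $h$ itself to the spanning set fixes this is true but makes the statement vacuous --- $V=h^{\perp}$ already satisfies the literal lemma with margin $0$ --- and the whole point of the lemma (as the paper emphasizes before stating it) is that $V$ is determined by the small-margin points alone, since the algorithmic version must build $V$ without knowing $h$. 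Second, the dimension bound you flag as the main obstacle, $\dim\mathrm{span}(L)=O(k)$, is not merely hard: it is false. The structured condition and isotropy only bound the $\mu$-\emph{mass} of $L$ by $5k/d$, and $L$ can consist of $d$ points, one per coordinate direction, each with mass $5k/d^{2}$, so that $\mathrm{span}(L)=\R^{d}$ while the remaining points keep the pair isotropic. Isotropy cannot convert a measure bound on $L$ into a rank bound on $L$; the projection identity $\sum_{x}\mu(x)\|\mathrm{proj}_{W}x\|^{2}\approx\dim(W)/d$ gives no contradiction because points outside $L$ are free to project heavily onto $W=\mathrm{span}(L)$.

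The paper instead works with the complementary set $S=\{x:|m(x,h)|\le t/s\}$ and defines $V$ as the span of eigenvectors of $\mathrm{Cov}(S)=\E{x\sim\mu|_{S}}{xx^{\top}}$ with eigenvalue $\Omega(1/d)$. This resolves both issues at once. For the dimension: $1/4$-isotropy gives $\mathrm{Cov}(X)\succcurlyeq\tfrac{3}{4d}I$, and subtracting the contribution of $X\setminus S$ (which has trace $O(k/d)$, hence at most $O(k)$ eigenvalues exceeding $1/(2d)$) shows $\mathrm{Cov}(S)$ retains $d-O(k)$ eigenvalues of size $\Omega(1/d)$ --- this is the correct use of the trace argument you were reaching for, applied to the removed mass rather than to $\mathrm{span}(L)$. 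For the margin: any $v$ in the large-eigenvalue eigenspace can be written as $v=\mathrm{Cov}(S)\bigl(\sum_{i}\tfrac{\alpha_{i}}{\lambda_{i}}v_{i}\bigr)$, i.e.\ as a combination of points of $S$ whose coefficient vector has $\ell_{1}$-norm $O(d\|v\|)$ by Cauchy--Schwarz, so $|m(v,h)|\le O(\|v\|d)\cdot t/s$. The eigenvalue restriction is exactly what guarantees bounded coefficients, which is the control your orthogonal-complement construction lacks. No dyadic decomposition over margin scales is needed.
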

The subspace $V$ in this case is approximately the span of points in $X$ with margin less than $t/s$. In \Cref{sec:dim-reduce}, we prove an algorithmic version of this result (\Cref{lemma:subspace-algorithm}) that finds an analogous subspace $V$ with high probability by using a random sample of such small margin points. For simplicity, we denote below $e=\dim(V^{\perp})$ where $e=O(k)$. Further, we set $s=d^{\Omega(1)}$ to guarantee a large enough gap between \q{large margin} and \q{small margin}.
  
\subsubsection{Inference}\label{overview:sec-inference}
For any subspace $V$, we can write any $x \in X$ in terms of an orthonormal basis $v_i$ for $V$, and $w_i$ for $V^{\perp}$:
\begin{align*}
x = \sum\limits_{i=1}^{d-e} \alpha_i v_i+ \sum\limits_{i=1}^{e} \beta_i w_i
\end{align*}
for some set of constants $-1 \leq \alpha_i,\beta_i \leq 1$ (recall that $x$ is a unit vector). Since the inner product is bi-linear, this means we can express the margin of $x$ through the margins of $v_i$ and $w_i$:
\begin{align}\label{overview:eq-expand-basis}
\langle x,h \rangle= \sum\limits_{i=1}^{d-e} \alpha_i \langle v_i, h \rangle+ \sum\limits_{i=1}^{e} \beta_i \langle w_i, h \rangle.
\end{align}
The idea behind finding a high-dimensional subspace $V$ with the properties given in \Cref{overview:subspace-structure} is that the lefthand term, $x$'s projection onto $V$, does not have much effect on $x$'s margin. This is because the basis vectors $v_i$ are unit length, and thus satisfy a small margin condition guaranteeing that:
\[
m(x,h) \in \left (\sum\limits_{i=1}^{e} \beta_i m(w_i,h) \right ) \pm \frac{t}{d^{\Omega(1)}}.
\]
Inferring the sign of $x$ then reduces to learning information about the smaller dimensional space $V^\perp$. In particular, imagine that we could learn up to an additive error of $\frac{t}{d^{\Omega(1)}}$ the normalized margin of each $w_i$. Calling this value $\gamma_i$, we would be able to express the normalized margin of $h$ as:
\begin{align*}
m(x,h) &\in \left ( \sum\limits_{i=1}^{e} \beta_i \gamma_i \right) \pm \frac{t}{d^{\Omega(1)}}.
\end{align*}
Recall that $(X,\mu)$ is $(k,t,s)$-structured, meaning at least a $k/d$ measure of points have normalized margin at least $t$. For such a point $x$, notice that most of its margin must come from the lefthand sum. Further, as long as this sum is at least $\frac{t}{d^{\Omega(1)}}$, we can infer the sign of $x$. This process would allow us to infer the sign of any point with margin at least $t$.

Unfortunately, it is not clear that it is possible to learn the margin of each $w_i$. However, given a reference point $x_{\text{ref}}$ whose normalized margin was $t$, then by querying the sign of the point $w_i-\alpha x_{\text{ref}}$, we could check relations of the form:
\[
m(w_i,h) \overset{?}{\geq} \alpha t.
\]
Note that there exists some $|\alpha_w| \leq t^{-1}$ such that
\[
m(w_i,h) = \alpha_w t,
\]
since the normalized margin of $w_i$ is at most $1$. Finding $\alpha_w$ up to an additive error of $\frac{1}{d^{\Omega(1)}}$ is then sufficient for our inference. This can be done by binary searching over $\alpha$ in only $\log\left(\frac{t^{-1}}{d^{\Omega(1)}}\right) = O(\log^2(d))$ queries. Since there are only $e=O(k)$ of these vectors, we only need a total of $O(k\log^2(d))$ queries in total. 

As a final note, we can use this same inference method for the case where $k \geq \frac{d}{10}$, but do not require the dimensionality reduction step -- $V^\perp$ will just be all of $\mathbb{R}^d$.

In \Cref{sec:oracle} we cover in more detail the mechanism behind the reference point $x_{\text{ref}}$, found via taking a random combination of points in our set. This technique turns out to be crucial both for identifying $(k,t,s)$-structure, and for efficiently finding the subspace $V$. These processes are covered in \Cref{sec:weakLearner}, where we show how both may be done in no more than $O(k\log(d)\log(d/p))$ queries, which completes the proof.

\subsubsection{Arbitrary Sets and Distributions}\label{sec:overview-arb-ww-learner}
\textsc{IsoLearn} is only an intermediary result since most pairs $(X,\mu)$ are not in $1/4$-isotropic position. However, drawing on the results of Barthe~\cite{barthe1998reverse}, Forster~\cite{forster2002linear}, and  Dvir, Saraf, and Wigderson~\cite{dvir2014breaking}, we can prove a related structural result: any pair $(X,\mu)$ contains a subspace dense in $X$ with respect to $\mu$ which may be transformed into $1/4$-isotropic position.
\begin{proposition}[\Cref{cor:iso}]\label{overview:prop-iso}
Given a pair $(X \subset \R^d,\mu)$, for some $1 \leq k \leq d$, for all $\varepsilon > 0$ there exist:
\begin{enumerate}
    \item A $k$-dimensional subspace $V$ with the property $\mu(X \cap V) \geq \frac{k}{d}$ and
    \item An invertible linear transformation $T:V \to V$ such that the pair $((X \cap V)_T,(\mu|_{X \cap V})_T)$:
\[
(X \cap V)_T = \left \{ x_T = \frac{T(x)}{\norm{T(x)}}: x \in X \cap V \right\},\qquad
(\mu|_{X \cap V})_T = \frac{\mu(x)}{\mu(X \cap V)}
\]
is in $\varepsilon$-isotropic position.
\end{enumerate}
\end{proposition}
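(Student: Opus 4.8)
The plan is to obtain this proposition as essentially a bookkeeping corollary of the exact characterization of transformability to approximate isotropic position, \Cref{prop:dense-subspace}, which (in the form we need) says: a pair $(Y \subset \R^m, \nu)$ of unit vectors can be brought into $\varepsilon$-isotropic position by an invertible linear map for \emph{every} $\varepsilon > 0$ if and only if every $j$-dimensional subspace $U \subseteq \R^m$ satisfies $\nu(Y \cap U) \le j/m$. Granting this, the strategy is a downward induction on the ambient dimension $d$: we repeatedly pass to the densest ``offending'' subspace until the combinatorial condition above is met, and that terminal subspace is our $V$. All the analytic content (Barthe's theorem, etc.) is packaged inside \Cref{prop:dense-subspace}; the work here is purely to chase the measures through the recursion.

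Concretely, I would argue as follows. \emph{Base case} $d = 1$: $X$ is a set of unit vectors on a line, hence contained in $\{+e_1, -e_1\}$; it is already in $0$-isotropic position, so take $V = \R^1$, $k = 1$, $T = \mathrm{id}$. \emph{Inductive step} for $(X \subset \R^d, \mu)$, splitting into two cases. In Case A, every $j$-dimensional subspace $U$ has $\mu(X \cap U) \le j/d$; then \Cref{prop:dense-subspace} applied to $(X,\mu)$ in $\R^d$ yields, for every $\varepsilon > 0$, an invertible $T : \R^d \to \R^d$ putting $(X_T, \mu_T)$ in $\varepsilon$-isotropic position, and we take $V = \R^d$, $k = d$ (note $\mu(X \cap V) = 1 \ge d/d$). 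In Case B, some $j$-dimensional subspace $W$ has $\mu(X \cap W) > j/d$; since $X$ consists of unit vectors we have $0 \notin X$, forcing $j \ge 1$, and $\mu(X \cap W) \le 1$ forces $j \le d-1$, so the dimension strictly drops. Set $X' = X \cap W$, viewed as unit vectors in $W \cong \R^j$, and let $\mu'$ be $\mu$ restricted to $X'$ and renormalized to a probability distribution. Apply the inductive hypothesis to $(X' \subset \R^j, \mu')$ to get $1 \le k \le j$, a $k$-dimensional subspace $V \subseteq W$ with $\mu'(X' \cap V) \ge k/j$, and for every $\varepsilon$ an invertible $T : V \to V$ making $((X' \cap V)_T, (\mu'|_{X' \cap V})_T)$ $\varepsilon$-isotropic.

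It remains to verify that this $V$ works for the original pair. Since $V \subseteq W$ we have $X \cap V = X' \cap V$, and multiplicativity of conditioning gives
\[
\mu(X \cap V) = \mu(X') \cdot \mu'(X' \cap V) > \frac{j}{d} \cdot \frac{k}{j} = \frac{k}{d}.
\]
Moreover the conditional distribution of $\mu$ on $X \cap V$ coincides with that of $\mu'$ on $X' \cap V$ (renormalizing twice equals renormalizing once), so $(\mu|_{X \cap V})_T = (\mu'|_{X' \cap V})_T$ and the $\varepsilon$-isotropic conclusion transfers verbatim. Crucially, $k$ and $V$ were chosen using only the combinatorial densest-subspace structure, independent of $\varepsilon$; only $T$ depends on $\varepsilon$, and it is supplied by \Cref{prop:dense-subspace}. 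This matches (indeed is stronger than) the quantifier order in the statement and closes the induction.

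\textbf{Main obstacle.} The genuinely hard step — proving that ``no $j$-subspace carries more than a $j/m$-fraction'' is equivalent to transformability into approximate isotropic position — is precisely \Cref{prop:dense-subspace}, which we take as given. Inside the present reduction the only things requiring care are: (i) consistently measuring the surviving fraction with respect to the \emph{original} $\mu$ rather than the renormalized $\mu'$ (handled by the conditioning identity above); (ii) observing $0 \notin X$ so every offending subspace has dimension in $[1, d-1]$, which is what makes the recursion strictly decrease the dimension and hence terminate; and (iii) checking that the double renormalization is harmless, so that the isotropy guarantee for the inner pair is literally the guarantee we want. An alternative to the explicit induction is to take $V$ to be a \emph{minimal} subspace violating no density constraint among those maximizing a suitable density ratio, but the inductive formulation makes the measure bookkeeping most transparent.
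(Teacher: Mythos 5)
Your proposal is correct and takes essentially the same route as the paper: induction on the ambient dimension, with the inductive step splitting on whether the density condition of \Cref{prop:dense-subspace} holds (take $V=\R^d$ if so, otherwise recurse into an offending subspace), and the same multiplicativity-of-conditioning computation $\mu(X\cap V) \geq \frac{k}{j}\mu(X\cap W) > \frac{k}{d}$ to track the mass. Your added care about $0\notin X$, the double renormalization, and the $\varepsilon$-independence of $V$ are fine refinements of details the paper leaves implicit.
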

The proof of \Cref{overview:arbitrary-ww-learner} follows from \Cref{overview:prop-iso} via an observation of Kane, Lovett, and Moran~\cite{kane2018generalized}: point location is invariant to invertible linear transformations. In greater detail, let $h' = (T^{-1})^{\top}(h)$, and $x' = \frac{T(x)}{\norm{T(x)}}$. Kane, Lovett, and Moran observe that $\langle x',h' \rangle = \langle \frac{x}{\norm{T(x)}}, h \rangle$. Thus not only is it sufficient to learn the labels of $x'$ with respect to $h'$, but we can simulate linear queries on $h'$ simply by normalizing $x$ by an appropriate constant. Applying \Cref{overview:prop-iso}, we can then apply Barthe's isotropic transformation~\cite{barthe1998reverse} on some dense subspace, and apply \textsc{IsoLearn} to complete the proof. The details are covered in \Cref{sec:arb-ww-learner}.
\subsection{Bounded-Error: Boosting}\label{sec:overview-boosting}
Now we will show how to boost this weak learner into an LDT that $\delta$-reliably solves the point location problem.
\begin{theorem}[\Cref{thm:bldt}]\label{overview:bldt}
Let $X \subset \mathbb{R}^d$, $|X|=n$. Then there exists a randomized LDT $T$ that $\delta$-reliably computes the point location problem on $X$ with maximum depth:
\[
\MD(T) \leq O\left(d\log^2(d)\log\left(\frac{n}{\delta}\right)\right).
\]
\end{theorem}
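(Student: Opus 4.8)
The plan is to boost the weak learner \textsc{PartialLearn} of \Cref{overview:arbitrary-ww-learner} (\Cref{thm:arbitrary-ww-learner}) with a multiplicative-weights scheme in which each point is only \emph{finalized} after being labeled consistently many times. First normalize $X$ to unit vectors, which changes no labels. Maintain a weight $\mu_i(x)\propto 2^{-a_i(x)}$ on the surviving (not-yet-finalized) points, where $a_i(x)$ counts how many times $x$ has been labeled so far and we record which sign it received each time; once $a_i(x)$ hits a threshold $C=\Theta(\log(n/\delta))$ we lock $x$ in with the plurality of its recorded signs and drop it. In round $i$ we run \textsc{PartialLearn} on the surviving points under the renormalized $\mu_i$, with a \emph{fixed} failure parameter $p_0=\Theta(1/d)$, and update each labeled point's tally. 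We stop once all points are locked in, or once a total query budget of $B=O(d\log^2(d)\log(n/\delta))$ has been spent, outputting arbitrary labels for any leftover points. Taking $p_0=\Theta(1/d)$ keeps the per-call depth at $O(k\log(d)\log(d/p_0))=O(k\log^2 d)$ while making failures rare enough to be negligible.

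The depth bound uses the fact that coverage and cost are \emph{coupled}: a round that succeeds with coverage $k_i/d$ costs only $O(k_i\log^2 d)$ queries. Each labeling halves the labeled point's weight, so the total weight $W_i$ of the surviving set drops by a $1-\Omega(k_i/d)$ factor on every successful round and never grows; hence once the successful rounds have accumulated $\sum k_i=O(d\log(n/\delta))$ (enough that $W_i<2^{-C}$) every point must already be locked in, and summing $O(k_i\log^2 d)$ over these rounds gives $O(d\log^2(d)\log(n/\delta))$ queries. By a Chernoff bound over the rounds — each is ``bad'' with probability at most $p_0$ conditioned on the past — with probability $1-\delta$ at most $O(p_0\cdot d\log(n/\delta))=O(\log(n/\delta))$ rounds fail, and capping each invocation at its no-failure budget $O(d\log^2 d)$ bounds their total cost by $O(d\log^2(d)\log(n/\delta))$ too. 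Thus on this event the process finishes before $B$ is exhausted, and $\MD(T)\le B=O(d\log^2(d)\log(n/\delta))$ always.

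For reliability, note that a successful round never errs, so every \emph{wrong} sign a point records comes from one of the at most $O(\log(n/\delta))$ bad rounds; by taking the constant in $C$ large enough this number is below $C/2$. Then any point that is locked in has recorded $C$ signs, strictly more than half of them from successful (hence correct) rounds, so its plurality sign is exactly $\sign(\langle x,h\rangle)$. Hence on the same probability-$(1-\delta)$ event every point is locked in and correctly so, so $T$ outputs a fully correct labeling with probability at least $1-\delta$, i.e. $T$ $\delta$-reliably computes point location. Passing from the homogeneous ternary case assumed here to the general case is routine and follows \Cref{sec:non-homogeneous}.

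The crux is reconciling the \emph{worst-case} depth $\MD$ with the weak learner's unavoidable constant failure probability. The tempting route of union-bounding over rounds at per-round failure $\approx\delta/(\#\text{rounds})$ spends a $\log(\poly(d)/\delta)$ factor inside each of the up to $\Theta(d\log n)$ rounds, which overshoots the target when $\delta$ is small or $n$ is large. Running \textsc{PartialLearn} at a mere $\Theta(1/d)$ failure rate and recovering reliability through an $\Omega(\log(n/\delta))$-fold majority vote pays that $\log(n/\delta)$ overhead exactly once, and it is already present in the target bound; together with capping the total budget and accounting for the rare failed rounds, this is the technical heart of the proof, while the weight-drop and Chernoff estimates themselves are routine.
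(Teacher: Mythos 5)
Your proof is correct in substance and rests on the same two pillars as the paper's: multiplicative down-weighting of points as they get labeled, plus a $\Theta(\log(n/\delta))$-fold majority vote per point, so that the $\log(n/\delta)$ overhead is paid once globally rather than inside each call to \textsc{PartialLearn}. The organization is genuinely different, though. The paper first packages \textsc{PartialLearn} into an intermediate learner \textsc{WeakLearn} (\Cref{lemma:weakLearner}) with \emph{constant} coverage, \emph{constant} reliability, and a \emph{fixed} per-round budget $Q=O(d\log^2 d)$, and then runs exactly $T=O(\log(n/\delta))$ reweighting rounds of it (\Cref{lemma:boosting}); the bound $\MD \le TQ$ is then immediate and deterministic, with all randomness confined to the correctness analysis. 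You instead boost \textsc{PartialLearn} directly, keeping a per-point lock-in counter and exploiting the coverage--cost coupling $\sum_i O(k_i\log^2 d)$ across all calls; this is conceptually tighter but shifts work into the depth analysis, which now needs a probabilistic termination argument together with a hard budget cap. Two spots there deserve more care than you give them. First, your Chernoff bound on the number of bad rounds is taken over an adaptively stopped sequence of random length; the hypothesis ``bad with probability at most $p_0$ conditioned on the past'' is the right one, but you should either hard-wire a round cap $N=\Theta(d\log(n/\delta))$ into the algorithm or invoke a stopping-time/Azuma version to avoid the circularity between the number of rounds and the number of failures. Second, the claim that the bad rounds number below $C/2$ ``by taking the constant in $C$ large enough'' is not quite right as stated: the number of bad rounds is $\Theta(p_0 N)=\Theta\left(p_0\, d(\log n + C)\right)$, which itself grows linearly in $C$, so what you actually need is for the constant in $p_0=\Theta(1/d)$ to be \emph{small} enough (so that $p_0 d$ is a small constant), in addition to $C=\Omega(\log(n/\delta))$ handling the $\log n$ term. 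Both are routine fixes, and with them your argument is sound and yields the stated bound; the paper's two-stage route buys a cleaner, purely deterministic depth bound at the cost of an extra intermediate lemma.
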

To make this boosting process simpler, we will start by applying the boosting process from~\cite{kane2018generalized} to build a stronger weak learner with constant coverage.
\begin{lemma}[Informal \Cref{lemma:weakLearner}: \textsc{WeakLearn}]\label{overview:w-leaner}
Let $X \subseteq \mathbb{R}^d$ be a set and $\mu$ a distribution over $X$. There exists a weak learner \textsc{WeakLearn} with the following guarantees. For any (unknown) hyperplane $h\in \mathbb{R}^d$, the following holds:
\begin{enumerate}
    \item \textsc{WeakLearn} is $.01$-reliable
    \item With probability at least $.99$, \textsc{WeakLearn} has coverage at least $.99$.
    \item \textsc{WeakLearn} uses at most $O(d\log^2(d))$ queries
\end{enumerate}
\end{lemma}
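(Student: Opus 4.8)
The plan is to build \textsc{WeakLearn} by repeatedly running \textsc{PartialLearn} (\Cref{overview:arbitrary-ww-learner}) on the \emph{residual} distribution, i.e.\ to apply the coverage-boosting scheme of Kane, Lovett, and Moran~\cite{kane2018generalized} with a little extra care to keep the total query count worst-case. Formally, \textsc{WeakLearn} maintains the set $X_i\subseteq X$ of still-unlabeled points, with $X_1=X$. At round $i$ it runs \textsc{PartialLearn} on $(X_i,\mu_i)$, where $\mu_i=\mu(\cdot\mid X_i)$, with reliability parameter $p$ fixed below; every point it labels by a value in $\{-,0,+\}$ is deleted from the unlabeled set. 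The procedure halts as soon as one of the following occurs: (a) the residual mass $\mu(X_i)$ falls below $1/100$; (b) a cumulative query budget $B=\Theta(d\log^2 d)$ is exhausted; or (c) a round cap $T_{\max}=\Theta(d)$ is reached. It then outputs the accumulated partial labeling, with $\bot$ on whatever remains unlabeled. We set $p=1/(100\,T_{\max})=\Theta(1/d)$, so that $\log(d/p)=O(\log d)$.

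Reliability is immediate: each of the at most $T_{\max}$ calls to \textsc{PartialLearn} is $p$-reliable, so by a union bound the probability that any call errs is at most $T_{\max}p\le 1/100$, giving item (1). For coverage and query count, call a run \emph{good} if every \textsc{PartialLearn} call satisfies its conclusion; this has probability at least $1-T_{\max}p\ge 99/100$ by a union bound. In a good run, round $i$ covers a $k_i/d\ge 1/(10d)$ fraction of $\mu_i$, so the residual mass contracts, $\mu(X_{i+1})\le \mu(X_i)(1-k_i/d)\le \mu(X_i)(1-1/(10d))$, and hence falls below $1/100$ within $O(d)$ rounds; with $T_{\max}$ a large enough multiple of $d$, a good run therefore halts via rule (a) with coverage at least $99/100$, proving item (2). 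For item (3), suppose a good run halts at round $J$ via rule (a), so $\mu(X_J)\ge 1/100$ and $\prod_{i<J}(1-k_i/d)\ge \mu(X_J)\ge 1/100$; taking logarithms and using $t\le-\ln(1-t)$ termwise gives $\sum_{i<J}k_i/d\le -\ln\prod_{i<J}(1-k_i/d)\le \ln 100$, and adding $k_J/d\le 1$ yields $\sum_{i\le J}k_i=O(d)$. Since round $i$ uses $O(k_i\log(d)\log(d/p))=O(k_i\log^2 d)$ queries, a good run uses $O(\log^2 d)\sum_i k_i=O(d\log^2 d)$ queries in total, so for $B$ a large enough constant times $d\log^2 d$ a good run never halts via rule (b) or (c). Finally, on \emph{every} run rules (b) and (c) cap the total at $B$ plus one last over-budget call, i.e.\ $O(d\log^2 d)$ queries deterministically, which is item (3).

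The main obstacle is the joint accounting of rounds, failure probability, and query count. The query bound hinges on the fact that \textsc{PartialLearn} cannot make progress ``for free'' once the residual mass is already tiny: because we stop at mass $1/100$, the (random) dimensions $k_i$ satisfy $\sum_i k_i=O(d)$ rather than possibly $\omega(d)$, which is exactly what keeps the total at $O(d\log^2 d)$ instead of $O(d^2\log^2 d)$. Making this airtight requires all three halting rules: the early-stopping rule (a) to bound $\sum_i k_i$, the a priori round cap $T_{\max}=O(d)$ so that $p$ (hence $\log(d/p)$) can be fixed up front, and the hard budget (b) so that item (3) is worst-case rather than merely high-probability.
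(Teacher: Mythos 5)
Your proposal is correct and follows essentially the same route as the paper: iterate \textsc{PartialLearn} on the residual distribution with reliability parameter $p=1/\poly(d)$, union-bound over the bounded number of calls for reliability, and relate $\sum_i k_i$ to both the covered mass and the query count. The only real difference is the direction of the accounting — the paper halts on a hard query budget of $5cd\log^2(d)$ and deduces that budget exhaustion forces $\sum_i k_i \ge 5d$ and hence coverage at least $1-e^{-5}$, whereas you halt once the residual mass drops below $1/100$ and deduce $\sum_i k_i=O(d)$ and hence the query bound — which is the same computation read in the opposite direction.
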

The strategy for proving \Cref{overview:w-leaner} is simple. At each step, we restrict to the set of un-inferred points and run another instance of \textsc{PartialLearn} with probability parameter $p=1/\text{poly}(d)$. Since each instance learns a $k/d$ measure of points in $O(k\log(d)\log(d/p))$ queries for some $k$, repeating this process until we have used $O(d\log^2(d))$ queries is sufficient (see \Cref{sec:boosting} for details). 

Unfortunately, continuing this strategy to learn all of $X$ would force us to set our probability parameter $p$ to be inverse polynomial in $\log(n)$, costing an additional $\log\log(n)$ factor in the depth of our LDT. Instead, we will employ the fact that \textsc{WeakLearn} can learn over any distribution to apply a boosting process that re-weights $X$ at each step. The idea is that by multiplicatively reducing the weight on points learned by a certain iteration, we can ensure that, with high probability, every point is labeled in at least $5\%$ of the learners. Since each learner is correct with $99\%$ probability, a Chernoff bound tells us that the majority label for each point will then be correct with probability at least $1-\delta$ as desired. We give the details of this process in \Cref{sec:boosting}.
\subsection{Zero-error: Verification}\label{sec:overview-verification}
\textsc{PartialLearn} (\Cref{overview:arbitrary-ww-learner}) comes with a small probability of error. In this section, we explain our strategy for verifying the weak learner to build a randomized LDT that reliably computes point location.
\begin{theorem}[\Cref{thm:zldt}]\label{overview:zldt}
Let $X \subset \mathbb{R}^d$ be an arbitrary finite set. Then there exists a randomized LDT $T$ that  reliably computes the point location problem on $X$ with expected depth:
\[
\ED(T) \leq O\left(d\log^2(d)\log(n)\right) + d \cdot 2^{O\left(\sqrt{\log(d)\log\log(d)}\right)}.
\]
\end{theorem}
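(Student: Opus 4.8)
The plan is to upgrade the $\delta$-reliable linear decision tree of \Cref{overview:bldt} into a zero-error (Las Vegas) one by composing it with a \emph{reliable verifier}, paying only an additive $d\cdot 2^{O(\sqrt{\log d\log\log d})}$ in expected depth. First, instantiate \Cref{overview:bldt} with $\delta = 1/(dn)$; this already produces a tree $T_\delta$ of worst-case depth $O(d\log^2 d\log n)$ whose output labeling $\hat c\colon X\to\{-,0,+\}$ is correct with probability $1-\delta$. After running $T_\delta$, run a verification subroutine $\mathsf{Verify}$ that is (a) \emph{reliable} --- it accepts only when $\hat c$ is correct on all of $X$ --- and (b) rejects a correct $\hat c$ with probability at most $1/\poly(d)$, using at most $O(d\log^2 d\log n) + d\cdot 2^{O(\sqrt{\log d\log\log d})}$ queries. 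If $\mathsf{Verify}$ accepts, output $\hat c$; otherwise fall back to a reliable LDT for point location, e.g.\ the $\tilde{O}(d^2\log n)$-expected-depth tree of Ezra and Sharir~\cite{ezra2019nearly}. Because $T_\delta$ errs with probability at most $\delta$ and $\mathsf{Verify}$ never accepts an incorrect labeling, the fallback is reached with probability $\delta + 1/\poly(d) = 1/\poly(d)$ and contributes $o(d\log n)$ to $\ED$; moreover every deterministic tree in the resulting distribution is correct (we emit $\hat c$ only when certified, and otherwise run a reliable tree), so the construction reliably computes point location with the claimed expected depth.

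The substance is the design of $\mathsf{Verify}$. Blindly re-querying all of $X$ is reliable but costs $n$ queries, so instead $\mathsf{Verify}$ re-derives the labeling while certifying exactly the parts of the original computation that were only probabilistically justified. Recall that $T_\delta$ is a sequence of re-weighted rounds of \textsc{WeakLearn} (\Cref{overview:w-leaner}), each a short batch of \textsc{PartialLearn} calls (\Cref{overview:arbitrary-ww-learner}), and that the partial labeling output by a single \textsc{PartialLearn} call is \emph{logically forced} by two things: (i) the answers to the $\tilde{O}(k)$ linear queries it issued, which are recorded and cannot lie; and (ii) a few sampling-based estimates --- the parameters $k,t,s$, the orthogonal splitting $\R^d = V\oplus V^\perp$ of \Cref{overview:subspace-structure}, and the $O(k)$ binary-searched values $\gamma_i\approx m(w_i,h)$ for a basis of $V^\perp$. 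The $\gamma_i$ are themselves query-certified, and the parameters $k,t,s$ enter only through one nontrivial probabilistic claim: that the high-dimensional subspace $V$ has small margin, i.e.\ $|m(v,h)|\le O(\|v\|\,d\,t/s)$ for all $v\in V$ --- equivalently, that the projection of $h$ onto $V$ is negligible. So $\mathsf{Verify}$ re-runs the learner's derivations (issuing $O(d\log^2 d\log n)$ queries again, absorbed into the first term) and must additionally certify this one geometric statement for each subspace $V$ that arose. The key observation is that "$\mathrm{proj}_V(h)$ is negligible" can be witnessed with far fewer than $\dim V$ queries by re-applying the dimension-reduction machinery (\Cref{overview:lemma-k-t-struct}, \Cref{overview:subspace-structure}, \Cref{overview:prop-iso}) \emph{inside $V$}: the task reduces, up to $\tilde{O}(k)$ more binary searches, to the same task for a recursively produced subspace of much smaller effective dimension. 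Running this recursion for $r$ levels costs a $\polylog(d)$ blow-up per level and leaves a residual instance of effective dimension about $d^{1/r}$, cleared by brute force; balancing $2^{O(r\log\log d)}$ against $2^{O(\log d/r)}$ with $r=\Theta(\sqrt{\log d/\log\log d})$ gives a one-time overhead of $d\cdot 2^{O(\sqrt{\log d\log\log d})}$ for all the subspace certifications together, which is precisely the additive term in the theorem.

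The main obstacle is making this recursive subspace certification both \emph{cheap} and \emph{reliable}. On one hand we must establish an inference-dimension-type bound for halfspaces: that "$\mathrm{proj}_V(h)$ is tiny" for a $(d-O(k))$-dimensional $V$ admits a witness of size roughly $\dim(V)^{o(1)}\cdot\poly(k)$ rather than $\dim(V)$, proved by recursively peeling off the large-margin directions of $V$ via the characterization of approximate isotropic position (\Cref{overview:prop-iso}, and the exact characterization behind it). On the other hand, unlike the learner, the verifier may not use any $h$-dependent information beyond the answers to the queries it actually makes, and it must never be fooled; so the delicate point is controlling how accuracy and error degrade across the $r=\Theta(\sqrt{\log d/\log\log d})$ recursion levels --- ensuring the recursion both terminates (the effective dimension genuinely shrinks each level) and stays sound (a bad internal sample may cause $\mathsf{Verify}$ to reject, but never to wrongly accept). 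Once this is in place, the remaining ingredients are routine: the expected-depth bookkeeping above, the up-front test handling a zero hidden hyperplane or points lying on $h$ (as in \Cref{overview:ww-learner}), and the reduction from binary queries / non-homogeneous halfspaces carried out in \Cref{sec:non-homogeneous}.
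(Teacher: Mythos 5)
Your high-level architecture is sound and close in spirit to the paper's (certify only the probabilistically justified step, namely that the small-margin subspace $V$ really has small margin relative to $x_{\text{ref}}$; aim for a recursion whose balanced solution is $2^{O(\sqrt{\log d\log\log d})}$), and the Ezra--Sharir fallback upon rejection is a legitimate variant of the paper's restart. But the substance of $\mathsf{Verify}$ --- the part you yourself flag as "the main obstacle" --- is missing, and the route you sketch does not obviously close. You propose to certify "$\mathrm{proj}_V(h)$ is negligible" by recursing the dimension-reduction machinery \emph{inside} $V$, asserting an inference-dimension-type bound that such a statement has a witness of size $\dim(V)^{o(1)}\cdot\poly(k)$. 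Two problems: (i) there is no mechanism in your sketch for the effective dimension to actually shrink --- $V$ has dimension $d-O(k)$, and applying \Cref{overview:subspace-structure} inside $V$ produces another subspace of dimension $\dim(V)-O(k')$, which need not be anywhere near $d^{1/r}$ after $r$ levels; and (ii) the small-witness claim is exactly the theorem to be proved, not a lemma you can invoke.

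The paper's mechanism is different and is the crux you should internalize. First, the witness is not local to one weak learner: each small-margin point $s\in S_j$ is \emph{eventually labeled} by some later weak learner $i>j$, and the inference in \textsc{IsoLearn} yields not just the sign of $s$ but a $2$-approximation of its margin relative to that later reference point $x_i$ (\Cref{eq:approx-relative}). Inducting backwards from the last learner, verifying $|\langle s,h\rangle|\le \frac{1}{c(d)}|\langle x_j,h\rangle|$ therefore reduces to a single comparison between \emph{reference points} $x_i$ and $x_j$, collapsing all the per-learner constraints into a matrix verification problem of size $m=O(d\log n)$ on the reference points alone (\Cref{lemma:pl-to-veri}). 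Second, the dimension reduction that drives the recursion comes from the structure of matrix verification, not from recursing inside $V$: for each row only the minimal $C_{ij}\langle x_j,h\rangle$ must be checked, and finding the argmin within a batch of $b$ columns is a point location problem on the differences $C_{ij_1}x_{j_1}-C_{ij_2}x_{j_2}$, which live in the $b$-dimensional span of $\{x_{bk+1},\dots,x_{b(k+1)}\}$ (\Cref{lemma:ver-to-pl}). It is this point-location~$\leftrightarrow$~matrix-verification ping-pong, plus the final batching of $d^2$ weak learners into $\poly(d)$-sized point location instances to avoid extra factors in $n$, that yields the additive $d\cdot 2^{O(\sqrt{\log d\log\log d})}$. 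Without the relative-margin bookkeeping and the reduction of row-minimization to lower-dimensional point location, your verifier has no proved sublinear witness, so the proposal as written has a genuine gap at its central step.
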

The core of this theorem lies in showing how verifying the  labelings which stem from naively boosting \textsc{PartialLearn} (in the sense of \Cref{overview:w-leaner}) reduces to a related combinatorial problem we term \textit{matrix verification}:
\begin{definition}[Matrix Verification]
Let $S \subset \mathbb{R}^d$ be a subset of size $m$, $h \in \mathbb{R}^d$ a hyperplane, and $\{C_{ij}\}_{i,j=1}^m$ a constraint matrix in $\R^{m \times m}$. We call the problem of determining whether for all $i,j$:
\[
\langle  x_i, h \rangle \leq  C_{ij}\langle x_j, h \rangle
\]
a \textbf{matrix verification problem} of size $m$. Further, we denote by $V(m)$ the minimum expected number of queries made across randomized algorithms which solve verification problems of size $m$ in any dimension.
\end{definition}
In fact, we show a two-way equivalence between point location and matrix verification: without too much overhead, point location reduces to a small matrix verification problem, and matrix verification reduces in turn to solving several point location problems in fewer dimensions. This recursive structure allows us to efficiently solve both problems. 

In this section, we sketch both directions of this equivalence and show how to use them to build the zero-error LDT from \Cref{intro:zero-LDT}. To start, we examine how verifying point location reduces to matrix verification. To do so, we must first examine the source of errors in the weak learners we wish to verify.
\subsubsection{The Source of Errors} 
We have not yet covered exactly where the error is introduced into \textsc{PartialLearn} (\Cref{overview:arbitrary-ww-learner}). The culprit is the \q{low margin} subspace $V$, which with some small probability, may not actually satisfy the key assumption:
\begin{align}\label{overview:eq-basis}
\forall v \in V: | \langle v, h \rangle | \leq \frac{\norm{v}}{d^{\Omega(1)}} |\langle  x_{\text{ref}}, h \rangle|.
\end{align}
In more detail, we build $V$ via the covariance matrix of a sample $S$ of $O(d\log(d))$ small margin points from $X$. Our weak learner checks probabilistically that points in $S$ satisfy:
\begin{align}\label{overview:eq-small}
\forall s \in S: | \langle s, h \rangle | \leq \frac{1}{d^{\Omega(1)}} |\langle  x_{\text{ref}}, h \rangle|,
\end{align}
which if true, verifies \Cref{overview:eq-basis} (See \Cref{lemma:subspace-structure}). The source of the error is then tied intrinsically to this set of equations for $S$. If we can verify that these equations hold, we can assure that we have not mislabeled any points.
\subsubsection{Learning Relative Margins}
We would like to show that verifying \Cref{overview:eq-small} over boosted instances of \textsc{PartialLearn} reduces to a small matrix verification problem. The key idea in this reduction lies in noticing that the sets of equations for each instance are related. This stems from the fact that our inferences in \textsc{PartialLearn} come with extra information: their relative margin to $x_{\text{ref}}$. 

Recall our inference strategy from the proof sketch of \textsc{IsoLearn}. Writing a point $x \in \mathbb{R}^d$ in terms of orthonormal bases $v_i$ for $V$ and $w_i$ for $V^\perp$, we use our knowledge of their relation to $x_{\text{ref}}$ to bound $x$'s normalized margin:
\begin{align*}
\frac{\langle x, h \rangle}{\norm{h}} &\in \frac{\langle x_{\text{ref}}, h \rangle}{\norm{h}}\left ( \sum\limits_{i=1}^{e} \gamma_i\beta_i \pm \frac{1}{d^{\Omega(1)}}\right).
\end{align*}
In \Cref{overview:sec-inference}, we used this to show that as long as the sum $\left | \sum\limits_{i=1}^{e} \gamma_i\beta_i \right |$ is large enough, we can infer the sign of $x$. However, the equation actually provides additional information that was not useful until this point: $\sum\limits_{i=1}^{e} \gamma_i\beta_i$ also acts as an approximation of the relative margin of $x$ to $x_{\text{ref}}$. 
\subsubsection{From Point Location to Matrix Verification}
It turns out that this approximation is all we need to reduce to matrix verification. Assume for a moment that $|X|=\text{poly}(d)$, and boost \textsc{PartialLearn} naively until all of $X$ is labeled. Since $X$ is small, this only takes $O(d\log(d))$ rounds. For the $i$th round, let $S_i$ denote the sample $S$ of small points, and $x_i$ denote the reference point $x_{\text{ref}}$. For each round, our  goal is to verify the $O(d\log(d))$ equations from \Cref{overview:eq-small}.

To reduce this problem to matrix verification, we will start by verifying the final weak learner (which we can do directly). Assume now, inductively, that we have verified equations for all but the first $i$ learners. Since all of $X$ is labeled in this process and $S_i$ is a sample from $X$, each $s_i \in S_i$ must be labeled by some future weak learner. By the previous observation, this further means that the relative margin of each $s_i \in S_i$ is approximately known to one of the reference points $x_j$ for $j>i$. To verify \Cref{overview:eq-small}, it is then sufficient to check a set inequalities of the form:
\begin{align}
    \langle x_i, h \rangle \leq C_{ij}\langle x_j, h \rangle
\end{align}
for some constant $C_{ij}$ (see \Cref{sec:verification} for details). These equations then form a matrix verification problem of size $m=O(d\log(d))$ on the reference points $x_i$. If we set the probability parameter $p$ of our weak learner to be $1/\text{poly}(d)$, then we can also be assured that verification will succeed with at least $50\%$ probability, allowing us to bound the expected number of queries for point location on any $X \subset \mathbb{R}^d$ of size $n=\text{poly}(d)$, denoted $T(n,d)$, by:
\begin{align}\label{overview:p-to-v}
T(d^{O(1)},d) \leq C_1d\log^3(d) + 2V(C_2d\log(d)),
\end{align}
for some constants $C_1,C_2$. The full details are covered in \Cref{lemma:pl-to-veri}.
\subsubsection{From Matrix Verification to Point Location}
Solving this matrix verification problem naively, however, still involves checking $\tilde{O}(d^2)$ equations. To escape this, observe that for each row $i$, there is a single inequality that is necessary and sufficient to verify the rest: the minimum over $j$ of $C_{ij}\langle x_j, h \rangle$. If we can efficiently compute the minimum index for each row, we can solve the problem in only $\tilde{O}(d)$ queries. 

The key observation of this section is that finding this minimum can be rephrased as a point location problem. In particular, we can compute the minimum by knowing for all distinct $i,j,k$:
\[
\sign \left(C_{ij}\langle x_{j}, h \rangle - C_{ik}\langle x_{k}, h \rangle \right) = \sign \left(\langle C_{ij}x_{j} - C_{ik} x_{k}, h \rangle \right),
\]
which is a point location problem on $\text{poly}(d)$ points in $d$ dimensions. By a similar, divide and conquer argument, we can reduce finding the minimum to several point location problems in \textit{fewer} than $d$ dimensions. Consider dividing the constraint matrix $C$ into batches of columns with indices $C_1 = \{1,\ldots,b\}, C_2 = \{b+1,\ldots, 2b\}$,... and finding the minimum index just within each subset. By directly comparing these minima, we can stitch together a global solution. Further, each sub-problem corresponds to a point location problem in $b$ dimensions, since its point set lies in the span of at most $b$ points (e.g. $\{x_1,\ldots,x_b\}$). This implies that we can bound the expected number of queries to solve matrix verification by:
\begin{align}\label{overview:v-to-p}
    V(m) \leq 2m + \frac{m^2}{b}+ \frac{2m}{b}  T(mb^2,b).
\end{align}
The full details are covered in \Cref{lemma:ver-to-pl}.
\subsubsection{Putting it all Together}
Combining Equations~\eqref{overview:p-to-v} and \eqref{overview:v-to-p} sets up a recurrence that implies the following bound on $T(n,d)$ for $n=\text{poly}(d)$:
\begin{align}\label{overview:final-p-bound}
    T(d^{O(1)},d) \leq d \cdot 2^{O\left(\sqrt{\log(d)\log\log(d)}\right)}.
\end{align}
However, since we are interested in arbitrarily large $X$, we must make one final adjustment. Instead of initially running the entire boosting process before verification, we will verify batches of $d^2$ learners at a time. While we cannot directly apply the above process to a batch of weak learners (since as described, the process requires learning \textit{all} of $X$ to work), we can turn the verification of the batch into a $d$ dimensional point location problem on \text{poly}$(d)$ points. The expected number of queries to verify a batch is then given by \Cref{overview:final-p-bound}, and combining this with the query bounds on our weak learner from \Cref{overview:arbitrary-ww-learner} proves that the overall process gives a randomized LDT satisfying the conditions of \Cref{intro:zero-LDT}. The full details are covered in \Cref{cor:veri-time} and \Cref{thm:zldt}.
\section{The Margin Oracle}\label{sec:oracle}
In the previous sections we have left aside much of the technical insight of our work in favor of providing a simple overview of our proof techniques. Here, we highlight an important technique which underlies many of our results. First, recall from \Cref{sec:overview-ww} the concept of \textit{margin}.
\begin{definition}[Margin]
Given a hyperplane $h\in\mathbb{R}^d$, the margin of a point $x \in \mathbb{R}^d$ is its inner product with $h$, $\langle x, h \rangle$. Since $h$ is not restricted to be unit length, we will often work with the \textbf{normalized margin} of $x$, $m(x,h)$:
\[
m(x,h) = \frac{\langle x, h \rangle}{\norm{h}}.
\]
\end{definition}
\subsection{The Margin Norm}
Previous margin-based strategies~\cite{kane2017,kane2018generalized} for point location revolved around the ability to compare the margin of two points $a,b \in \mathbb{R}^d$. While useful, this technique alone will be too inefficient for our purposes. In our work, given finite subsets $A,B \subset \mathbb{R}^d$, we will often need to approximately compare the margins of points in $A$ to those in $B$ using only a single query. This technique, for instance, is integral to building the \q{small margin} subspace $V$ we need for dimensionality reduction. To this effect, given a subset $S \subset \mathbb{R}^d$, we will define a single value called the margin norm that encompasses the information we need. 

\begin{definition}
We define the margin norm of a finite subset $S \subset \mathbb{R}^d$ with respect to a hyperplane $h \in \mathbb{R}^d$ as:
\[
\norm{S}_h \coloneqq \sqrt{\sum\limits_{x \in S} m(x,h)^2}
\]
\end{definition}
Notice that if we define $\vec{S}_h=(m(x,h):x \in S)$ to be the vector of normalized margins of points in $S$, then the margin norm is just the L2-norm of $\vec{S}_h$. The margin norm provides us with a single value we can compare between subsets, but we still need some way of comparing these values via linear queries.
\subsection{The Margin Oracle}
With this in mind, we introduce the margin oracle. Given a subset $S$, the margin oracle produces a ``representative'' point $x_S$ that has normalized margin approximately $\norm{S}_h$. 
\begin{definition}
Let $h \in \mathbb{R}^d$ be a hyperplane, $S \subset \mathbb{R}^d$ a finite set, and $\lambda\geq1$ some constant. The margin oracle $\bigo_\lambda(S)$ outputs a $\lambda$-representative point $x_S \in \mathbb{R}^d$ in the sense that:
\[
\lambda^{-1} \norm{S}_h \leq |m(x_S,h)| \leq \lambda\norm{S}_h.
\]
\end{definition}
It turns out that access to a margin oracle is sufficient to build \textsc{IsoLearn} (\Cref{overview:ww-learner}), and thus to nearly optimally reliably compute point location. Unfortunately, LDTs do not generally come equipped with such an oracle, so we need a way to simulate it through more standard methods.
\subsection{Simulating the Margin Oracle using Gaussian Combinations}
In fact, the margin oracle is remarkably easy to simulate probabilistically. Our ability to do so stems from a simple observation: the normalized margin of a random Gaussian combination of elements in $S$ is itself a Gaussian variable whose variance is $\norm{S}_h^2$.
Let $S=\{x_1,\ldots,x_{|S|}\}$.
We define the Gaussian combination $x_S$ as:
\[
x_S = \sum\limits_{i=1}^{|S|} x_ig_i,
\]
where each $g_i$ is independently drawn from a standard Gaussian. The margin of $x_S$, which we denote $X_S$, is a Gaussian whose variance is the squared margin norm of $S$:
\[
X_S = \left \langle \sum\limits_{i=1}^{|S|} x_ig_i, \frac{h}{\norm{h}} \right \rangle \sim \mathcal N \left (0, \norm{S}^2_h \right ).
\]
By standard Gaussian concentration and anti-concentration bounds, $x_S$ will satisfy the requirements of an output from $\bigo_\lambda(S)$ with probability at least $1-O(\lambda^{-1})$. For completeness, we include the proof here.
\begin{lemma}\label{lemma:simulate-margin-oracle}
Let $S \subseteq \mathbb{R}^d$, $h\in \mathbb{R}^d$ be a hyperplane, and set $\lambda \geq 5$. We can simulate $\bigo_\lambda(S)$ with probability at least $1-\frac{5}{\lambda}$.
\end{lemma}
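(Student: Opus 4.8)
The plan is to show that the Gaussian combination $x_S = \sum_{i=1}^{|S|} x_i g_i$ already described in the text meets the requirements of $\bigo_\lambda(S)$ with the claimed probability, so no further construction is needed. The first step is the distributional identity the text has essentially recorded: since the $g_i$ are i.i.d.\ standard Gaussians, the normalized margin $m(x_S,h) = \langle x_S, h/\norm{h}\rangle = \sum_{i} g_i\, m(x_i,h)$ is a linear combination of independent Gaussians, hence itself a centered Gaussian with variance $\sum_i m(x_i,h)^2 = \norm{S}_h^2$. Writing $\sigma := \norm{S}_h$, it then suffices to lower bound the probability of the event $\lambda^{-1}\sigma \le |m(x_S,h)| \le \lambda\sigma$.

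Before invoking any Gaussian estimate I would dispose of the degenerate case $\sigma = 0$: then every point of $S$ has zero margin, so $m(x_S,h) = \sum_i g_i\, m(x_i,h) = 0$ deterministically, and the defining double inequality $0 \le |m(x_S,h)| \le 0$ holds with probability $1$. Assuming $\sigma > 0$, set $Z := m(x_S,h)/\sigma \sim \mathcal N(0,1)$; the oracle succeeds precisely when $\lambda^{-1} \le |Z| \le \lambda$, so I must bound the two failure events $\{|Z| > \lambda\}$ and $\{|Z| < \lambda^{-1}\}$.

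For the first I would use a standard Gaussian tail bound, e.g.\ $\Pr[|Z| > \lambda] \le e^{-\lambda^2/2}$, which for $\lambda \ge 5$ is negligible compared to $1/\lambda$. For the second (an anti-concentration statement) I would use that the standard Gaussian density is bounded above by $1/\sqrt{2\pi}$, which gives $\Pr[|Z| < \lambda^{-1}] \le 2\lambda^{-1}/\sqrt{2\pi} < 1/\lambda$. A union bound then yields total failure probability at most $1/\lambda + e^{-\lambda^2/2} \le 5/\lambda$, which is the claim; in fact this argument already gives the stronger bound $2/\lambda$, and the $5/\lambda$ in the statement leaves ample slack.

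Since the argument reduces to two elementary one-dimensional Gaussian facts, I do not anticipate a genuine obstacle. The only points needing a moment's care are the anti-concentration estimate — concentration alone says nothing about $|Z|$ being too small, so one really does need the density bound — and remembering the degenerate case $\norm{S}_h = 0$ so that the statement holds verbatim for every finite $S$.
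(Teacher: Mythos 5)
Your proposal is correct and follows essentially the same argument as the paper: form the Gaussian combination, note its normalized margin is $\mathcal N(0,\norm{S}_h^2)$, and combine a standard Gaussian tail bound with an anti-concentration (density) bound via a union bound. Your additional handling of the degenerate case $\norm{S}_h=0$ and the slightly sharper constants are fine but not a different method.
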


\begin{proof}
We sample $x_S = \sum x_i g_i$, whose normalized margin $X_S$ satisfies
$X_S \sim \mathcal N\left(0, \norm{S}^2_h\right)$.
Let $\sigma = \norm{S}_h$. We bound the range that $X_S$ lies in with high probability via standard concentration and anti-concentration bounds for Gaussian variables. In particular, a variable $X_S$ drawn from $\mathcal N(0,\sigma^2)$ satisfies:
\begin{align*}
    Pr[|X_S| \geq \lambda\sigma] \leq 2e^{-\frac{\lambda^2}{2}} \leq \frac{5}{2\lambda}
\end{align*}
and 
\begin{align*}
    Pr[|X_S| \leq \lambda^{-1} \sigma ] = \text{erf}\left(\frac{1}{\sqrt{2}\lambda}\right) \leq \frac{5}{2\lambda}
\end{align*}
\end{proof}

\section{Weak Learner}\label{sec:weakLearner}
Now that we have covered our basic proof structure and techniques, we focus on building the core weak learner, \textsc{PartialLearn}. Recall from \Cref{sec:overview-ww} that a weak learner is a randomized algorithm $A$ that makes linear queries and returns (abusing notation) a partial labeling $A: X \to \{-,0,+,\bot\}$ in which $\bot$ is interpreted as ``don't know''.
Before introducing the learner itself, we recall as well some notation and terminology.
\begin{definition}
Let $X \subset \mathbb{R}^d$ be a finite set of unit vectors, and $\mu$ a distribution over $X$. We denote such a pair by $(X,\mu)$. In cases where the ambient dimension is not clear from context, we denote it by $(X \subset \R^d, \mu)$.
\end{definition}
Note that assuming $X$ consists of unit vectors does not lose any generality for point location or active learning homogeneous hyperplanes, since normalizing points does not change their label. For notational simplicity, we often refer to a fraction of $X$ with respect to $\mu$ simply as a fraction of $(X,\mu)$. For instance, it is useful when dealing with weak learners to be able to talk about the fraction of $(X,\mu)$ they label. We call this value their \textit{coverage}.
\begin{definition}[Coverage]
Given a pair $(X, \mu)$ and $A$ a weak learner, $A$'s coverage with respect to $(X,\mu)$, $C_{\mu}(A)$, is a random variable (over the internal randomness of $A$) denoting the measure of $X$ learned:
\begin{align*}
    C_{\mu}(A) = \Pr_{x \sim \mu}[A(x) \neq \bot].
\end{align*}
\end{definition}
We will also need a way to talk about the correctness and overall coverage of a weak learner. For this, we adapt notation introduced in~\cite{hopkins2020noise}:
\begin{definition}[Reliability]
Given a pair $(X,\mu)$, we say that a learning algorithm $A$ is $p$-reliable if with probability $1-p$ over the internal randomness of $A$, the labeling output by $A$ makes no errors. If $A$ never makes an error, we call it reliable.
\end{definition}
With these definitions out of the way, we present \textsc{PartialLearn}, our core weak learner:
\begin{theorem}[\textsc{PartialLearn} (\Cref{overview:arbitrary-ww-learner})\label{thm:arbitrary-ww-learner}]
Given a pair $(X \subset \R^d,\mu)$ and $p>0$, there exists a weak learner \textsc{PartialLearn} with the following guarantees. For any (unknown) hyperplane $h \in \R^d$, with probability $1-p$ there exists $\frac{1}{10} \leq k \leq d$ such that:
\begin{enumerate}
    \item \textsc{PartialLearn} has coverage at least $k/d$
    \item \textsc{PartialLearn} makes fewer than $O(k\log(d)\log(d/p))$ queries.
\end{enumerate}
Further, \textsc{PartialLearn} satisfies the following global guarantees:
\begin{enumerate}
    \item \textsc{PartialLearn} is $p$-reliable.
    \item \textsc{PartialLearn} makes at most $O(d\log(d)\log(d/p))$ queries.
\end{enumerate}
\end{theorem}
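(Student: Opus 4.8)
The plan is to combine the intermediary learner \textsc{IsoLearn} (\Cref{overview:ww-learner}) with the decomposition result of \Cref{overview:prop-iso}, exploiting the invariance of point location under invertible linear maps. First I would invoke \Cref{overview:prop-iso} on the pair $(X \subset \R^d,\mu)$ with a small accuracy parameter (say $\varepsilon = 1/4$) to obtain a dimension $1 \le k_0 \le d$, a $k_0$-dimensional subspace $V$ with $\mu(X \cap V) \ge k_0/d$, and an invertible linear transformation $T \colon V \to V$ so that the renormalized pair $((X \cap V)_T,(\mu|_{X\cap V})_T)$ sits in $1/4$-isotropic position inside $V \cong \R^{k_0}$. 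The crucial point, due to Kane, Lovett, and Moran, is that a linear query $\sign(\langle x_T, h'\rangle)$ on the transformed instance with $h' = (T^{-1})^\top h$ can be simulated by the query $\sign(\langle x/\norm{T(x)}, h\rangle)$ on the original instance — that is, by querying a rescaled version of $x$ — so any LDT for the transformed problem lifts to an LDT for the original at no extra query cost, and correctness transfers back verbatim.

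Next I would run \textsc{IsoLearn} (with probability parameter $p$) on the transformed pair in ambient dimension $k_0$. By \Cref{overview:ww-learner}, with probability $1-p$ there exists $\frac{1}{10} \le k' \le k_0$ such that \textsc{IsoLearn} covers a $k'/k_0$ measure of $(X\cap V)_T$ using $O(k'\log(k_0)\log(k_0/p))$ queries. Translating back, since $\mu(X\cap V) \ge k_0/d$ and the conditional measure of the covered set is at least $k'/k_0$, the \emph{unconditional} $\mu$-measure of the points we label is at least $(k_0/d)\cdot(k'/k_0) = k'/d$. Setting $k := k'$ (so $\frac{1}{10} \le k \le k_0 \le d$) gives coverage at least $k/d$ with query count $O(k\log(k_0)\log(k_0/p)) = O(k\log(d)\log(d/p))$, which is exactly the per-input guarantee claimed in items 1 and 2. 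The $p$-reliability of \textsc{PartialLearn} is inherited directly from the $p$-reliability of \textsc{IsoLearn}, since the only randomness is that of \textsc{IsoLearn} (the choice of $V$, $T$ in \Cref{overview:prop-iso} is existential, not algorithmic — though in the actual body one must use an algorithmic version, see below) and mislabelings in the transformed world correspond exactly to mislabelings in the original world.

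For the global guarantees: reliability is as above, and the worst-case $O(d\log(d)\log(d/p))$ query bound follows because \textsc{IsoLearn}'s query count is monotone in its ambient dimension and $k \le d$ always, so even on the bad event (probability $p$) or when $k$ is as large as $d$, we never exceed $O(d\log(d)\log(d/p))$ queries — one should double-check that \textsc{IsoLearn} has a deterministic worst-case query ceiling of this form, which it does since each of its three phases (finding structure, dimensionality reduction, inference) has a hard cap, and truncating/aborting at $O(d\log(d)\log(d/p))$ queries and outputting $\bot$ everywhere costs nothing in reliability.

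The main obstacle I anticipate is that \Cref{overview:prop-iso} as stated is purely existential, whereas \textsc{PartialLearn} must be an actual algorithm: we need to \emph{compute} the dense subspace $V$ and the Barthe-type scaling $T$ without making any linear queries (these depend only on $X$ and $\mu$, not on $h$, so this is a pure computation, but one must argue it is well-defined and, for the zero-error application, produces an \emph{exact} isotropic position or a sufficiently good approximation). A second, more delicate point is the interface with the margin oracle: \textsc{IsoLearn} as sketched assumes access to $\bigo_\lambda$, which must be simulated via Gaussian combinations (\Cref{lemma:simulate-margin-oracle}) with failure probability $O(\lambda^{-1})$; to keep the overall failure probability below $p$ while making $O(k\log(d))$ oracle calls, one sets $\lambda = \poly(d/p)$, and must verify the resulting $\log\lambda = O(\log(d/p))$ blowup is already absorbed into the $\log(d/p)$ factor in the query bound rather than multiplying it — this bookkeeping is where the constants and log-factors get pinned down, and it is carried out in detail in \Cref{sec:arb-ww-learner}.
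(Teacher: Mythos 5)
Your proposal is correct and follows essentially the same route as the paper: invoke \Cref{cor:iso} to find a dense $k_0$-dimensional subspace admitting a $1/4$-isotropic transform, use the Kane--Lovett--Moran invariance to simulate queries on the transformed instance by rescaling, run \textsc{IsoLearn} there with a Gaussian-simulated margin oracle at $\lambda=\poly(d)/p$, and translate the $k'/k_0$ conditional coverage back to $k'/d$ unconditional coverage. The bookkeeping concerns you flag at the end (splitting the failure budget between \textsc{IsoLearn} and the $O(d\log(d)\log(d/p))$ oracle calls, and absorbing $\log\lambda$ into the $\log(d/p)$ factor) are exactly how the paper closes the argument.
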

We break the construction and correctness of \textsc{PartialLearn} into two sections. First, we present \textsc{IsoLearn}, a weak learner for pairs $(X, \mu)$ which satisfy a certain structural condition called approximate isotropic position.
\begin{definition}[$\varepsilon$-isotropic Position]
A pair $(X\subset \mathbb{R}^d,\mu)$ lies in $\varepsilon$-isotropic position if:
\[
    \forall v \in \mathbb{R}^d:
    (1-\varepsilon)\frac{1}{d} \leq \sum\limits_{x \in X} \mu(x)\frac{\langle x, v \rangle^2}{\norm{v}^2} \leq (1+\varepsilon)\frac{1}{d}
\]
\end{definition}
Second, by proving a novel characterization of when such a pair can be transformed into approximate isotropic position based upon the version of Barthe's theorem~\cite{barthe1998reverse} presented in~\cite{dvir2014breaking}, we show how \textsc{IsoLearn} can be used as a subroutine to obtain \textsc{PartialLearn}.
\subsection{Weak Learner for $X$ in $1/4$-Isotropic Position}
In this section we build \textsc{IsoLearn}, an intermediary weak learner for $(X,\mu)$ in $1/4$-isotropic position. 
We deviate slightly from the learner stated in our proof overview (\Cref{overview:ww-learner}), which will follow as a corollary. Here we allow our learner access to a margin oracle $\bigo_{\lambda}(\cdot)$. As a result, this version of the weak learner is fully reliable (i.e. it never makes an error).

\begin{lemma}[\textsc{IsoLearn} (\Cref{overview:ww-learner})]\label{lemma:isotropic-oracle-learner}
Let the pair $(X \subset \R^d,\mu)$ be in $1/4$-isotropic position, and let $p>0$. There exists a weak learner \textsc{IsoLearn} with access to a margin oracle $\bigo_{\lambda}(\cdot)$ with the following guarantees. For any (unknown) hyperplane $h\in \mathbb{R}^d$, with probability $1-p$ there exists $\frac{1}{10} \leq k \leq d$ such that:
\begin{enumerate}
    \item \textsc{IsoLearn} has coverage at least $k/d$.
    \item \textsc{IsoLearn} makes fewer than $O(k\log(d)\log(d\lambda/p))$ queries.
\end{enumerate}
Further, \textsc{IsoLearn} satisfies the following global guarantees:
\begin{enumerate}
    \item \textsc{IsoLearn} is reliable.
    \item \textsc{IsoLearn} makes at most $O(d\log(d)\log(d\lambda/p))$ queries.
    \item \textsc{IsoLearn} makes at most $O(d\log(d)\log(d/p))$ calls to $\bigo_{\lambda}(\cdot)$.
\end{enumerate}
\end{lemma}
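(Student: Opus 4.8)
\textbf{Proof proposal for \Cref{lemma:isotropic-oracle-learner} (\textsc{IsoLearn}).}

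The plan is to assemble \textsc{IsoLearn} from the three building blocks sketched in the overview: (i) an algorithmic version of the $(k,t,s)$-structure lemma (\Cref{overview:lemma-k-t-struct}) that actually \emph{finds} usable parameters, (ii) an algorithmic version of the subspace decomposition (\Cref{overview:subspace-structure}) that produces a concrete low-margin subspace $V$ from a random sample of small-margin points, and (iii) the binary-search inference routine that labels all points of normalized margin $\ge t$ using only the $e=O(k)$ coordinates in $V^\perp$. Throughout we fix $s = d^{\Theta(1)}$ (say $s=d^{10}$) so that the margin gap is polynomial; this makes $t \ge s^{-O(\log d)} = d^{-O(\log d)}$, hence $\log(t^{-1}) = O(\log^2 d)$, which is where the extra $\log d$ factor in the query bound comes from. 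First I would set up the reference point: using the margin oracle on a random Gaussian combination of a carefully chosen subset of $X$ (the mechanism deferred to \Cref{sec:find-struct}), obtain $x_{\text{ref}}$ whose normalized margin is, up to a factor of $\lambda$, equal to the margin norm of that subset; by the isotropic assumption and the KLM observation that $\sum_{x}\mu(x)m(x,h)^2 \ge (1-\varepsilon)/d \cdot \|h\|^2$, a suitable subset yields a reference point of normalized margin $\Theta(t \cdot \|h\|/\|h\|)$ for the $t$ output by the structure-finding step. This costs $O(\log(d\lambda/p))$ oracle calls and queries for the probabilistic check that $x_{\text{ref}}$ behaves as promised.

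Next I would run the structure-finding procedure: geometrically scan candidate thresholds $t_j = s^{-j}$ for $j = 0,1,\dots,O(\log d)$, at each scale estimating (via margin-oracle calls on random subsamples of $X$, using Gaussian anti/concentration as in \Cref{lemma:simulate-margin-oracle}) the measure of points with $|m(x,h)| \ge t_j$. Since $\sum_x \mu(x) m(x,h)^2/\|h\|^2 \le (1+\varepsilon)/d$ and the largest margins are $\le 1$, a pigeonhole/dyadic argument over the $O(\log d)$ scales forces the existence of some scale where the measure jumps by less than a factor of $5$ (i.e.\ $\Pr[|m|\ge t] \ge k/d$ while $\Pr[|m| \le t/s] \ge 1 - 5k/d$), giving a $(k,t,s)$-structured configuration with $k \ge 1$; normalizing, one gets $k/10 \le \text{coverage fraction}$ cleanly. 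I would take a union bound over the $O(\log d)$ scale-estimates each with failure probability $p/\poly(d)$, contributing $O(\log d \log(d\lambda/p))$ queries. Then, if $k < d/10$, sample $O(d\log d)$ points of margin $\le t/s$ (identified using $x_{\text{ref}}$ and single queries of the form ``$\sign(\langle s - \alpha x_{\text{ref}}, h\rangle)$''), form their empirical covariance, and let $V$ be the span of its top $d - O(k)$ eigenvectors; a matrix-concentration / rank argument shows every unit $v \in V$ has $|m(v,h)| \le O(d \cdot t/s)$ with probability $1-p/\poly(d)$, since if some direction in $V$ had large margin it would contradict the sampled points having small margin together with the isotropic spread. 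Set $V^\perp$ (dimension $e = O(k)$) and an orthonormal basis $w_1,\dots,w_e$.

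Finally, for each $w_i$, binary-search on $\alpha \in [-t^{-1}, t^{-1}]$ (granularity $d^{-\Omega(1)}$) by querying $\sign(\langle w_i - \alpha x_{\text{ref}}, h\rangle)$ to learn $\gamma_i := m(w_i,h)/m(x_{\text{ref}},h)$ up to additive $d^{-\Omega(1)}$; this is $O(\log(t^{-1} d^{\Omega(1)})) = O(\log^2 d)$ queries per $w_i$, so $O(k\log^2 d) = O(k\log d\cdot\log d)$ total, absorbed into $O(k\log d\log(d\lambda/p))$. For each $x \in X$, expand $x$ in the combined basis, form the estimate $\hat m(x) := m(x_{\text{ref}},h)\sum_{i\le e}\beta_i\gamma_i$; because the $V$-part contributes at most $\|x\| \cdot t/d^{\Omega(1)} \cdot \|h\|$ to the true margin (from property (2) of $V$ and $s = d^{\Theta(1)}$), if $|\sum \beta_i\gamma_i| \ge d^{-\Omega(1)}$ we output $\sign(\hat m(x))$ and otherwise output $\bot$; every point with $|m(x,h)| \ge t$ falls in the former case, so coverage $\ge k/d$. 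Reliability is immediate in this oracle-equipped version because we only commit a label when the estimate's magnitude provably dominates the worst-case error — no probabilistic event is needed for correctness, only for the \emph{coverage} and query guarantees, which is exactly the asymmetry asserted. The main obstacle I expect is step (ii): proving that the empirical covariance of $O(d\log d)$ small-margin points, in an ambient space that is only in $1/4$-isotropic position (not exactly isotropic), genuinely yields a $(d-O(k))$-dimensional subspace on which \emph{every} unit vector has margin $O(d\,t/s)$ — one must control the interaction between the matrix concentration error and the fact that we are throwing away only an $O(k)$-dimensional top eigenspace, and the factor-$d$ slack in the margin bound is what the polynomial choice of $s$ is there to pay for. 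The bookkeeping of union bounds across the oracle simulation, the scale scan, and the sampling step, all kept at $p/\poly(d)$, is routine once the structural statements are in hand.
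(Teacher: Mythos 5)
Your overall architecture matches the paper's: find $(k,t,s)$-structure via a reference point, build a high-dimensional small-margin subspace $V$ from sampled small-margin points, and infer large-margin labels by binary-searching the relative margins of an orthonormal basis of $V^{\perp}$ against $x_{\text{ref}}$. The inference step and the existence (pigeonhole) argument for the margin gap are fine. However, there is a genuine gap in how you propose to \emph{find} the structure and the subspace within the query budget, and this gap is fatal to the per-outcome bound of $O(k\log(d)\log(d\lambda/p))$ queries, which is the entire point of the lemma (it is what couples the number of queries to the fraction of points labeled). First, your structure search ``estimates the measure of points with $|m(x,h)|\ge t_j$'' at each of $O(\log d)$ threshold scales. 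Estimating a tail probability that may be as small as $\Theta(1/d)$ up to a constant factor requires $\Omega(d)$ samples per scale, i.e.\ $\Omega(d\log d)$ queries before you even know $k$ -- unacceptable when $k=O(1)$. (Also, absolute thresholds $t_j=s^{-j}$ are not directly queryable since $\|h\|$ is unknown; all comparisons must be relative to a representative point.) The paper's \textsc{StructureSearch} avoids both issues by scanning over \emph{sample sizes} $m/2^i$ rather than thresholds, and at each scale computing the median of the margin norms of $O(\log(\log(d)/p))$ subsamples via single margin-oracle calls; \Cref{claim:u-median} converts a $u$-median of the margin norm of a size-$z$ sample into the two tail bounds defining $(k,t,s)$-structure, at a total cost of only $O(\log(d)\log(d/p))$ queries and oracle calls, independent of $k$.

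The same efficiency gap recurs in your dimensionality-reduction step: you identify and verify $O(d\log d)$ small-margin points individually with queries of the form $\sign(\langle s-\alpha x_{\text{ref}},h\rangle)$, which costs $\Omega(d\log d)$ queries. The paper's \textsc{DimReduce} instead draws $O(k\log d)$ \emph{batches} of size $O(d/k)$ and certifies each entire batch with a single margin-oracle call plus $O(1)$ queries (the margin norm of the batch being at most $(\lambda^{2}/\ell)|m(x_{\text{ref}},h)|$ certifies that every point in it is small), for a total of $O(k\log(d)\log(1/p))$ queries. This batch certification is also what makes your reliability claim actually go through: as written, you assert the small-margin property of $V$ only ``with probability $1-p/\poly(d)$,'' but reliability requires that every inequality the inference relies on be query-certified, with randomness affecting only coverage and query count. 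Finally, two minor omissions: the degenerate case $h=0$ must be tested up front (normalized margin is undefined there), and the case $k\ge d/10$ should be handled by taking $V^{\perp}=\R^d$ with no dimensionality reduction.
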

\textsc{IsoLearn} works by exploiting certain margin-based structure in pairs $(X, \mu)$ that lie in approximate isotropic position. Since normalized margin is ill-defined when $h$ is the zero vector, we will assume for the moment that $h$ is non-zero and deal separately with this degenerate case in the proof of \Cref{lemma:isotropic-oracle-learner}. Recall from \Cref{sec:overview-ww} that a pair $(X,\mu)$ is $(k,t,s)$-structured if there exists a gap in normalized margin between parameters $t$ and $t/s$:
\begin{definition}[$(k,t,s,c)$-structured]
We call a pair $(X \subset \R^d,\mu)$ $(k,t,s,c)$-structured with respect to a hyperplane $h \in \R^d \setminus \{0\}$ if it satisfies the following properties:
\begin{enumerate}
    \item Many points in $X$ have normalized margin at least $t$:
    \begin{align*}
        \Pr_{x \sim \mu}\left[ |m(x,h)| \geq t \right] \geq \frac{k}{d}
    \end{align*}
    \item Many points in $X$ have normalized margin at most $t/s$:
    \begin{align*}
        \Pr_{x \sim \mu}\left[ |m(x,h)| \leq t/s \right] \geq 1-\frac{ck}{d}
    \end{align*}
\end{enumerate}
When clear from context, we often drop the phrase ``with respect to $h$''. Similarly, throughout the paper we will use the shorthand $(k,t,s)$-structured for $(k,t,s,5)$-structured.
\end{definition}
We divide our proof into three sections for clarity. First, we focus on finding the $(k,t,s)$-structure in $(X,\mu)$. Second, we discuss how to employ this structure to build the high-dimensional \q{low margin} subspace $V$ used for dimensionality reduction. Finally, we close out the proof by covering how we infer the labels of a $\frac{k}{d}$ fraction of $(X,\mu)$.
\subsubsection{Finding Structure in $(X,\mu)$}\label{sec:find-struct}
Before discussing the algorithmic aspect, we first prove that that any $(X,\mu)$ in $1/4$-isotropic will always be $(k,t,s)$-structured. 
\begin{lemma}[\Cref{overview:lemma-k-t-struct}]\label{lemma:k-t-structure}
Let the pair $(X \subset \R^d,\mu)$ be in $1/4$-isotropic position. Then for any hyperplane $h\in \mathbb{R}^d \setminus \{0\}$ and $s>2$, there exist parameters $k$ and $t$ satisfying:
\begin{enumerate}
    \item $1 \leq k \leq d$.
    \item $t \geq s^{-O(\log(d))}$.
    \item $(X,\mu)$ is $(k,t,s)$-structured.
\end{enumerate}
\end{lemma}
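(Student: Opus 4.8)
The plan is a pigeonhole argument over a geometric ladder of margin scales, driven by the single consequence of isotropy that is actually needed. Applying the $\varepsilon=1/4$ isotropic inequality to the direction $v=h$ gives
\[
\Ebb_{x\sim\mu}\!\big[m(x,h)^2\big] \;=\; \sum_{x\in X}\mu(x)\,\frac{\langle x,h\rangle^2}{\norm{h}^2}\;\in\;\Big[\tfrac{3}{4d},\,\tfrac{5}{4d}\Big],
\]
and since every $x\in X$ is a unit vector we also have $|m(x,h)|\le 1$. Fix $s>2$, set $t_j := s^{-j}$ for $j\ge 0$, and let $q_j := \Pr_{x\sim\mu}[\,|m(x,h)|\ge t_j\,]$, a nondecreasing sequence in $[0,1]$. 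The elementary observation that organizes the whole proof is this: if there is an index $j$ with $q_j \ge 1/d$ and $q_{j+1}\le 5q_j$, then $(X,\mu)$ is $(k,t,s)=(k,t,s,5)$-structured for $k := d\,q_j\in[1,d]$ and $t := t_j$. Indeed the large-margin condition holds with equality by the definition of $q_j$, and the small-margin condition reads $\Pr[|m(x,h)|\le t/s]=\Pr[|m(x,h)|\le t_{j+1}]=1-\Pr[|m(x,h)|>t_{j+1}]\ge 1-q_{j+1}\ge 1-5q_j=1-5k/d$. So the task reduces to producing such an index $j$ with, additionally, $j=O(\log d)$, which yields $t=s^{-j}\ge s^{-O(\log d)}$.

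First I would show that the second moment cannot be concentrated at negligible margins, which confines the search to $O(\log d)$ scales and supplies an ``anchor'' scale carrying nonnegligible mass. Writing $\Ebb[m^2]$ as a sum over the bands $\{t_{j+1}\le |m(x,h)|<t_j\}$ and bounding $m(x,h)^2\le t_j^2$ on band $j$ gives $\tfrac{3}{4d}\le\Ebb[m^2]\le\sum_{j\ge 0}(q_{j+1}-q_j)\,s^{-2j}$; a summation-by-parts rearrangement (using $s^{-2}<\tfrac14$) turns this into a bound of the form $\sum_{j\ge 1} q_j\, s^{-2(j-1)}\ge \tfrac{3}{4d}$. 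The weights $s^{-2(j-1)}$ sum to less than $2$ and decay geometrically, so truncating the tail past some $J_0=O(\log d)$ discards at most half of the total; hence some scale $j^\dagger\le J_0$ satisfies $q_{j^\dagger}=\Omega(1/d)$.

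The core step is then a telescoping contradiction. Put $J := j^\dagger + C\log d$ for a suitable constant $C$ (still $O(\log d)$) and suppose, toward a contradiction, that no index $j\in\{j^\dagger,\dots,J-1\}$ has $q_{j+1}\le 5q_j$, i.e.\ $q_{j+1}>5q_j$ for every such $j$. Chaining this geometric growth from the anchor gives $q_J > 5^{\,C\log d}\,q_{j^\dagger}=5^{\Omega(\log d)}\cdot\Omega(1/d)$, which exceeds $1$ once $C$ is large enough, contradicting $q_J\le 1$. So some scale $j^*=O(\log d)$ has $q_{j^*+1}\le 5q_{j^*}$, while $q_{j^*}\ge q_{j^\dagger}$ is a fixed positive multiple of $1/d$; taking $k:=d\,q_{j^*}$ and $t:=s^{-j^*}$ and folding the numerical constants into the hidden constant of ``$O(\log d)$'' completes the argument.

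The main obstacle is the second bullet, $t\ge s^{-O(\log d)}$: a bare pigeonhole over all scales $[0,\infty)$ would only produce a scale with a bounded multiplicative jump, with no polynomial-in-$d$ lower bound on its value. The isotropic lower bound on $\Ebb[m(x,h)^2]$ is precisely what prevents the relevant mass from escaping to scales below $s^{-\Omega(\log d)}$, and combining that truncation with the geometric-growth contradiction is the heart of the proof. A residual delicate point is the boundary regime where $h$ is nearly orthogonal to the $\mu$-support of $X$, so that $\Pr[|m(x,h)|>0]$ is itself only $\Theta(1/d)$: there the requirement $k\ge 1$ is at the edge of what the two-moment estimate gives, and one handles it with the same estimate (with the slack absorbed into constants), analogously to how the degenerate case $h=\zo$ is dispatched separately.
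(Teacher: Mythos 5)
Your proof is correct, but it takes a genuinely different route from the paper's. You work directly with the tail CDF $q_j=\Pr_{x\sim\mu}[|m(x,h)|\ge s^{-j}]$ on a geometric ladder of scales and run a deterministic pigeonhole: the isotropic second-moment lower bound $\Ebb[m(x,h)^2]\ge\frac{3}{4d}$ forces an anchor scale $j^\dagger=O(\log d)$ with $q_{j^\dagger}=\Omega(1/d)$, and since $q_j\le 1$ cannot grow by a factor of $5$ more than $O(\log d)$ times, some scale must exhibit the required $(k,t,s)$-gap. The paper instead defines $M_i$ as the \emph{median of the maximum normalized margin} over random samples of size $\Theta(d)/2^i$, shows $M_0=\Omega(1/\sqrt d)$ from the same second-moment bound, and argues that either some consecutive medians drop by a factor $s$ (Margin Gap) or none do (Large Margin). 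The two arguments hinge on the same two facts, but the paper's sampling-and-median formulation is chosen so that the identical template becomes the query-efficient algorithm \textsc{StructureSearch} (\Cref{lemma:t-k}), where medians can be estimated via the margin oracle; your version is more elementary and purely existential, which fully suffices for the lemma as stated but would need to be re-derived in median form for the algorithmic variant. Two small points: your band decomposition of $\Ebb[m^2]$ drops the mass at $|m|=1$ (the $q_0$ term), which only perturbs constants; and your anchor gives $q_{j^\dagger}=c/d$ with possibly $c<1$, so $k=dq_{j^\dagger}$ is only $\Omega(1)$ rather than $\ge 1$ --- but the paper's own proof has exactly the same slack ($k=\Omega(2^{i^*})$ and a $3/(8d)$ fraction in its opening claim), and all downstream uses only require $k\ge 1/10$, so this is a shared constant-level looseness rather than a gap in your argument.
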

\begin{proof}
Starting from samples of size $m$, for $m=\Theta(d)$ some sufficiently large power of $2$, we examine the median of the maximum normalized margin of points from a sample of size $m/2^i$, $0 \leq i \leq \log(m)$. For any positive integer $i$, let $S_i$ be a sample of $m/2^i$ points drawn independently from $\mu$, and let $X_i$ be the random variable whose value is the maximum normalized margin of a point in $S_i$. We denote by $M_i$ the median of $X_i$. 

To begin, we note an important property of our first median, $M_0$:
\begin{claim}
As long as $m=\Theta(d)$ is sufficiently large, $M_0 \geq \Omega\left(\frac{1}{\sqrt{d}}\right)$.
\end{claim}
\begin{proof}
Note that because $(X,\mu)$ is in $1/4$-isotropic position, the following holds:
\begin{align}\label{eq:many-large-margin}
\sum\limits_{x \in X} \mu(x)m(x,h)^2 \geq \frac{3}{4d}.
\end{align}
In particular, this implies at least a $3/(8d)$ fraction of $(X,\mu)$ has normalized margin at least $\sqrt{3/(8d)}$. Otherwise:
\begin{align*}
\sum\limits_{x \in X} \mu(x)m(x,h)^2 \leq \frac{3}{8d} \cdot 1 + \left(1 -\frac{3}{8d}\right) \cdot \frac{3}{8d} < \frac{3}{4d},
\end{align*} 
which contradicts \Cref{eq:many-large-margin}. The probability that a sample of size $m$ contains a point with normalized margin $\Omega(1/\sqrt{d})$ is then at least
$1-(1-3/(8d))^m$, which is greater than $1/2$ for $m=\Theta(d)$ sufficiently large.
\end{proof}
From here, we divide our analysis into two cases dependent on whether the median maximum margin drops by $s$ at any step.
\paragraph{Margin Gap:} Assume that for some $i \in \{0,1,2,3,\ldots,\log(m)-1\}$, there is a gap in medians, that is:
\[
M_{i+1} < \frac{M_{i}}{s}.
\]
Let $i^*$ denote the smallest $i$ such that this occurs. Then it must be the case that:
\begin{enumerate}
    \item At least half of samples of size $m/2^{i^*}$ have a point with normalized margin $M_i$, implying:
    \[
        \Pr_{x \sim \mu}\left [|m(x,h)| \geq M_{i^*} \right] \geq \Omega(2^{i^*}/d)
    \]
    \item At least half of samples of size $m/2^{i^*+1}$ have no point with normalized margin greater than $M_{i^*+1}$, implying: 
    \[
        \Pr_{x \sim \mu}\left [|m(x,h)| \leq M_{i^*}/s \right] \geq 1-O(2^{i^*}/d)    
    \]
    \item $M_{i^*} > s^{-O(\log(d))}$,
\end{enumerate}
where this final fact follows from the minimality of $i^*$ and the fact that $i$ only takes on $O(\log(d))$ possible values. Together these imply that $(X,\mu)$ is $(k,t,s)$ structured for $k=\Omega(2^{i^*})$ and $t = M_{i^*}$.
\paragraph{Large Margin:} On the other hand, assume that no such gap exists. Since $M_0 \geq \Omega\left(\frac{1}{\sqrt{d}}\right)$, we get $M_{\log(m)} \geq s^{-O(\log(d))}$. Further, since $M_{\log(m)}$ is the median of samples of one point, at least half of $(X,\mu)$ must have margin $M_{\log(m)}$, implying that $(X,\mu)$ is $(k,t,s)$-structured for $k=d/2$ and $t = M_{\log(m)}$.
\end{proof}
Having $(k,t,s)$-structure is only useful, however, if we can identify it. \Cref{alg:find-gap}, \textsc{StructureSearch}, shows a query efficient procedure that does so with high probability.
At a high level, the algorithm simply follows the strategy laid out in \Cref{lemma:k-t-structure}. For each sample size $m/2^i$, we draw a batch of samples with the goal of identifying a point whose normalized margin is approximately $M_i$. To do this, we call the margin oracle to return a representative point for each sample in the batch, and find the median of these samples using linear queries. Arguing that this empirical median approximates the true median, we can output the desired representative and a corresponding parameter $k$.

\begin{algorithm}[h!]
\SetAlgoLined
\KwIn{Pair $(X \subset \R^d, \mu)$, Margin oracle $\bigo_\lambda$, gap parameter $\ell$, and probability parameter $p$.}
\nonl \textbf{Output:} Reference point $x_{ref}$, parameter $k$.\\
\nonl \textbf{Constants:} Sample size $m= 10d$, and number of subsets $r = O\left(\log\left(\frac{\log(d))}{p}\right)\right)$.\\
$i \gets 0$\\
\While{$2^i < 2d$}  {
Draw $r$ subsets $\{S^{(i)}_j\}_{j = 1}^r$ from $\mu$ of size $|S^{(i)}_j| = \frac{m}{2^i}$.\\
Call margin oracle to receive representative points $x^{(i)}_j=\bigo_{\lambda}(S^{(i)}_j)$.\\
Run deterministic selection algorithm~\cite{blum1973time}\footnotemark on $|m(\hat x^{(i)}_j,h)|$ to find the median element $\hat x^{(i)}$.\\
    $i \gets i+1$
}
 $i \gets 0$\\
\While{$2^i < d$}  {
  \If{$|m(\hat{x}^{(i)},h)| > \ell |m(\hat{x}^{(i+1)},h)|$} 
  {
    \Return{$\mg$ $\left(x_{\text{ref}}=\hat{x}^{(i)}, k=\frac{2^{i+1}}{5}\right)$}
  }
}
\Return{$\lm$ $\left(x_{\text{ref}}=\hat{x}^{(i)}, k=\frac{2^{i+1}}{5}\right)$}
 \caption{\textsc{StructureSearch}}
 \label{alg:find-gap}
\end{algorithm}
\footnotetext{Note that the deterministic median selection of~\cite{blum1973time} requires only comparisons, which can be implemented via linear queries of the form $\sign(\langle x_1 \pm x_2,h \rangle)$, $\sign(\langle x_1, h \rangle)$, and $\sign(\langle x_2, h \rangle)$.}
\begin{lemma}[$\textsc{StructureSearch}$]\label{lemma:t-k}
Let the pair $(X \subset \R^d,\mu)$ be in $1/4$-isotropic position. Then for all $p>0,\ell>2$, \textsc{StructureSearch} has the following guarantees. For any (unknown) hyperplane $h\in \mathbb{R}^d \setminus \{0\}$, \textsc{StructureSearch} returns a parameter $k$ and point $x_{\text{ref}} \in \mathbb{R}^d$ satisfying:
\begin{enumerate}
    \item $1 \leq k \leq d$.
    \item $|m(x_{\text{ref}},h)| \geq \frac{\ell^{-O(\log(d))}}{\lambda}$.
    \item $(X,\mu)$ is $\left(k,t,s\right)$-structured with probability at least $1-p$, for $t = O\left(\frac{|m(x_{\text{ref}},h)|}{\lambda\sqrt{d}}\right)$ and $s = O\left(\frac{\ell}{\lambda^2\sqrt{d}}\right)$.
\end{enumerate}
Further, $\textsc{StructureSearch}$ satisfies the following query guarantees:
\begin{enumerate}
    \item $\textsc{StructureSearch}$ makes at most $O(\log(d)\log(d/p))$ queries.
    \item $\textsc{StructureSearch}$ makes at most $O(\log(d)\log(d/p))$ calls to $\bigo_{\lambda}(\cdot)$.
\end{enumerate}
\end{lemma}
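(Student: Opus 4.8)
\textbf{Proof proposal for Lemma~\ref{lemma:t-k} (\textsc{StructureSearch}).}

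The plan is to show that the empirical medians $|m(\hat x^{(i)},h)|$ computed by \textsc{StructureSearch} track the true medians $M_i$ from \Cref{lemma:k-t-structure} up to the $\lambda$-distortion introduced by the margin oracle, and then run essentially the same case analysis as in the proof of \Cref{lemma:k-t-structure}. First I would fix the hyperplane $h$ and, for each sample size $m/2^i$, recall that by the margin oracle guarantee each representative point $x^{(i)}_j = \bigo_\lambda(S^{(i)}_j)$ satisfies $\lambda^{-1}\norm{S^{(i)}_j}_h \le |m(x^{(i)}_j,h)| \le \lambda \norm{S^{(i)}_j}_h$, and since $\norm{S}_h$ lies between the maximum normalized margin in $S$ and $\sqrt{|S|}$ times that maximum, $|m(x^{(i)}_j,h)|$ is within a multiplicative $\lambda\sqrt{m}$ of $\max_{x \in S^{(i)}_j}|m(x,h)|$ (absorbing the $\sqrt{|S|} \le \sqrt{m}$ factor into the $\sqrt{d}$ terms in the statement). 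Next I would argue that with $r = O(\log(\log(d)/p))$ independent subsets, a Chernoff/median argument shows that with probability $1-p/\log(d)$ the empirical median $|m(\hat x^{(i)},h)|$ of the $r$ representative values is within a constant factor of the true median $M_i$ of the maximum-margin statistic (the median being robust to the two-sided constant-probability fluctuations of each individual draw); a union bound over the $O(\log d)$ values of $i$ gives probability $1-p$ overall.

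With the empirical medians pinned to within $\lambda\sqrt{m} = O(\lambda\sqrt{d})$ multiplicative factors of the true $M_i$, I would then replay the dichotomy from \Cref{lemma:k-t-structure}. In the \textbf{Margin Gap} case, if the algorithm returns at step $i^*$ because $|m(\hat x^{(i^*)},h)| > \ell |m(\hat x^{(i^*+1)},h)|$, then the true medians satisfy $M_{i^*} \gtrsim M_{i^*+1}\cdot \ell/(\lambda^2 d)$ after accounting for the oracle distortion on both sides, which is why the effective gap parameter degrades to $s = O(\ell/(\lambda^2\sqrt{d}))$ (I will need to check the exponents carefully — the extra $\sqrt{d}$ versus $d$ is a place where the bookkeeping on $\norm{S}_h$ versus $\max$ must be done precisely). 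The same median-concentration bounds translate the two median statements into the two $(k,t,s)$-structure conditions for $k = \Omega(2^{i^*})$, and the normalized margin of $x_{\text{ref}} = \hat x^{(i^*)}$ is $M_{i^*}$ up to a $\lambda$ factor, while $t$ — the true large-margin threshold — is $M_{i^*}$ which is $\Theta(|m(x_{\text{ref}},h)|/(\lambda\sqrt{d}))$ once we pass from the median-of-maxima scale down to the per-point margin scale. In the \textbf{Large Margin} case, the loop exits with $i$ such that $2^i \ge d$, no gap was ever detected, so iterating the (now only constant-factor) drops over $O(\log d)$ steps gives $|m(\hat x^{(i)},h)| \ge M_0 \cdot \ell^{-O(\log d)} \ge \ell^{-O(\log d)}/(\lambda\sqrt{d})$ using $M_0 = \Omega(1/\sqrt d)$ from the claim, yielding property~2, and $k = d/\Theta(1)$ with the corresponding structure condition.

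For the query bounds: the only queries are inside the two \texttt{while} loops — the deterministic median-selection of \cite{blum1973time} on $r$ elements uses $O(r)$ comparisons, each comparison is $O(1)$ linear queries, and there are $O(\log d)$ iterations, giving $O(r\log d) = O(\log(d)\log(d/p))$ linear queries total (using $\log(\log(d)/p) = O(\log(d/p))$); the second loop makes no queries beyond reusing already-computed values. The margin-oracle calls are $r$ per iteration over $O(\log d)$ iterations, i.e. $O(\log(d)\log(d/p))$ calls, matching the claimed bound. I expect the main obstacle to be \textbf{the precise constant/exponent tracking in the gap parameter} — specifically verifying that $\norm{S}_h$ concentrates tightly enough around the maximum margin of $S$ (rather than, say, being dominated by many medium-margin points) so that the empirical-median machinery really does certify a genuine gap at the per-point level, and that the loss from $\norm{S}_h$ versus $\max_{x\in S}|m(x,h)|$ is no worse than the $\sqrt{d}$ and $\lambda$ factors the statement allows. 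A careful argument here likely uses that in the relevant regime the sampled sets either contain a point whose squared margin carries a constant fraction of $\norm{S}_h^2$ or $\norm{S}_h$ is itself tiny, so the oracle's output is still a faithful proxy for the structural threshold.
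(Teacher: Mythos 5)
Your overall plan is the paper's plan: compare oracle representatives of the sampled subsets across scales, certify the medians by a Chernoff bound, and replay the gap/no-gap dichotomy of \Cref{lemma:k-t-structure}; the query accounting and the derivation of property~2 in the Large Margin case are also as in the paper. There is, however, one step that fails as literally stated: the claim that with $r = O(\log(\log(d)/p))$ draws the empirical median is ``within a constant factor of the true median $M_i$.'' This is false in general --- the $2/5$- and $3/5$-quantiles of the margin-norm (or max-margin) statistic can be arbitrarily far apart multiplicatively, so the empirical median of $r$ i.i.d.\ draws need not be multiplicatively close to the true median with any useful probability. What Chernoff does give you, and what the paper uses, is that the empirical median is with probability $1-O(p/\log d)$ an \emph{approximate quantile}: a value $M$ with $\Pr[X_z \ge M]\ge 2/5$ and $\Pr[X_z \le M]\ge 2/5$ (a ``$u$-median'' in the paper's terminology, \Cref{claim:x-median}). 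The paper's \Cref{claim:u-median} is formulated so that this weaker property alone yields both structure conditions: from $\Pr[X_z\ge M]\ge u$ one gets $\Pr_{x\sim\mu}[|m(x,h)|\ge M/\sqrt z]\ge u/z$ by a union bound over the $z$ sample points, and from $\Pr[X_z\le M]\ge u$ one gets $\Pr_{x\sim\mu}[|m(x,h)|\le M]\ge u^{1/z}\ge 1-\ln(1/u)/z$ by independence. Your argument should be rewritten around these two one-sided implications rather than around closeness to the true median; once you do, the dichotomy goes through exactly as you describe.

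This reformulation also dissolves the obstacle you flag at the end. You never need $\norm{S}_h$ to concentrate around $\max_{x\in S}|m(x,h)|$, nor to rule out the norm being dominated by many medium-margin points: the two implications above are deterministic and one-sided ($\norm{S}_h\ge M$ forces \emph{some} point to have margin $\ge M/\sqrt{|S|}$, and $\norm{S}_h\le M$ forces \emph{every} point to have margin $\le M$), and the $\sqrt{|S|}=O(\sqrt d)$ loss between them is exactly the $\sqrt d$ appearing in $t$ and $s$ in the statement. Your bookkeeping in the Margin Gap case ($t_i=|m(\hat x^{(i)},h)|/(\lambda\sqrt m)$, small-margin threshold $\lambda|m(\hat x^{(i+1)},h)|\le \lambda^2|m(\hat x^{(i)},h)|/\ell = t_i\lambda^2\sqrt m/\ell$, hence $s=\Theta(\ell/(\lambda^2\sqrt m))$) then comes out to the claimed parameters, resolving the $\sqrt d$-versus-$d$ uncertainty in favor of $\sqrt d$.
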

We break the proof of this Lemma into three main claims. First, we argue that an approximate median of the margin norm is sufficient to employ a similar strategy to the proof of \Cref{lemma:k-t-structure}.
\begin{claim}\label{claim:u-median}
Let $z \ge 1$ be an integer.
Let $X_z$ be the random variable whose value is the margin norm of a random sample $S \sim \mu^z$, namely, a random sample of $z$ points drawn independently from $\mu$. Suppose that $M$ is a $u$-median of $X_z$, that is a value such that:
\begin{align}
     \Pr \left[X_z \geq M\right] &\geq u, \label{eq:u-median1}\\
    \Pr \left[X_z \leq M\right] &\geq u\label{eq:u-median2}.
\end{align}
Then:
\begin{align*}
\Pr_{x \sim \mu}\left[ |m(x,h)| \geq \frac{M}{\sqrt{z}}\right] &\geq \frac{u}{z}~\text{and}~\Pr_{x \sim \mu}\left[ |m(x,h)| \leq M\right]\geq 1-\frac{\ln(1/u)}{z}.
\end{align*}
\end{claim}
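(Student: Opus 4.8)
The plan is to prove each of the two inequalities by reducing the event concerning $X_z$ to a statement about the individual normalized margins of the $z$ i.i.d. sampled points, and then exploiting independence. The key elementary facts I will use are: (a) for any finite $S$, each individual margin satisfies $|m(x,h)| \le \norm{S}_h$, while $\norm{S}_h^2 = \sum_{x\in S} m(x,h)^2 \le z \max_{x\in S} m(x,h)^2$; (b) the sharp union-type bound $1-(1-q)^z \le zq$; and (c) the exponential comparison $1 - e^{-t} \le t$ for $t \ge 0$, which converts a multiplicative bound into the additive form requested.

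For the first inequality, I would argue contrapositively: if every $x \in S$ has $|m(x,h)| < M/\sqrt{z}$, then $X_z^2 = \sum_{x\in S} m(x,h)^2 < z \cdot (M^2/z) = M^2$, so $X_z < M$. Hence $\{X_z \ge M\}$ is contained in the event that at least one of the $z$ independent draws satisfies $|m(x,h)| \ge M/\sqrt{z}$. Setting $q := \Pr_{x\sim\mu}[|m(x,h)| \ge M/\sqrt{z}]$ and using independence together with fact (b),
\[
u \le \Pr[X_z \ge M] \le 1 - (1-q)^z \le zq,
\]
which rearranges to $q \ge u/z$, as claimed.

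For the second inequality, I would use fact (a) in the other direction: since $|m(x,h)| \le X_z$ for every $x \in S$, the event $\{X_z \le M\}$ is contained in the event that \emph{all} $z$ independent draws satisfy $|m(x,h)| \le M$. Writing $p := \Pr_{x\sim\mu}[|m(x,h)| > M]$, independence gives $u \le \Pr[X_z \le M] \le (1-p)^z$, so $1-p \ge u^{1/z}$, i.e. $p \le 1 - u^{1/z} = 1 - e^{-\ln(1/u)/z}$. Applying fact (c) with $t = \ln(1/u)/z$ yields $p \le \ln(1/u)/z$, hence $\Pr_{x\sim\mu}[|m(x,h)| \le M] = 1 - p \ge 1 - \ln(1/u)/z$.

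The statement is essentially routine, so there is no substantial obstacle; the only points requiring a little care are pointing the two containments in the correct directions (the margin norm dominates each individual margin but is at most $\sqrt{z}$ times the largest, which is exactly why the threshold in the first part degrades by $\sqrt{z}$ while the threshold in the second part does not), using the sharper estimate $1-(1-q)^z \le zq$ rather than a plain union bound, and invoking $1-e^{-t}\le t$ to land on the additive $\ln(1/u)/z$ form rather than the multiplicative $1-u^{1/z}$ form.
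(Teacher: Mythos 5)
Your proof is correct and follows essentially the same route as the paper's: both parts hinge on the same two set containments (the margin norm being at least $M$ forces some point to have margin at least $M/\sqrt{z}$, and the margin norm being at most $M$ forces all points to have margin at most $M$), followed by a union-type bound for the first part and the product structure of independent draws for the second. The only cosmetic difference is that you write the union bound as $1-(1-q)^z \le zq$ and make explicit the step $1-u^{1/z} \le \ln(1/u)/z$, which the paper leaves implicit.
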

\begin{proof}
If the margin norm of a sample $S$ is at least $M$, there must exist an element in $S$ with normalized margin at least $M/\sqrt{|S|}$. This implies the following relation:
\begin{equation*}
    \Pr \left[X_z \geq M\right] \leq \Pr_{S\sim \mu^z}\left[ \exists x\in S\colon |m(x,h)| \geq \frac{M}{\sqrt{z}}\right]  \leq z \Pr_{x \sim \mu}\left[  m(x,h) \geq \frac{M}{\sqrt{z}}\right].
\end{equation*} 
 Together with \Cref{eq:u-median1}, this proves the first claim. Similarly, since the margin norm of a set $S$ being $\leq M$ implies that all points in $S$ have normalized margin $\leq M$, we get that:
 \begin{equation*}
    \Pr\left[X_z \leq M\right] \leq \Pr_{S\sim \mu^z}\left[ \forall x\in S\colon |m(x,h)| \leq M\right] = \left(\Pr_{x \sim \mu}\left[  m(x,h) \leq M\right]\right)^z.
\end{equation*} 
Together with \Cref{eq:u-median2}, this proves the second claim.
\end{proof}
Second, we note that the median of our empirical sample is indeed (with high probability) a $u$-median.
\begin{claim}\label{claim:x-median}
Let $\hat M^{(i)}$ be the empirical median of margin norms $\norm{S^{(i)}_j}_h$. With probability $1 - O(p/\log(d))$, $\hat M^{(i)}$ is a $2/5$-median of $X_z$ for $z = m/2^i$.
\end{claim}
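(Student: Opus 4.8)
The plan is to prove \Cref{claim:x-median} as a routine order-statistics concentration argument: the median of $r$ i.i.d. samples lands, except with probability exponentially small in $r$, between the lower and upper $2/5$-quantiles of the distribution, and any point in that range is by definition a $2/5$-median of $X_z$. Since $\hat M^{(i)}$ can fail to be a $2/5$-median only by being \emph{too large} (forcing $\Pr[X_z \ge \hat M^{(i)}] < 2/5$) or \emph{too small} (forcing $\Pr[X_z \le \hat M^{(i)}] < 2/5$), I would control these two events separately and conclude by a union bound.

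First I would fix two deterministic thresholds depending only on the law of $X_z = \norm{S}_h$, $S \sim \mu^z$, $z = m/2^i$:
\[
v_+ := \sup\{v : \Pr[X_z \ge v] \ge 2/5\}, \qquad v_- := \inf\{v : \Pr[X_z \le v] \ge 2/5\}.
\]
Because $v \mapsto \Pr[X_z \ge v]$ is decreasing and left-continuous and $v \mapsto \Pr[X_z \le v]$ is increasing and right-continuous, one checks that $\Pr[X_z \ge v_+] \ge 2/5 \ge \Pr[X_z > v_+]$ and, symmetrically, $\Pr[X_z \le v_-] \ge 2/5 \ge \Pr[X_z < v_-]$. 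Hence on the event $\{v_- \le \hat M^{(i)} \le v_+\}$ we get, using monotonicity, $\Pr[X_z \ge \hat M^{(i)}] \ge \Pr[X_z \ge v_+] \ge 2/5$ and $\Pr[X_z \le \hat M^{(i)}] \ge \Pr[X_z \le v_-] \ge 2/5$; that is, $\hat M^{(i)}$ is a $2/5$-median. So it suffices to show that $\Pr[\hat M^{(i)} > v_+]$ and $\Pr[\hat M^{(i)} < v_-]$ are each at most $O(p/\log(d))$.

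For the first, note that $\hat M^{(i)} > v_+$ forces at least half of the independent samples $\norm{S^{(i)}_1}_h, \dots, \norm{S^{(i)}_r}_h$ to exceed $v_+$, an event of probability at most $\Pr[X_z > v_+] \le 2/5 < 1/2$ for each; a Chernoff (Hoeffding) bound then gives $\Pr[\hat M^{(i)} > v_+] \le e^{-\Omega(r)}$, the exponent constant arising from the fixed gap $1/2 - 2/5 = 1/10$. The bound $\Pr[\hat M^{(i)} < v_-] \le e^{-\Omega(r)}$ is obtained identically from $\Pr[X_z < v_-] \le 2/5$. A union bound, together with the algorithm's choice $r = O(\log(\log(d)/p))$ with a sufficiently large hidden constant, makes both terms $O(p/\log(d))$, which finishes the claim.

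There is no real obstacle here; the one point requiring a little care is that the law of $X_z$ may have atoms (e.g. when $X$ is small, the margin norm of a sample is supported on a finite set), and that the deterministic selection routine of~\cite{blum1973time} returns one specific central order statistic rather than an exact $50$th percentile. Using the closed/half-open quantiles $v_\pm$ above absorbs the atoms, and working with a $2/5$-median rather than aiming for a $1/2$-median is exactly what leaves the constant slack needed both for the Chernoff step and for the off-by-one in which central order statistic is returned — which is why only $r = O(\log(\log(d)/p))$ repetitions are required.
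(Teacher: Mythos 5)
Your argument is correct and is exactly the ``standard Chernoff bound'' that the paper invokes in one line for this claim: the empirical median of $r$ i.i.d.\ draws lands between the lower and upper $2/5$-quantiles except with probability $e^{-\Omega(r)}$, and $r = O(\log(\log(d)/p))$ makes this $O(p/\log(d))$. Your careful handling of atoms via the half-open quantiles $v_\pm$ is a welcome detail the paper omits, but it is the same proof.
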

\begin{proof}
This follows from a standard Chernoff bound.
\end{proof}
Our median oracle point will thus be within a $\lambda$ factor of a $2/5$-median with high probability. Finally, to show that $t$ will be sufficiently large, analogous to \Cref{lemma:k-t-structure} we argue that $|m(\hat x^{(0)},h)|$ is large with high probability.
\begin{claim}\label{claim:x-large}
The normalized margin of $\hat{x}^{(0)}$ is large with high probability:
\begin{align}\label{eq:x0}
\Pr_A\left[|m(\hat{x}^{(0)},h)| \geq \Omega\left ( \frac{1}{\lambda \sqrt{d}} \right)\right] \geq 1-O(p/\log(d)).
\end{align}
\end{claim}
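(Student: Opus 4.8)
The plan is to mirror the argument that lower-bounds $M_0$ in the proof of \Cref{lemma:k-t-structure}, but now push it through the margin oracle and the empirical median taken over the $r$ subsets. Recall that at iteration $i=0$ the only randomness is in the draws of the subsets $S^{(0)}_1,\dots,S^{(0)}_r$, since the idealized oracle $\bigo_\lambda(\cdot)$ is exact.

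First I would invoke the consequence of $1/4$-isotropic position already established inside the proof of \Cref{lemma:k-t-structure}: since $\sum_{x} \mu(x)\, m(x,h)^2 \ge 3/(4d)$, at least a $3/(8d)$-fraction of $(X,\mu)$ has normalized margin at least $\sqrt{3/(8d)} = \Omega(1/\sqrt d)$. Call a subset $S^{(0)}_j$ \emph{good} if it contains such a point; since $|S^{(0)}_j| = m = 10d$, a subset is good except with probability $(1-3/(8d))^{10d} \le e^{-30/8} < 1/10$. For a good subset, $\norm{S^{(0)}_j}_h \ge \sqrt{3/(8d)}$, so the defining property of $\bigo_\lambda$ gives $|m(x^{(0)}_j,h)| \ge \lambda^{-1}\norm{S^{(0)}_j}_h \ge \Omega\!\left(1/(\lambda\sqrt d)\right)$ for the returned representative point $x^{(0)}_j$.

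Next, since the $r$ subsets are drawn independently, the number of good ones stochastically dominates a $\mathrm{Bin}(r, 9/10)$ random variable, so a one-sided Chernoff bound shows that strictly more than $r/2$ of them are good except with probability $e^{-\Omega(r)}$; with $r = \Theta\!\left(\log(\log(d)/p)\right)$ as fixed in \Cref{alg:find-gap}, this failure probability is $O(p/\log(d))$. On the complementary event, more than half of the values $|m(x^{(0)}_j,h)|$ exceed $\Omega\!\left(1/(\lambda\sqrt d)\right)$, hence so does the median element $\hat x^{(0)}$ returned by the deterministic selection routine, which is exactly \Cref{eq:x0}.

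I do not expect a genuine obstacle here — the claim is a one-sided Chernoff estimate layered on top of the isotropic-position second-moment bound. The only points needing a little care are (i) checking that the deterministic selection procedure is ordering the $x^{(0)}_j$ by the correct quantity, i.e.\ by $|m(x^{(0)}_j,h)|$ via sign queries of the form $\sign(\langle x^{(0)}_j \pm x^{(0)}_{j'}, h\rangle)$, so that ``the median element'' really is the $\lceil r/2\rceil$-th largest normalized margin among the representatives, and (ii) bookkeeping the implicit constants so that the resulting bound $|m(\hat x^{(0)},h)| \ge \Omega(1/(\lambda\sqrt d))$ is consistent with the choices $t = O\!\left(|m(x_{\text{ref}},h)|/(\lambda\sqrt d)\right)$ and $s = O\!\left(\ell/(\lambda^2\sqrt d)\right)$ asserted in \Cref{lemma:t-k}.
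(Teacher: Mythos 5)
Your proposal is correct and follows essentially the same route as the paper's (very terse) proof: use the $1/4$-isotropic second-moment bound to show each size-$m$ sample contains an $\Omega(1/\sqrt d)$-margin point with constant probability, hence has large margin norm, then apply a Chernoff bound over the $r$ independent subsets and pass through the $\lambda$-approximation of the oracle. Your version is slightly cleaner in that you apply the oracle's $\lambda$-guarantee to each representative before taking the median (so the median of the queried quantities is directly controlled), but this is a presentational refinement rather than a different argument.
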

\begin{proof}
This follows from reasoning similar to \Cref{lemma:k-t-structure}, \Cref{eq:many-large-margin}. In particular, we are guaranteed that a sample of size $m$ contains a point with normalized margin at least $\Omega(1/\sqrt{d})$ with at least some constant probability greater than $1/2$. Assuming a sample indeed contains such a point, its margin norm is bounded by $\Omega(1/\sqrt{d})$ as well. The probability that the median margin norm is at least $\Omega(1/\sqrt{d})$ is then at least $1-O(p/\log(d))$ by a standard Chernoff bound, and recalling that the margin oracle $\lambda$-approximates the margin norm gives the desired result.
\end{proof}
The proof of \Cref{lemma:t-k} follows easily from these claims.
\begin{proof}[Proof of \Cref{lemma:t-k}]
Let $\mg(i)$ denote the event that \textsc{StructureSearch} returns $\mg$ at iteration $i$, and, abusing notation slightly, let $\lm$ denote the event that it returns $\lm$. Further, for each iteration $i$, let the parameters $t_i$ and $k_i$ be:
\begin{align*}
    t_{i}=\frac{|m(\hat x^{(i)},h)|}{\lambda \sqrt{m}}~\text{and}~k_i = \frac{2^{i+1}}{5}.
\end{align*}
For notational simplicity, we also define parameters:
\[
s = \frac{\ell}{\lambda^2\sqrt{m}}~\text{and}~t^*=\frac{\ell^{-O(\log(d))}}{\lambda^2\sqrt{m}}.
\]
We wish to analyze the success probability of our algorithm. To do this, note that we need only analyze the probability of a few failure cases:
\begin{enumerate}
    \item $\mg(i) \bigwedge \left ( (X,\mu) ~\text{is not}~(k_{i},t_{i}, s)\text{-structured} \bigvee t_i < t^* \right )$
    \item $\lm  \bigwedge \left ( (X,\mu) ~\text{is not} ~(k_{i_{lm}},t_{i_{lm}}, s)\text{-structured} \bigvee t_{i_{lm}} < t^* \right )$,
\end{enumerate}
where $i_{lm}$ denotes the first $i$ such that $2^i \geq d$. Notice that unless one of these cases occurs, the algorithm successfully returns the desired parameters: $x_{\text{ref}}=\hat{x}^{(i)}$ and $k=k_i$. It is sufficient then to prove that each of these $O(\log(d))$ events occurs with probability at most $O(p/\log(d))$. To see this, we split into cases analogous to \Cref{lemma:k-t-structure}.
\paragraph{Margin Gap:} When the algorithm returns $\mg$ at iteration $i$, we observe a significant drop in margins between $\hat{x}^{(i)}$ and $\hat{x}^{(i+1)}$:
\begin{equation}
\label{eq:drop}
       |m(\hat x^{(i)},h)| \geq \ell |m(\hat x^{(i+1)},h)|
\end{equation} 
Combining \Cref{eq:drop} with \Cref{claim:u-median} and \Cref{claim:x-median} implies:
\begin{equation}
       \Pr_A[(\mg(i) \land (X,\mu) ~\text{is not}~(k_{i},t_{i}, s)\text{-structured})] \leq O(p/\log(d))
    \end{equation}
Further, since $\mg(i)$ implies $i$ is the first iteration that sees a jump, \Cref{claim:x-large} implies that $|m(\hat x^{(i)},h)| < \frac{\ell^{-O(\log(d))}}{\lambda}$ with probability at most $O(p/\log(d))$ as well.
\paragraph{Large Margin:} Notice that since $5k_{i_{lm}} \geq d$, the small margin condition (\Cref{cond:small-mar}) of $(k,t,s)$-structure is trivially satisfied. Then once again combining \Cref{claim:u-median} with \Cref{claim:x-median} implies:
    \begin{equation}
       \Pr_A[(\lm \land (X,\mu) ~\text{is not}~(k_{i_{lm}},t_{i_{lm}}, s)\text{-structured})] \leq O(p/\log(d)).
    \end{equation}
By the same reasoning as MarginGap, we have as well that $|m(\hat x^{(i_{lm})}, h)| < \frac{\ell^{-O(\log(d))}}{\lambda}$ with probability at most $O(p/\log(d))$, which completes the result.
\end{proof}
\subsubsection{Dimensionality Reduction}
\label{sec:dim-reduce}
Dependent upon the parameter $k$ returned by \textsc{StructureSearch}, our argument splits into two cases:
\paragraph{Case 1: $k \geq \frac{d}{10}$.} Assume that \textsc{StructureSearch} does not fail, and thus that $(X,\mu)$ is $\left(k,t,s\right)$-structured for $s=O\left(\frac{\ell}{\lambda^2\sqrt{d}}\right)$. When $k$ is this large, \Cref{cond:large-mar} implies that $\frac{1}{10}$ of $(X,\mu)$ has normalized margin at least $t$. In this case no dimensionality reduction is required, since we can afford to use $\tilde{O}(d)$ queries when our coverage is constant.
\paragraph{Case 2: $k < \frac{d}{10}$.} On the other hand, when $k$ is small, since only a $k/d$ fraction of points have large normalized margin, we cannot afford to use so many queries. In this case, we will instead show how to apply a dimensionality reduction technique, building an $O(k)$ dimension subspace that accounts for most of the margin of points in $X$. 

In particular, assume that $k < \frac{d}{10}$, and $(X,\mu)$ is $(k,t,s)$-structured for $t=O\left ( \frac{|m(x_{\text{ref}},h)|}{\lambda \sqrt{d}}\right)$ and $s=O\left(\frac{\ell}{\lambda^2\sqrt{d}}\right)$ as given by \textsc{StructureSearch}. Based upon this fact, we show how to construct a high-dimension \q{small margin} subspace whose orthogonal complement is the aforementioned $O(k)$ dimensional subspace. While our argument centers around the fact that this subspace is algorithmically constructable, we begin for intuition by proving not only that such a subspace exists, but perhaps more importantly that it can be built based upon the set of small margin points in $X$.
\begin{lemma}[\Cref{overview:subspace-structure}]\label{lemma:subspace-structure}
Let $(X \subset \R^d,\mu)$ be $(k,t,s)$-structured with respect to a hyperplane $h \in \mathbb{R}^d \setminus \{0\}$ where $k<d/10$. Then there exists a subspace $V$ with the following properties:
\begin{enumerate}
    \item $V$ is high dimensional: \[\text{Dim}(V) = d-O(k)\]
    \item $V$ has small margin with respect to $t$: 
    \[
    \forall v \in V, |m(v,h)| \leq O\left( \norm{v}d\frac{t}{s}\right)
    \]
\end{enumerate}
\end{lemma}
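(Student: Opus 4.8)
The plan is to construct $V$ explicitly from the (large) set of small‑margin points via a spectral argument, and to verify both properties using the ambient $1/4$‑isotropic position of $(X,\mu)$ (the standing hypothesis of this subsection) together with the small‑margin half of $(k,t,s)$‑structure. Set $X_{\mathrm{small}} = \{x \in X : |m(x,h)| \le t/s\}$; the small‑margin condition (with $c=5$) gives $\mu(X_{\mathrm{small}}) \ge 1 - 5k/d$. Form the positive semidefinite "small‑margin covariance"
\[
\Sigma \;=\; \sum_{x \in X_{\mathrm{small}}} \mu(x)\, x x^{\top}.
\]
Two facts drive everything: (i) $h^{\top}\Sigma h = \sum_{x \in X_{\mathrm{small}}}\mu(x)\langle x,h\rangle^2 \le (t/s)^2\|h\|^2$, since each $x\in X_{\mathrm{small}}$ satisfies $\langle x,h\rangle^2 \le (t/s)^2\|h\|^2$ and $\sum_x \mu(x)\le 1$; and (ii) writing $\Sigma_{\mathrm{full}} = \sum_{x\in X}\mu(x)xx^{\top}$, we have $\Sigma_{\mathrm{full}} = \Sigma + E$ with $E = \sum_{x\notin X_{\mathrm{small}}}\mu(x)xx^{\top}\succeq 0$ and $\mathrm{tr}(E) = \mu(X\setminus X_{\mathrm{small}}) \le 5k/d$ (using that $X$ consists of unit vectors).

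Now define $V$ to be the span of the eigenvectors of $\Sigma$ with eigenvalue at least a threshold $\tau$ taken to be a sufficiently small constant multiple of $1/d$, so that $W := V^{\perp}$ collects the "small" eigendirections of $\Sigma$. The heart of the argument — and the step I expect to be the main obstacle — is bounding $\dim(W) = O(k)$: the trace bound $\mathrm{tr}(\Sigma)\le 1$ alone is far too weak, so one must use isotropy. Concretely, $1/4$‑isotropic position means $\Sigma_{\mathrm{full}} \succeq \tfrac{3}{4d} I$, hence $\mathrm{tr}(P_W \Sigma_{\mathrm{full}} P_W) \ge \tfrac{3}{4d}\dim(W)$, while $\mathrm{tr}(P_W E P_W) \le \mathrm{tr}(E) \le 5k/d$ and $\mathrm{tr}(P_W \Sigma P_W) < \tau\dim(W)$ by definition of $W$. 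Combining, $(\tfrac{3}{4d}-\tau)\dim(W) < 5k/d$, so for $\tau$ a small enough constant over $d$ we get $\dim(W) = O(k)$, which is in particular $< d$ since $k < d/10$; therefore $\dim(V) = d - O(k)$. (One can phrase the same estimate through Weyl's inequality applied to $\Sigma_{\mathrm{full}} = \Sigma + E$.)

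For the small‑margin property of $V$: let $v\in V$ and let $P_V h$ denote the orthogonal projection of $h$ onto $V$. By Cauchy–Schwarz $\langle v,h\rangle^2 \le \|v\|^2\|P_V h\|^2$, and by the Rayleigh‑quotient lower bound on $V$, $\tau\|P_V h\|^2 \le h^{\top}\Sigma h \le (t/s)^2\|h\|^2$, so $\|P_V h\|^2 \le (t/s)^2\|h\|^2/\tau$. Hence $|m(v,h)| = |\langle v,h\rangle|/\|h\| \le \|v\|\,(t/s)/\sqrt{\tau} = O\!\big(\|v\|\sqrt{d}\,t/s\big) \le O\!\big(\|v\|\,d\,t/s\big)$, as claimed. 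Note this construction "remembers" that $V$ is essentially spanned by the small‑margin points of $X$, which is exactly the structure the algorithmic refinement exploits (estimating $\Sigma$ from a random sample of small‑margin points, which is where \textsc{IsoLearn}'s small failure probability actually enters). The only genuinely nontrivial point is the dimension count; the margin bound is a one‑line spectral estimate.
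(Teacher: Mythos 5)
Your proof is correct, and the construction is the same as the paper's: $V$ is the span of the eigenvectors of the small-margin covariance matrix with eigenvalue above a threshold $\Omega(1/d)$, and the dimension count comes from the PSD decomposition of the full (isotropic) covariance into the small-margin part plus a nonnegative error of trace $O(k/d)$. (Like the paper, you invoke the standing $1/4$-isotropic hypothesis of this subsection, which is not restated in the lemma but is needed; your trace-on-$W$ bookkeeping versus the paper's count of large eigenvalues of the error term are interchangeable.) Where you genuinely diverge is the margin bound: the paper writes $v\in V$ explicitly as $M\bigl(\sum_i \tfrac{\alpha_i}{\lambda_i} D M^\top v_i\bigr)$, i.e.\ as a linear combination of the small-margin points themselves, and controls the $\ell_1$-norm of the coefficient vector via Cauchy--Schwarz, giving $|m(v,h)|\le O(\norm{v}\,d\,t/s)$. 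You instead use invariance of $V$ under $\Sigma$ to get $\tau\norm{P_V h}^2 \le h^\top\Sigma h\le (t/s)^2\norm{h}^2$ and then Cauchy--Schwarz once, which is shorter and yields the sharper $O(\norm{v}\sqrt{d}\,t/s)$. The paper's more explicit route has the advantage that it transfers verbatim to the algorithmic version (\textsc{DimReduce}), where $V$ is built from a sampled multiset $S'\subset \widetilde S^*$ and the relevant guarantee is precisely that every vector of $V$ is a controlled combination of points whose margins have been (approximately) verified by the margin oracle; your spectral bound would need the analogous empirical statement $h^\top\widehat\Sigma h\le (\lambda^2 t/s)^2\norm{h}^2$, which holds just as well, so nothing is lost.
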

\begin{proof}
Let $S \subset X$ denote points with small normalized margin:
\begin{align*}
    S &= \left \{ x \in X: |m(x,h)| \leq \frac{t}{s} \right \}
\end{align*}
We consider the covariance matrix of $S$:
\begin{align*}
\text{Cov}(S) = \mathbb{E}_{x \sim \mu|_{S}}[xx^\top],
\end{align*}
where $\mu|_S$ is the normalized restriction of $\mu$ to $S$.
We claim that the subspace spanned by eigenvectors of $\text{Cov}(S)$ with sufficiently large eigenvalues has the desired properties. We prove this in two steps. First, we show that the span of eigenvectors of $\text{Cov}(S)$ with eigenvalues greater than $\Omega(1/d)$ satisfy property 2. Second, we show that there are $d-O(k)$ such eigenvectors.

Let $M$ denote the $d \times |S|$ matrix whose column vectors are elements in $S$, and $D$ the $|S| \times |S|$ diagonal matrix with entries $\mu|_S(s)$ for each $s \in S$. We may then equivalently write $\text{Cov}(S)$ as $MDM^\top$. Let $\{v_1,\ldots,v_m\}$ be orthonormal eigenvectors with corresponding eigenvalues $\{\lambda_1,\ldots,\lambda_m\}$, where each $\lambda_i \geq L = \Omega(1/d)$. Denote the span of $\{v_1,\ldots,v_m\}$ by $V$. To show that $v \in V$ has small normalized margin, we write $v$ as a sum over elements in $S$:
\begin{align*}
    v &= \sum\limits_{i=1}^m \alpha_i v_i\\
    &= M \left (\sum\limits_{i=1}^m \frac{\alpha_i}{\lambda_i}D M^\top v_i \right ).
\end{align*}
Since the normalized margin of elements in $S$ is bounded by $t/s$, bounding the margin of $v$  reduces to bounding the 1-norm of the right-hand coefficient vector:
\begin{align*}
    |m(v,h)| \leq \frac{t}{s} \left| \left |\left (\sum\limits_{i=1}^m \frac{\alpha_i}{\lambda_i}D M^\top v_i \right ) \right | \right |_1.
\end{align*}
Expanding this term gives the desired bound through an application of Cauchy-Schwarz:
\begin{align*}
    \frac{t}{s} \left| \left |\left (\sum\limits_{i=1}^m \frac{\alpha_i}{\lambda_i}D M^\top v_i \right ) \right | \right |_1 &\leq \frac{t}{Ls}\sum\limits_{x \in S} \mu|_S(x) \sum\limits_{i=1}^m \left|\alpha_i \langle x, v_i \rangle \right|\\
    &\leq \frac{t}{Ls}\sum\limits_{x \in S} \mu|_S(x)  \sqrt{\sum\limits_{i=1}^m \alpha_i^2}\sqrt{\sum\limits_{i=1}^m \langle x,v_i \rangle^2}\\
    &\leq \frac{t \|v\| }{Ls}\sum\limits_{x \in S} \mu|_S(x)  \|x\|\\
    & = O\left ( \norm{v} d\frac{t}{s}\right ),
\end{align*}
since $L=\Omega(1/d)$ and $\|x\|=1$ for all $x \in S$.

It remains to show that $d-O(k)$ eigenvectors have eigenvalue at least $\Omega(1/d)$. To see this, recall that since $(X,\mu)$ is in $1/4$-isotropic position, the eigenvalues of its covariance matrix are all at least $\frac{3}{4d}$. Because $S$ contains $\geq 1-O(k/d)$ fraction of $(X,\mu)$, we expect that $\text{Cov}(S)$ will also contain many large eigenvalues. Formally, we can bound the eigenvalues of $\text{Cov}(S)$ by writing it in terms of $\text{Cov}(X)$ and $\text{Cov}(X \setminus S)$. Let $A \succcurlyeq B$ denote that $A-B$ is positive semidefinite, then:
\begin{align*}
    \E{x\sim \mu|_{S}}{xx^\top} \succcurlyeq \mu(S) \cdot \E{x\sim \mu|_{S}}{xx^\top} &= \E{x\sim\mu}{xx^\top} - \mu(X-S) \cdot \E{x\sim\mu|_{X - S}}{xx^\top}\\
    & \succcurlyeq \frac{3}{4d} I - O\left(\frac{k}{d}\right) \E{x\sim\mu|_{X - S}}{xx^\top}. \label{eq:order}
\end{align*} 
Notice that the trace, and therefore the sum of the eigenvalues, of the right-hand term is $O(k/d)$. This means that the term can have at most $O(k)$ eigenvalues of size at least $1/2d$, which in turn implies that $\text{Cov}(S)$ must have at least $d-O(k)$ eigenvalues of size at least $\frac{1}{4d}$.
\end{proof}
Now that we have proven such a subspace always exists for $(k,t,s)$-structured pairs, we shift our attention to showing it can be found efficiently.

\SetKwFor{RepTimes}{repeat}{times}{end}
\begin{algorithm}[h!]
\SetAlgoLined
\KwIn{Pair $(X \subset \R^d, \mu)$ in $1/4$-isotropic position, a point $x_{\text{ref}} \in \mathbb{R}^d$, Margin oracle $\bigo_\lambda$, probability parameter $p$, gap parameters $\ell$ and $k$.}
\nonl \textbf{Output:} Subspace $V \subset \R^d$.\\

\RepTimes{$O(\log(1/p))$}{
    \textbf{Initialize:} $S' = \{\}$\\
    \RepTimes{$O(k \log d)$}{
        Randomly sample $S$ from $\mu$ of size $O(d/k)$.\\
        Use margin oracle to get a representative point $x_S=\bigo_\lambda(S)$.\\
        \If{$|m(x_S,h)| \leq \frac{\lambda^2}{\ell} |m(x_{\text{ref}},h)|$}{
            $S' = S' \cup S$
        }
    }
    \If{$\Ebb_{x\in S'}[xx^\top]$ has $d - O(k)$ eigenvalues greater than $\Omega(1/d)$}{
        \Return{$V$, the span of eigenvectors with eigenvalues greater than $\Omega(1/d)$.} 
    }
}
\Return{\q{Failure}}
 \caption{\textsc{DimReduce}}
 \label{alg:dimreduce}
\end{algorithm}

\begin{lemma}[\textsc{DimReduce}]\label{lemma:subspace-algorithm}
Given a pair $(X \subset \R^d,\mu)$, a point $x_{\text{ref}} \in \mathbb{R}^d$, and parameters $p>0,\ell>2$, and $k$ such that $k<d/10$, there exists an algorithm \textsc{DimReduce} with access to a margin oracle $\bigo_{\lambda}(\cdot)$ with the following guarantees. For any (unknown) hyperplane $h \in \mathbb{R}^d \setminus \{0\}$:
\begin{enumerate}
    \item \textsc{DimReduce} returns either a subspace $V$ satisfying
\begin{enumerate}
        \item $dim(V) = d-O(k)$
        \item $\forall v \in V: |m(v,h)| \leq O\left(\frac{\norm{v}d\lambda^3}{\ell}|m(x_{\text{ref}},h)|\right)$
\end{enumerate}
or reports failure.
    \item If $(X,\mu)$ is $\left (k,t,s \right)$-structured for $t=O \left (\frac{|m(x_{\text{ref}},h)|}{\lambda\sqrt{d}} \right )$ and $s=O\left(\frac{\ell}{\lambda^2\sqrt{d}}\right)$, \textsc{DimReduce} returns a  subspace with probability at least $1-p$.
\end{enumerate}
Further, \textsc{DimReduce} satisfies the following query guarantees:

\begin{enumerate}
    \item \textsc{DimReduce} makes at most $O(k\log(d)\log(1/p))$ queries
    \item \textsc{DimReduce} makes at most $O(k\log(d)\log(1/p))$ calls to $\bigo_{\lambda}(\cdot)$.
\end{enumerate}
\end{lemma}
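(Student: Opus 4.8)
The plan is to prove the three bundles of guarantees --- correctness of any subspace that \textsc{DimReduce} returns (which is unconditional), the success probability under the structural hypothesis, and the query/oracle bounds --- by layering the margin oracle and standard concentration on top of the existential argument of \Cref{lemma:subspace-structure}.

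\textbf{Any returned $V$ is correct.} Part~1(a) is immediate: \textsc{DimReduce} returns only inside the final \texttt{if}, where $V$ is by construction the span of the $d-O(k)$ eigenvectors of $\E{x\in S'}{xx^\top}$ with eigenvalue $\Omega(1/d)$. For part~1(b), every point placed in $S'$ came from a sampled set $S$ that passed $|m(x_S,h)|\le \tfrac{\lambda^2}{\ell}|m(x_{\text{ref}},h)|$ with $x_S=\bigo_\lambda(S)$. The oracle guarantee gives $\norm{S}_h\le \lambda|m(x_S,h)|\le \tfrac{\lambda^3}{\ell}|m(x_{\text{ref}},h)|$, and since $\norm{S}_h=\sqrt{\sum_{y\in S}m(y,h)^2}\ge |m(y,h)|$ for every $y\in S$, the whole multiset $S'$ consists of points of normalized margin at most $\tfrac{\lambda^3}{\ell}|m(x_{\text{ref}},h)|$. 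Now rerun the computation in the proof of \Cref{lemma:subspace-structure} verbatim, with $S'$ under the uniform weighting in place of $S$, the quantity $\tfrac{\lambda^3}{\ell}|m(x_{\text{ref}},h)|$ in place of $t/s$, and eigenvalue cutoff $L=\Omega(1/d)$: writing $v={M'}\big(\sum_i \tfrac{\alpha_i}{\lambda_i}D'{M'}^{\top}v_i\big)$ for the matrix ${M'}$ of $S'$'s points and the uniform weight matrix $D'$, Cauchy--Schwarz together with $\norm{y}=1$ bound the $\ell_1$-norm of the coefficient vector by $\norm{v}/L$, giving $|m(v,h)|\le \tfrac{\lambda^3}{\ell}|m(x_{\text{ref}},h)|\cdot O(d\norm{v})=O\!\big(\tfrac{\norm{v}d\lambda^3}{\ell}|m(x_{\text{ref}},h)|\big)$, which is part~1(b).

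\textbf{Success under the structural hypothesis.} Assume $(X,\mu)$ is $(k,t,s)$-structured with $t=O(\tfrac{|m(x_{\text{ref}},h)|}{\lambda\sqrt d})$ and $s=O(\tfrac{\ell}{\lambda^2\sqrt d})$, so $t/s=\Theta(\tfrac{\lambda}{\ell}|m(x_{\text{ref}},h)|)$ is, up to tuned constants, comparable to the acceptance window. Analyze one outer iteration in three steps. (i) \emph{Acceptance is likely.} With the constant in the sample size $O(d/k)$ chosen appropriately, a sample $S\sim\mu^{O(d/k)}$ lands entirely in the small-margin set $\{y:|m(y,h)|\le t/s\}$ with probability $(1-O(k/d))^{O(d/k)}=\Omega(1)$; conditioned on that, $\norm{S}_h^2=\sum_{y\in S}m(y,h)^2$ is a sum of $O(d/k)$ nonnegative terms each at most $(t/s)^2$ with expectation $O(d/k)\cdot\E{y\sim\mu|_{\text{small}}}{m(y,h)^2}$, and a Markov/Chernoff estimate keeps $\norm{S}_h$ (hence, via the oracle, $|m(x_S,h)|$) inside the window, so $S$ is accepted --- here is where the sample size and acceptance window are jointly calibrated. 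A Chernoff bound over the $O(k\log d)$ inner rounds then forces $|S'|=\Omega(d\log d)$ with probability $1-1/\poly(d)$. (ii) \emph{The covariance of $S'$ keeps $d-O(k)$ large eigenvalues.} The empirical law $\nu$ of a uniformly random element of $S'$ is supported on small-margin points, and because adding one forced (small-margin) point to the $O(d/k)$ sampled points perturbs the acceptance event only by a constant factor, $\Omega(1)\mu(y)\le\nu(y)\le O(1)\mu(y)$ on the small-margin set and $\nu(y)\le O(1)\mu(y)$ everywhere. Writing $\E{y\sim\nu}{yy^\top}=\E{y\sim\mu}{yy^\top}-\sum_y(\mu(y)-\nu(y))yy^\top$, the subtracted psd matrix is dominated by a piece on the $\le O(k/d)$-mass of large-margin points (trace $O(k/d)$) plus a constant multiple of $\E{y\sim\nu}{yy^\top}$; rearranging with $\E{y\sim\mu}{yy^\top}\succcurlyeq\tfrac{3}{4d}I$ gives $\E{y\sim\nu}{yy^\top}\succcurlyeq\Omega(1/d)\,I-(\text{trace-}O(k/d)\text{ matrix})$, so all but $O(k)$ eigenvalues exceed $\Omega(1/d)$ --- exactly the second half of the proof of \Cref{lemma:subspace-structure}. (iii) \emph{Empirical concentration.} Since the operator norm of $\E{y\sim\nu}{yy^\top}$ is $O(1/d)$ (from $\nu(y)\le O(1)\mu(y)$ and isotropy) and $|S'|=\Omega(d\log d)$, matrix Bernstein makes the operator norm of $\E{x\in S'}{xx^\top}-\E{y\sim\nu}{yy^\top}$ at most $O(1/d)$ with probability $1-1/\poly(d)$, so $\E{x\in S'}{xx^\top}$ too has $d-O(k)$ eigenvalues above $\Omega(1/d)$ and the iteration returns. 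Each outer iteration thus succeeds with probability at least $1/2$, say, so the $O(\log(1/p))$ independent repetitions push the failure probability below $p$.

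\textbf{Query bounds and the main obstacle.} Each inner round spends $O(1)$ linear queries on the test (e.g.\ $\sign(\langle x_S\pm\tfrac{\lambda^2}{\ell}x_{\text{ref}},h\rangle)$ after learning $\sign(m(x_{\text{ref}},h))$) and one call to $\bigo_\lambda$; the eigenvalue tests are computations on known vectors and cost nothing. With $O(k\log d)$ inner rounds inside $O(\log(1/p))$ outer rounds this is $O(k\log(d)\log(1/p))$ queries and the same number of oracle calls. I expect the technical heart to be step~(i) together with step~(ii): the acceptance window must be narrow enough that every passing sample is small-margin enough to make part~1(b) hold with the stated $\lambda^3/\ell$ constant, yet wide enough that a typical all-small-margin sample passes with constant probability even after aggregating $O(d/k)$ points and absorbing the $\lambda$-slack of the oracle; and one must then control how the acceptance-driven reweighting in (ii) distorts the covariance, where the decisive input is that $1/4$-isotropic position forces $\E{y\sim\mu}{yy^\top}\succcurlyeq\tfrac{3}{4d}I$ while the removed/reweighted mass lives on only an $O(k/d)$-fraction of $X$ and so can destroy at most $O(k)$ eigenvalues.
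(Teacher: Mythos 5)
Your proposal is correct and follows essentially the same route as the paper: part 1(b) is certified unconditionally from the oracle test plus a rerun of the computation in \Cref{lemma:subspace-structure} with uniform weights, the success probability comes from constant-probability acceptance of an all-small-margin sample together with isotropy, a PSD comparison against the $O(k/d)$-mass of large-margin points, and matrix Bernstein, and the query count is identical. The one organizational difference is in your step (ii): rather than bounding the reweighted empirical law $\nu$ two-sidedly, the paper notes that conditioned on acceptance a constant fraction of the accepted sub-samples are genuine i.i.d.\ draws from $\mu|_{S^*}$, applies Bernstein to that sub-collection alone, and absorbs the remaining accepted points via the monotonicity $\sum_{x\in S'}xx^\top \succcurlyeq \sum_{x\in S}xx^\top$, which sidesteps the need for your lower bound $\nu(y)\geq \Omega(1)\mu(y)$.
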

\begin{proof}
In \Cref{lemma:subspace-structure}, we showed how to build $V$ using the covariance matrix of $S$, the set of all small margin points in $X$. In reality, we do not have access to $S$. Instead, we show that it is possible to sample from $S^*$, a set analogous to $S$, and that the covariance matrix of this sample will be sufficient for our purposes with high probability. First, we define $S^*$ and a relaxation $\widetilde S^* $:
\begin{align*}
    S^* &= \left \{ x \in X: |m(x,h)| \leq \frac{t}{s}\right \},\\
    \widetilde S^* &= \left \{ x \in X: |m(x,h)| \leq \frac{\lambda^2t}{s}\right \}.
\end{align*}
Next, we show that it is possible to efficiently find a sample of $O(d\log(d))$ points from $\widetilde S^*$ that contains a random sample from $S^*$ with constant probability. To start, notice that if $(X,\mu)$ is $\left (k,t,s \right)$-structured, then a random sample $S$ from $\mu$ of size $O(d/k)$ has noticeable probability of lying entirely in $S^*$:
\begin{equation}
\label{eq:constant-prob}
    \Pr\left[\forall x \in S: |m(x,h)| \leq \frac{t}{s}
    \right] \geq e^{-3}.
\end{equation}
Further, given a sample $S$, we can use the margin oracle to verify whether $S \subset \widetilde S^*$. Let $\bigo_{\lambda}(S)=x_S$, then using three linear queries, we can check the inequality:
\[
|m(x_S,h)| \leq \frac{\lambda^2}{\ell} |m(x_{\text{ref}},h)|.
\]
If this holds, then we have verified that:
\begin{align}\label{eq:small-ver}
\sqrt{\sum\limits_{x \in S} m(x,h)^2} \leq \lambda |m(x_S,h)| \leq \frac{\lambda^3}{\ell}|m(x_{\text{ref}},h)|
\end{align}
and thus that every point in $S$ lies in $\widetilde S^\star$.
Further, if points in $S$ have small enough normalized margin, the query will always be successful. In particular, as long as
\begin{equation}\label{eq:small-comp}
    \sqrt{\sum_{x\in S} m(x,h)^2} \leq \frac{\lambda}{\ell} |m(x_{\text{ref}},h)|,
\end{equation} 
our queries will verify \Cref{eq:small-ver}. Equation \eqref{eq:constant-prob} proves not only that this occurs with constant probability, but also that $S$ will be a random sample from $S^*$ with probability at least $e^{-3}$. Consider building a sample $S'$ by repeating this process $O(k\log(d))$ times and taking the union of samples such that \Cref{eq:small-comp} holds, forcing $S' \subset \widetilde S^*$. By Chernoff bounds, with constant probability $|S'|=O(d\log(d))$, and a constant fraction of the sub-samples making up $S'$ are random samples from $\mu|_{S^*}$.

To show that the uniform covariance matrix of $S'$, $\mathbb{E}_{x \in S'}[xx^\top]$, has the desired properties, first consider a random sample $S \subset S^*$ drawn from $\mu|_{S^*}$. As long as $S$ is sufficiently large, we claim the uniform covariance matrix of $S$, $\mathbb{E}_{x \in S}[xx^\top]$, is close to $\text{Cov}(S^*)$. To see this, first note that the spectral norm $\norm{\cdot}_s$ of this matrix is sandwiched by:
\begin{align*}
    \frac{1}{4d} \leq \norm{\mathbb{E}_{x \sim \mu|_{S^*}}[xx^\top]}_s \leq \frac{5}{2d}.
\end{align*}
Here the lower bound follows from the proof of \Cref{lemma:subspace-structure}, and the upper bound follows from noticing that for any unit vector $v$:
\begin{align*}
    v \E{x\sim \mu|_{S^\star}}{xx^\top} v^\top &= \frac{1}{\mu(S^\ast)}\sum_{i = 1}^{|S^\ast|}\mu(s_i)\inner{v}{s_i}^2\\
    &\leq  \frac{5}{4d \mu(S^\ast)} \leq \frac{5}{4(d - 5k)} < \frac{5}{2d}.
\end{align*}
Then for $S$ sufficiently large, the matrix Bernstein inequality~\cite[Example 1.6.3]{tropp2015introduction}  implies:
\begin{align*}
 \Pr_S\left[\norm{\E{x\in S}{xx^\top} - \E{x\sim \mu|_{S^\star}}{xx^\top}}_s <  \frac{1}{8d}\right]\geq \frac{1}{2}.
\end{align*}
If this holds, then the uniform covariance matrix of $S$ has at least $d-O(k)$ eigenvalues of size $\Omega(1/d)$. 

Recall that with constant probability, a constant fraction of our sample $S' \subset \widetilde S^*$ is a random sample $S \subset S^*$ drawn from $\mu$. By the above, the covariance matrix of this subsample $S$ has with constant probability at least $d-O(k)$ eigenvalues of size $\Omega(1/d)$. We show that the additional samples in $S' \setminus S$ do not affect this too much. In particular, since $S \subseteq S'$ we have:
\begin{align*}
    \sum\limits_{x \in S'}xx^\top \succcurlyeq \sum\limits_{x \in S}xx^\top,
\end{align*}
and therefore 
\begin{equation*}
    \Ebb_{x\in S'}[xx^\top] \succcurlyeq \frac{|S|}{|S'|}\Ebb_{x\in S}[xx^\top].
\end{equation*}
Since $S$ is a constant fraction of $S'$, we get that $S'$ has $d-O(k)$ eigenvalues of size at least $\Omega(1/d)$. Noting that we can simply check this condition manually, repeating this process $O(\log(1/p))$ times ensures we find such a sample $S'$ with probability at least $1-p$. In this case, the algorithm outputs $V$ to be the span of the eigenvectors with eigenvalue $\Omega(1/d)$ of the uniform covariance matrix of $S'$. If after this many repetitions no such sample is found, the algorithm aborts and returns nothing.

Finally, note that since $S' \subset \widetilde S^*$, by the same argument as \Cref{lemma:subspace-structure} (setting $\mu$ to be uniform), we have that for any vector $v \in V$: 
\begin{align*}
    |m(v,h)| \leq O \left (\norm{v}d\frac{\lambda^2t}{s} \right )
\end{align*}
as desired. Noting that finding each candidate sample takes only $O(k\log(d))$ queries and oracle calls and that we repeat this process at most $O(\log(1/p))$ times gives the desired query and oracle complexity.
\end{proof}
\subsubsection{Inference}
\begin{algorithm}[h!]
\SetAlgoLined
\KwIn{Pair $(X \subset \R^d, \mu)$ in $1/4$ isotropic position, Margin oracle $\bigo_\lambda$, and probability parameter $p$.}
\nonl \textbf{Output:} Partial labeling of $X$.\\
\nonl \textbf{Constants:} Gap parameter $\ell \geq \Omega(d^{5/2}\lambda^4)$, probability parameter $p'=p/2$.
\\
Sample $|X|$ coefficients $g_i \sim \mathcal N(0,1)$.\\
\If{$\sign \left( \left\langle \sum\limits_{i=1}^{|X|} x_ig_i, h \right \rangle \right ) = 0$}{
    \If{$\sign(\langle e_i, h \rangle)= 0$ for all standard basis vectors $e_i$}{
    Label all points in $X$ as $0$.\\
    \Return{}
    }
}
Run \textsc{StructureSearch} (\Cref{alg:find-gap}) with parameters $\ell$ and $p'$ to find a parameter $k$ and reference point $x_{\text{ref}}$.\\
Initialize $B$ as standard basis of $\R^d$. \\
\If{$k < d/10$}{
    Run \textsc{DimReduce} (\Cref{alg:dimreduce}) to find $V$, Abort if \textsc{DimReduce} returns nothing.\\
    Set $B$ to be an orthonormal basis of $V^\perp$.
}
For each $w_i \in B$, estimate relative margin $\gamma_i$ wrt the reference point $x_{\text{ref}}$ up to error $\frac{1}{3\sqrt{10}\lambda  d^{3/2}}$.\\
\For{$x\in X$}{
    For each $w_i \in B$, compute coefficients $\beta_i = \langle x,w_i \rangle$.\\
    \If{$\left|\sum\limits_{i=1}^k \beta_i\gamma_i\right| \geq \frac{2}{3\lambda \sqrt{10d}}$}{
        Label $x$ as $\sign(\langle x_{\text{ref}}, h \rangle ) \cdot \sign\left(\sum\limits_{i=1}^{|B|} \beta_i\gamma_i \right)$.
    }
}

 \caption{\textsc{IsoLearn}}
 \label{alg:isolearn}
\end{algorithm}

With these structural and algorithmic lemmas out of the way, we can finally present \textsc{IsoLearn} (\Cref{alg:isolearn}), a method that infers all large margin points with high probability:
\begin{proof}[Proof of \Cref{lemma:isotropic-oracle-learner}]
Recall that our previous lemmas work only in the case that $h$ is non-zero. To test for this degerate case, we check the sign of a random Gaussian combination of $X=\{x_1,\ldots,x_n\}$:
\[
x_X = \sum\limits_{i=1}^{n} x_ig_i,
\]
where $g_i$ are independent draws from a standard normal distribution. In particular, notice that if $h=0$, $\sign(x_X)=0$. On the other hand, if $h \neq 0$, then with probability $1$, $\sign(x_X) \neq 0$. If $\sign(x_X)=0$, we verify that $h=0$ by checking that each standard basis vector $e_i$ satifies $\sign(\langle e_i, h \rangle)=0$, and return that all points in $X$ have label $0$ if this succeeds. Otherwise, it must be the case that $h$ is non-zero, putting us in position to run \textsc{StructureSearch} (\Cref{lemma:t-k}).

With parameters $\ell \geq \Omega(d^{5/2}\lambda^4)$ and $p' = p/2$, run \textsc{StructureSearch} to find a parameter $k$ and reference point $x_{\text{ref}}$ such that with probability at least $1-p/2$, $(X,\mu)$ is $\left(k,t,s\right)$-structured for $t = \frac{\langle  x_{\text{ref}}, h \rangle}{\norm{h}\sqrt{10d}\lambda}$ and $s = \frac{\ell}{\lambda^2\sqrt{10d}}$. Assume for the time being that $k < d/10$, we note at the end how to adapt to the case that $k \geq d/10$. Then with probability at least $1-p/2$, the following three properties hold:
\begin{align}
    \label{eq:bound-size}
    |m(x_{\text{ref}},h)|&\geq (d\lambda)^{-O(\log d)}\\
    \Pr_{x \sim \mu}\left [|m(x,h)| \geq t \right] &\geq \frac{k}{d}\\
    \Pr_{x \sim \mu}\left [|m(x,h)| \leq t/s\right] &\geq 1-\frac{5k}{d}.
\end{align}
Let $e=10k$. If these assumptions hold, \textsc{DimReduce} (\Cref{lemma:subspace-algorithm}) returns with probability at least $1-p/2$ a subspace $V$ of dimension $d-e$ such that for any unit vector $v \in V$:
\begin{align}\label{eq:inf-small}
    |m(v,h)| \leq \frac{t}{12d}.
\end{align}
 For our inference, we will require a slightly weaker claim (by a factor of $4$) than \Cref{eq:inf-small}:
\begin{align}\label{eq:inf-small-2}
    |m(v,h)| \leq \frac{t}{3d}.
\end{align}
We use this second claim for our inference due to our later strategy for creating a zero-error LDT (\Cref{thm:zldt}). This strategy requires there to be a slight gap between the bound that is needed for inference (\Cref{eq:inf-small-2}), and the bound that is true with high probability (\Cref{eq:inf-small}).

Turning our attention back to the task at hand, the idea behind our inference is that vectors in $V$ do not have a large effect on the sign of large margin points. With this in mind, pick an orthornormal basis $(v_1,\ldots,v_{d-e})$ for $V$, and extend it to an orthornormal basis of $\mathbb{R}^d$ via $(w_1,\ldots,w_{e}) \in V^{\perp}$. We can express any point $y \in X$ as:
\begin{align}\label{eq:y-infer-1}
y = \sum\limits_{i=1}^{d-e} \alpha_i v_i + \sum\limits_{i=1}^{e} \beta_i w_i
\end{align}
where $|\alpha_i|,|\beta_i| \leq 1$. In particular this means that we can understand the margin of $y$ in terms of the margin of our basis vectors:
\begin{align}\label{eq:y-infer-2}
m(y,h) = \sum\limits_{i=1}^{d-e} \alpha_i m(v_i,h) + \sum\limits_{i=1}^{e} \beta_i m(w_i,h).
\end{align}
Notice that the left-hand sum is bounded via \Cref{eq:inf-small-2} with respect to our reference point $x_{\text{ref}}$. Further, since the right-hand sum only has $O(k)$ terms, we can afford to estimate each term's relative margin with $x_{\text{ref}}$ up to very high accuracy. In particular, based on the assumption that $m(x_{\text{ref}},h) \geq (d\lambda)^{-O(\log(d))}$, we can estimate the relative margin up to:
\[
m(w_i,h) = m(x_{\text{ref}},h)\left (\gamma_i \pm \frac{1}{3\sqrt{10}\lambda  d^{3/2}} \right ).
\]
using at most $\log(|m(x_{\text{ref}},h)|) \leq O(\log(\lambda d)\log(d))$ linear queries for each of the $O(k)$ basis vectors.
Combining these bounds with equations \eqref{eq:inf-small-2} and \eqref{eq:y-infer-2} upper and lower bounds the normalized margin of $y$:
\begin{align}\label{eq:y-infer-3}
m(y,h) \in m(x_{\text{ref}},h) \left (\sum\limits_{i=1}^{k} \beta_i\gamma_i \pm  \frac{1}{3\lambda\sqrt{10d}}\right ).
\end{align}
Thus to know the sign of $y$, it is sufficient to know that $\left|\sum\limits_{i=1}^k \beta_i\gamma_i\right| > \frac{1}{3\lambda \sqrt{10d}}$. Notice, in fact, that if we have a stronger guarantee, $\left|\sum\limits_{i=1}^k \beta_i\gamma_i\right| \geq \frac{2}{3\lambda \sqrt{10d}}$, we additionally learn a 2-approximation of the relative margin of $y$ to $x_{\text{ref}}$, that is:
\begin{align}\label{eq:approx-relative}
\frac{1}{2}\left |\sum\limits_{i=1}^k \beta_i\gamma_i \right | \leq \left|\frac{\langle y, h \rangle}{\langle x_{\text{ref}}, h \rangle}\right| \leq 2\left |\sum\limits_{i=1}^k \beta_i\gamma_i \right |.
\end{align}
While not important at the moment, this approximation is a key factor for our zero-error LDT that follows, and is also the reason we need the gap between Equations~\eqref{eq:inf-small} and \eqref{eq:inf-small-2} discussed above.

Finally, we claim that this process infers every point with normalized margin at least $t$, which, according to \Cref{lemma:t-k}, is a $\frac{k}{d}$ fraction of $(X,\mu)$ with high probability. Assume that $y$ satisfies $
|m(y,h)| \geq t$. Then we have:
\begin{align*}
   t &\leq |m(y,t)|\\
   & \leq |m(x_{\text{ref}},h)|\left (\left |\sum\limits_{i=1}^{k} \beta_i\gamma_i\right | + \frac{1}{3\lambda \sqrt{10d}} \right )\\
   & = t\lambda \sqrt{10d} \left (\left |\sum\limits_{i=1}^{k} \beta_i\gamma_i\right | +  \frac{1}{3\lambda \sqrt{10d}} \right ),
\end{align*}
which implies that:
\begin{align*}
    \left |\sum\limits_{i=1}^{k} \beta_i\gamma_i\right | & \geq \frac{2}{3\lambda \sqrt{10d}}.
\end{align*}
Now we briefly turn our attention to case in which $10k \geq d$. Here the proof is largely the same, except we take $V^{\perp}$ to be all of $\mathbb{R}^d$. In this case, we infer a $k/d$ fraction of points for $k=\frac{d}{10}$. 

Finally, together \textsc{StructureSearch} and \textsc{DimReduce} succeed with probability $\geq 1-p$, and use at most $O(k\log(d)\log( d/p))$ queries and oracle calls. To learn the basis of $V^\perp$, our learner makes an additional $O(k\log(d)\log(d\lambda))$ queries, bringing the total to at most $O(k\log(d)\log(\lambda d/p))$. If at any point we would make more than $O(d\log(d)\log(\lambda d/p))$ (this may occur if \textsc{StructureSearch} fails), we abort and declare the learner has failed, outputting nothing.
\end{proof}
\subsection{A Weak Learner for Arbitrary Distributions}\label{sec:arb-ww-learner}
We now show how to transform \textsc{IsoLearn}, our weak learner for $1/4$-isotropic pairs, into \textsc{PartialLearn}, our weak learner for arbitrary pairs. The proof of this Lemma relies heavily on the work of Barthe~\cite{barthe1998reverse}, restated by Dvir, Saraf, and Wigderson~\cite{dvir2014breaking}. In particular, Barthe, Dvir, Saraf, and Wigderson provide a sufficient condition for $(X,\mu)$ to be transformed into $\varepsilon$-isotropic position. To understand their result, we introduce some useful terminology.
\begin{definition}[Transformed Pair]
Given a pair $(X \subset \R^d,\mu)$ and an invertible linear transformation $T: \mathbb{R}^d \to \mathbb{R}^d$, we define the transformed pair $(X_T,\mu_T)$ to be:
\[
X_T = \left \{ x_T = \frac{T(x)}{\norm{T(x)}}: x \in X \right\},~\mu_T(x_T) = \mu(x).
\]
\end{definition}
The condition of Barthe, Dvir, Saraf, and Wigderson depends on the set of bases of $X$, $\mathcal{B}(X)$, and its convex hull $K(X)$ which we define next.
\begin{definition}
Let $X \subset \mathbb{R}^d$ be a set. We denote by $\mathcal{B}(X)$ the set of subsets $B \subset X$ which form a basis of $\R^d$. Considering each basis $B$ as an indicator function $1_B \in \{0,1\}^X$, we let $K(X) \subset \R^{X}$ denote the convex hull of $\mathcal{B}(X)$.
\end{definition}
With these definitions, we can state the sufficient condition of \cite{barthe1998reverse,dvir2014breaking} for transforming a pair $(X,\mu)$ into $\varepsilon$-isotropic position.
\begin{lemma}[Barthe's Theorem: Lemma 5.1~\cite{dvir2014breaking}]\label{lemma:Barthe} Given a pair $(X \subset \R^d,\mu)$, if the vector $d\mu$ is in $K(X)$, then for all $\varepsilon > 0$, there exists an invertible linear transformation $T:\R^d \to \R^d$ such that the corresponding transformed pair $(X_T,\mu_T)$ is in $\varepsilon$-isotropic position.
\end{lemma}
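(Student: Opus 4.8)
The plan is to deduce the statement from the classical (interior) form of Barthe's theorem, which produces \emph{exact} isotropic position, and to recover the closed hypothesis ``$d\mu \in K(X)$'' by a small perturbation. Recall that Barthe's theorem \cite{barthe1998reverse}, in the form used in \cite{dvir2014breaking}, says the following: for a finite set of nonzero vectors $X$ spanning $\R^d$ and weights $c \in \R^X_{>0}$ with $\sum_{x} c_x = d$, there is an invertible $T$ with $\sum_x c_x \tfrac{(Tx)(Tx)^\top}{\norm{Tx}^2} = I_d$ provided that $\sum_{x \in X \cap W} c_x < \dim(W)$ for every proper subspace $W \subsetneq \R^d$. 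Applying this with $c_x = d\mu(x)$ and dividing by $d$ gives exactly $\sum_x \mu(x)\tfrac{(Tx)(Tx)^\top}{\norm{Tx}^2} = \tfrac1d I_d$, i.e.\ $(X_T,\mu_T)$ is in $0$-isotropic (hence $\varepsilon$-isotropic for every $\varepsilon$) position.

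The first step is to translate the hypothesis $d\mu \in K(X)$ into the language of subspaces. Since $K(X)$ is the base polytope of the linear matroid on $X$, the standard rank description of base polytopes says that a $z \in \R^X$ with $z \ge 0$ and $z(X) = d$ lies in $K(X)$ iff $z(X \cap W) \le \dim(W)$ for every subspace $W$, and lies in the relative interior of $K(X)$ iff all of these inequalities are strict for $W \ne \R^d$ (and $z$ is coordinatewise positive; here there are no loops because $X$ consists of unit vectors). Thus $d\mu \in K(X)$ is precisely the closed version of Barthe's condition, and when $d\mu$ lies in $\mathrm{relint}\,K(X)$ we are finished by the previous paragraph.

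It remains to handle the boundary case, where $d\mu \in K(X)$ but some subspace inequality is tight. Fix any $z_0 \in \mathrm{relint}\,K(X)$ (nonempty, as $K(X)$ is a nonempty convex set), and for $\eta \in (0,1)$ put $\mu_\eta = (1-\eta)\mu + \eta\, z_0/d$, which is a probability distribution on $X$ with $d\mu_\eta = (1-\eta)\,d\mu + \eta\, z_0 \in \mathrm{relint}\,K(X)$ (a convex combination with a relative-interior point) and $\norm{\mu_\eta - \mu}_1 = \eta\norm{z_0/d - \mu}_1 \le 2\eta$. By the exact case there is an invertible $T_\eta$ with $(X_{T_\eta}, (\mu_\eta)_{T_\eta})$ in $0$-isotropic position, so $\sum_x \mu_\eta(x)\tfrac{\langle x_{T_\eta},v\rangle^2}{\norm{v}^2} = \tfrac1d$ for all $v$. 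Each $x_{T_\eta}$ is a unit vector, so $\langle x_{T_\eta}, v\rangle^2/\norm{v}^2 \le 1$, whence
\[
\left|\sum_{x}\mu(x)\frac{\langle x_{T_\eta},v\rangle^2}{\norm{v}^2} - \frac1d\right| = \left|\sum_x (\mu(x)-\mu_\eta(x))\frac{\langle x_{T_\eta},v\rangle^2}{\norm{v}^2}\right| \le \norm{\mu - \mu_\eta}_1 \le 2\eta .
\]
Choosing $\eta \le \varepsilon/(2d)$ makes the right-hand side at most $\varepsilon/d$, so $T = T_\eta$ puts $(X_T,\mu_T)$ in $\varepsilon$-isotropic position, as required; note $T$ is allowed to depend on $\varepsilon$.

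The main ``obstacle'' here is bookkeeping rather than mathematical depth: one must invoke the right packaging of Barthe's theorem (the version of \cite{dvir2014breaking} already supplies invertibility of $T$ and the relative-interior hypothesis), verify the matroid base-polytope characterization of $K(X)$, and check that the perturbed weights stay in $K(X)$ (via convex-combination with a relative-interior point) and that the transformation built for $\mu_\eta$ is still good for $\mu$ (via the unit-vector bound above). Since the statement is essentially \cite[Lemma 5.1]{dvir2014breaking}, an equally legitimate ``proof'' is to cite that lemma directly; the sketch above is just a record of why it holds.
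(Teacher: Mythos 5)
The paper does not actually prove this lemma; it cites \cite{dvir2014breaking} (whose Lemma 5.1 obtains the $\varepsilon$-approximate conclusion from the closed condition directly, via the finiteness half of Barthe's variational characterization) and remarks that the proof extends to non-uniform weights. Your alternative route---reduce to the exact, interior form of Barthe's theorem and recover the closed hypothesis by perturbing $\mu$ toward a relative-interior point of $K(X)$---is a sensible idea, and the quantitative step at the end (each $\langle x_{T_\eta},v\rangle^2/\norm{v}^2\le 1$, so an $\ell_1$-perturbation of size $2\eta\le\varepsilon/d$ in the weights costs at most $\varepsilon/d$ in the isotropy condition) is correct.

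However, there is a genuine gap in the reduction: your claim that $z\in\mathrm{relint}\,K(X)$ is equivalent to strictness of \emph{all} the proper-subspace inequalities is false when the linear matroid on $X$ is disconnected. If $\R^d = V_1\oplus V_2$ with $X=(X\cap V_1)\sqcup(X\cap V_2)$ and $\dim V_1=k$, then \emph{every} basis of $\R^d$ drawn from $X$ uses exactly $k$ elements of $X\cap V_1$, so $z(X\cap V_1)=k=\dim V_1$ holds identically on $K(X)$---including on its relative interior. Consequently no convex combination $\mu_\eta=(1-\eta)\mu+\eta z_0/d$ can make that inequality strict, and the interior form of Barthe's theorem you invoke is never applicable; both branches of your argument fail. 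The simplest instance is $X=\{e_1,e_2\}\subset\R^2$ with $\mu$ uniform: here $K(X)=\{(1,1)\}$ is a single point equal to its own relative interior, $d\mu\in K(X)$, yet $c_{e_1}=1=\dim(\mathrm{span}(e_1))$ for every admissible weight vector. (The lemma is of course true there, with $T=I$.) The standard repair is to first split off the critical subspaces: the subspaces $W$ on which the constraint is tight throughout $K(X)$ induce a direct-sum decomposition of $\R^d$ with $X$ partitioned among the summands; apply your argument within each indecomposable piece and choose $T$ to additionally make the images of the pieces mutually orthogonal. This is precisely the induction carried out in \Cref{app:iso} of the paper for the exact-isotropic characterization, and with that added step your proof goes through.
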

We note that Lemma $5.1$ in~\cite{dvir2014breaking} was only for the uniform distribution. However, with minor modifications the same proof gives \Cref{lemma:Barthe} for general distributions.
Since point location is invariant to invertible linear transformations of the data (see \Cref{sec:overview-arb-ww-learner}), we are in good shape to apply \textsc{IsoLearn} as long as $(X,\mu)$ satisfies this condition. 

However, we are interested in arbitrary sets endowed with an arbitrary distribution, for which this need not be the case. To circumvent this, we prove a novel structural result: for every pair $(X,\mu)$, there exists a subspace $V$ dense in $X$ such that $X \cap V$ may be transformed into approximate isotropic position. Most of the work in proving this lies in showing a new characterization of when a pair $(X,\mu)$ may be transformed into $\varepsilon$-isotropic position. In particular, we show that this is possible if and only if every $k<d$ dimensional subspace contains at most a $\frac{k}{d}$ fraction of $(X,\mu)$.
\begin{lemma}\label{prop:dense-subspace}
Given a pair $(X \subset \R^d,\mu)$, the following conditions are equivalent:
\begin{enumerate}
    \item For any $\varepsilon > 0$ there exists an invertible linear map $T$ such that the pair $(X_T,\mu_T)$ is in $\varepsilon$-isotropic position.
    \item For every $1 \leq k \leq d$, every  $k$-dimensional subspace $V$ satisfies $\mu(V \cap X) \leq \frac{k}{d}$.
\end{enumerate}
\end{lemma}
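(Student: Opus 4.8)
My plan is to prove the two implications separately: $(1)\Rightarrow(2)$ is a short second-moment computation, while $(2)\Rightarrow(1)$ reduces, through Barthe's theorem (\Cref{lemma:Barthe}), to showing that the vector $d\mu$ lies in the matroid base polytope $K(X)$, which I would handle by invoking Edmonds' polyhedral description of that polytope.

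For $(1)\Rightarrow(2)$, I would first note that the quantity $\mu(V\cap X)$ is invariant under invertible linear maps: for any such $T$ and any $k$-dimensional $V$, the image $T(V)$ is again $k$-dimensional and $x\in V\cap X$ iff $x_T=T(x)/\norm{T(x)}\in T(V)\cap X_T$ (since $T(V)$ is a linear subspace), so $\mu(V\cap X)=\mu_T(T(V)\cap X_T)$. Hence, fixing $\varepsilon>0$ and letting $T$ be a map placing $(X_T,\mu_T)$ in $\varepsilon$-isotropic position, it is enough to bound $\mu_T(T(V)\cap X_T)$ — i.e.\ I may as well assume $(X,\mu)$ itself is in $\varepsilon$-isotropic position. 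Then, taking an orthonormal basis $u_1,\dots,u_k$ of $V$ and summing the isotropic upper bound over the $u_j$ gives $\sum_{j=1}^{k}\sum_{x\in X}\mu(x)\langle x,u_j\rangle^2\le k(1+\varepsilon)/d$, while each $x\in V\cap X$ is a unit vector lying in $V$, so $\sum_{j=1}^k\langle x,u_j\rangle^2=\norm{x}^2=1$; combining these, $\mu(V\cap X)\le\sum_{x\in X}\mu(x)\sum_{j=1}^k\langle x,u_j\rangle^2\le k(1+\varepsilon)/d$, and letting $\varepsilon\to0$ finishes this direction.

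For $(2)\Rightarrow(1)$, by \Cref{lemma:Barthe} it suffices to show $d\mu\in K(X)$. Condition (2) applied to $V=\mathrm{span}(X)$ forces $X$ to span $\R^d$, so the bases of $\R^d$ contained in $X$ are exactly the bases of the linear matroid $M(X)$ with ground set $X$, and $K(X)$ is exactly the base polytope of $M(X)$. By Edmonds' matroid polytope theorem this polytope equals $\{y\in\R^X: y\ge 0,\ \sum_{x\in X}y_x=d,\ \sum_{x\in A}y_x\le r(A)\ \forall A\subseteq X\}$, where $r(A)=\dim(\mathrm{span}(A))$ is the rank. Now $y=d\mu$ is nonnegative and sums to $d$, and for any $A\subseteq X$ we have $A\subseteq\mathrm{span}(A)\cap X$, so condition (2) gives $\sum_{x\in A}d\mu(x)\le d\cdot\mu(\mathrm{span}(A)\cap X)\le d\cdot r(A)/d=r(A)$; hence $d\mu\in K(X)$ and \Cref{lemma:Barthe} supplies the required transformation $T$.

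The one nontrivial ingredient — and the step I would flag as the crux — is the polyhedral description of $K(X)$ via Edmonds' theorem, together with the observation (immediate once $X$ spans $\R^d$) that $K(X)$, defined as the convex hull of indicator vectors of bases of $\R^d$ inside $X$, coincides with the base polytope of the linear matroid $M(X)$. Everything else is bookkeeping; in particular I would recall that \Cref{lemma:Barthe} has already been noted to hold for arbitrary (not merely uniform) distributions, so no extra work is needed there.
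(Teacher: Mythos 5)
Your proposal is correct, and the two directions split cleanly into ``same as the paper'' and ``genuinely different.'' The $(1)\Rightarrow(2)$ direction is essentially identical to the paper's: both arguments observe that $\mu(V\cap X)$ is preserved under the transformation and then bound the trace of the covariance matrix restricted to the ($k$-dimensional image of the) subspace, yours phrased directly with an orthonormal basis and a limit $\varepsilon\to 0$, the paper's phrased as a contrapositive exhibiting a large eigenvalue. For $(2)\Rightarrow(1)$, however, you take a real shortcut: you reduce to showing $d\mu\in K(X)$ (after correctly noting that condition (2) forces $X$ to span $\R^d$, so $K(X)$ is the base polytope of the linear matroid on $X$) and then invoke Edmonds' polyhedral description of the matroid base polytope, under which the constraints $\sum_{x\in A}d\mu(x)\le r(A)$ follow immediately from condition (2) applied to $\mathrm{span}(A)$. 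The paper instead proves this membership from scratch: it assumes $d\mu\notin K(X)$, extracts a separating hyperplane with normal $w$, and constructs a greedy basis (choosing points in decreasing order of $w$) whose total weight is at least $d$ times the $\mu$-average, contradicting the separation; this is in effect a hands-on re-derivation of exactly the polytope inequality you cite, specialized to linear matroids. Your route is shorter and makes the combinatorial content transparent at the cost of an external (classical) theorem; the paper's is self-contained. Both are valid, and your handling of the edge cases (the spanning of $\R^d$, nonnegativity, and the sum-to-$d$ constraint) is complete.
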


\begin{proof}
We begin by proving the contrapositive of the forward direction. In particular, assume there exists some $k<d$ dimensional subspace $V$ such that $\mu(V \cap X) = \frac{k}{d}+\delta$ for some $\delta>0$. Then  for any invertible linear transformation $T$, $\mu_T(T(V) \cap X_T) = \frac{k}{d} + \delta$. The covariance matrix of $X_T$ restricted to the subspace $T(V)$ then has trace at least $\frac{k}{d} + \delta$, which implies that some eigenvalue must be greater than $\frac{1}{d} + \frac{\delta}{k}$. In particular, it cannot be in $\varepsilon$-isotropic position for $\varepsilon<\delta d /k$.

The backward direction is more involved. Again we prove the contrapositive. Assume that for some $\varepsilon>0$, $(X,\mu)$ cannot be transformed into $\varepsilon$-isotropic position.
Thus, by \Cref{lemma:Barthe}, $d\mu \notin K(X)$, and moreover there exists a hyperplane separating the two. Since each basis indicator $1_B$ is in $K(X)$, this implies the existence of some normal vector $w \in \R^X$ such that for all bases $B \in \mathcal{B}(X)$:
\begin{align}\label{eq:basis-weight}
    \sum\limits_{x \in B} w_x < d \sum\limits_{x \in X}\mu(x)w_x.
\end{align}
Assume for the sake of contradiction that for all $1 \leq k < d$, no $k$-dimensional subspace contains more than a $\frac{k}{d}$ probability mass of $X$ with respect to $\mu$. Using this assumption, we will build a basis that violates \Cref{eq:basis-weight}. Let $X=\{x_1,\ldots,x_n\}$ denote a sorted order in which for all $i$, $w_{x_i} \ge w_{x_{i+1}}$. We choose our basis greedily from this order. In particular, say that we have already chosen points $x_{j_1},\ldots,x_{j_{i-1}}$, and would like to select the $i$th point for our basis. Our strategy is simply to pick from the available points the one with the largest possible weight:
\[
j_i = \min\{ j : x_j \not\in \text{Span}( x_{j_1},\ldots,x_{j_{i-1}})  \}.
\]
Our goal is to prove that for this construction:
\[
 \sum\limits_{i=1}^d w_{x_{j_i}} \geq d \sum\limits_{x \in X}\mu(x)w_x.
\]
To see why this might be the case, consider the indices $\mu_i, 1 < i \leq  d$, which partition $X$ into segments with approximately equal measure:
\[
\mu_i = \min \left \{ j \in [n] : \sum\limits_{k=1}^j w_{x_k} > \frac{i-1}{d} \right \}.
\]
For notational convenience, let $\mu_1=1$, and $\mu_{d+1}=n+1$.
The idea is that the points in our greedy basis must have at least as much weight as the $w_{x_{\mu_i}}$, since by assumption, the span of $\{x_{j_1},\ldots,x_{j_{i-1}}\}$ can have at most measure $\frac{i-1}{d}$. In other words, it must be the case that:
\begin{align}\label{eq:basis-weight-2}
    \sum\limits_{i=1}^d w_{x_{j_i}} \geq \sum\limits_{i=1}^d w_{x_{\mu_i}}.
\end{align}
Informally, each $w_{\mu_i}$ corresponds to the largest weight in disjoint $1/d$ measure segments of $(X,\mu)$, so the sum of the weights must be at least $d$ times the average, providing our contradiction. In reality, however, the proof is complicated slightly by the fact that each segment may not have measure exactly $1/d$. In more detail, consider the partition defined by the $\mu_i$'s, $X=X_1 \amalg \ldots \amalg X_d$, where each $X_i$ is:
\[
X_i = \{x_{\mu_{i}},\ldots,{x_{\mu_{i+1}-1}} \}.
\]
Assuming no single point has measure greater than $1/d$ (which by itself would provide a contradiction), the $X_i$ are non-empty. To make the measure of each segment exactly $1/d$ so we may apply the reasoning above, we slightly modify each set by splitting up the measure of $x_{\mu_i}$ between $X_i$ and $X_{i-1}$. In particular, for each $1 < i \leq d$, we define two copies $x_{\mu_i}^1$, and $x_{\mu_i}^2$, where the mass of $x_{\mu_i}^1$ is:
\[
\mu(x_{\mu_i}^1) = \frac{i}{d} - \mu(X_1 \cup \ldots \cup X_{i-1}),
\]
and $x_{\mu_i}^2$ has the remaining mass:
\[
\mu(x_{\mu_i}^2) = \mu(x_{\mu_i}) - \mu(x_{\mu_i}^1).
\]
Let $X'$ be the set resulting from replacing each $x_{\mu_i}$ with $\{x_{\mu_i}^1,x_{\mu_i}^2\}$. We define a new partition $X'= X_1' \amalg \ldots \amalg X_d'$:
\[
X_i' = \{x_{\mu_i}^1\} \cup \left( X_i \setminus \{x_{\mu_i}\}  \right) \cup \{x_{\mu_{i+1}}^2\},
\]
where each set now has measure exactly $\frac{1}{d}$. Associating the weight $w_{x_{\mu_i}}$ to both $x^1_{\mu_i}$ and $x^2_{\mu_i}$, notice that this modification preserves both order by weight and average weight, ensuring the following equations hold:
\begin{align*}
    w_{x_{\mu_i}} &\geq d \sum\limits_{x' \in X'_i} \mu(x')w_{x'},\\
    \sum\limits_{x \in X}\mu(x)w_x &= \sum\limits_{x' \in X'}\mu(x')w_{x'}.
\end{align*}
Together with \Cref{eq:basis-weight-2}, these allow us to derive our contradiction:
\begin{align*}
    \sum\limits_{i=1}^d w_{x_{j_i}} &\geq \sum\limits_{i=1}^d w_{x_{\mu_i}}\\
    & \geq \sum\limits_{i=1}^d d\sum\limits_{x' \in X_i'} \mu(x')w_{x'}\\
    & = d\sum\limits_{x' \in X'} \mu(x')w_{x'}\\
    & = d\sum\limits_{x \in X} \mu(x)w_x
\end{align*}
\end{proof}
Noting that the identity acts as the desired transform for any $(X,\mu)$ in one dimension, \Cref{overview:prop-iso} follows as a corollary by induction:
\begin{corollary}[\Cref{overview:prop-iso}]\label{cor:iso}
Given a pair $(X \subset \R^d,\mu)$, for some $1 \leq k \leq d$, for all $\varepsilon > 0$ there exist:
\begin{enumerate}
    \item A $k$-dimensional subspace $V$ with the property $\mu(X \cap V) \geq \frac{k}{d}$.
    \item An invertible linear transformation $T:V \to V$ such that the pair $((X \cap V)_T, (\mu|_{X \cap V})_T)$ is in $\varepsilon$-isotropic position,
\end{enumerate}
where $\mu|_{X \cap V}$ denotes the normalized restriction of $\mu$ to $X \cap V$.
\end{corollary}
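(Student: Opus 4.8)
The plan is to prove \Cref{cor:iso} by induction on the ambient dimension $d$, using \Cref{prop:dense-subspace} as the key black box at each level of the recursion. The base case $d=1$ is immediate: every unit vector $x\in\R^1$ satisfies $x^2=1$, so $\sum_{x}\mu(x)\langle x,v\rangle^2/\norm{v}^2 = 1 = 1/d$ for every nonzero $v$, meaning $(X,\mu)$ is already in $0$-isotropic (hence $\varepsilon$-isotropic) position; we take $k=1$, $V=\R^1$, and $T$ the identity, noting $\mu(X\cap V)=1\ge k/d$.

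For the inductive step, given $(X\subset\R^d,\mu)$ I would ask whether condition (2) of \Cref{prop:dense-subspace} holds for $(X,\mu)$, i.e.\ whether every $k'$-dimensional subspace contains at most a $k'/d$ fraction of $(X,\mu)$. If it does, then \Cref{prop:dense-subspace} directly yields, for every $\varepsilon>0$, an invertible $T:\R^d\to\R^d$ putting $(X_T,\mu_T)$ in $\varepsilon$-isotropic position, and we are done taking $k=d$, $V=\R^d$, since $\mu(X\cap V)=1\ge d/d$. Otherwise there is some $1\le k'<d$ and a $k'$-dimensional subspace $W$ with $\mu(X\cap W)>k'/d$; note this case in particular subsumes the degenerate situation where $X$ fails to span $\R^d$, taking $W=\mathrm{span}(X)$.

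In this second case I would recurse on the pair $(X\cap W\subset W,\ \mu|_{X\cap W})$, which lives in a space of dimension $k'<d$. The inductive hypothesis produces some $1\le k\le k'$, a $k$-dimensional subspace $V\subseteq W$ with $(\mu|_{X\cap W})(X\cap V)\ge k/k'$, and for every $\varepsilon>0$ an invertible $T:V\to V$ putting the correspondingly transformed pair in $\varepsilon$-isotropic position. Two small checks then finish the argument. First, since restricting a distribution and then renormalizing is associative, the normalized restriction of $\mu|_{X\cap W}$ to $X\cap V$ equals $\mu|_{X\cap V}$, so the transformed pair produced recursively is exactly the one named in the corollary. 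Second, unwinding the density bound, $\mu(X\cap V)=\mu(X\cap W)\cdot(\mu|_{X\cap W})(X\cap V) > (k'/d)(k/k') = k/d$, as required; and $1\le k\le k'<d$, so in particular $1\le k\le d$.

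The argument is essentially routine once \Cref{prop:dense-subspace} is in hand, so I do not expect a serious obstacle; the only points demanding care are (i) verifying that restriction and renormalization commute, so that the recursively produced transformed pair matches the one in the statement, and (ii) respecting the quantifier order — namely that $k$ and $V$ are chosen independently of $\varepsilon$ (they are: $k'$, $W$, and the recursively chosen $k,V$ depend only on $(X,\mu)$, while only $T$ depends on $\varepsilon$). It is also worth recording explicitly that the recursion strictly decreases the ambient dimension at each step (from $d$ to $k'<d$), so it terminates, with the $d=1$ base case always reachable.
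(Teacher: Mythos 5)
Your proposal is correct and follows essentially the same route as the paper: induction on the ambient dimension, with \Cref{prop:dense-subspace} supplying either the isotropic transform for the full space (the $k=d$ case) or a dense proper subspace on which to recurse, followed by the same chaining of density bounds $\mu(X\cap V) = \mu(X\cap W)\cdot(\mu|_{X\cap W})(X\cap V) > (k'/d)(k/k') = k/d$. Your extra care about the quantifier order (choosing $k$ and $V$ independently of $\varepsilon$) and about restriction commuting with renormalization is a slight tightening of the paper's argument, not a different one.
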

\begin{proof}
We induct on the dimension $d$. Our base case, $d=1$, follows trivially from setting $T$ to be the identity. For the inductive step assume that the result holds up to dimension $d-1$. We split into two cases based on the existence of an invertible transformation $T$ such that $(X_T, \mu_T)$ is in $\varepsilon$-isotropic position:
\paragraph{Case 1: $T$ exists.} Setting $k=d$, we are done by assumption.
\paragraph{Case 2: $T$ does not exist.}  Applying \Cref{prop:dense-subspace}, for some $1 \leq m < d$ there exists an $m$-dimensional subspace $V_m$ such that $\mu(V_m \cap X) > \frac{m}{d}$. Since the set $V_m \cap X$ lies in $m < d$ dimensions, we can apply our inductive hypothesis, that there exists a $k$-dimensional subspace $V_k \subseteq V_m$ such that:
\begin{enumerate}
    \item $\mu(V_k \cap (V_m \cap X)) = \mu(V_k \cap X) \geq \frac{k}{m}\mu(V_m \cap X) > \frac{k}{d}$ and,
    \item $V_k \cap X$ can be transformed into $\varepsilon$-isotropic position,
\end{enumerate}
which completes the proof.
\end{proof}
In \Cref{app:iso}, we provide a slightly more complicated characterization for the existence of exact isotropic transforms. While the exact result is unnecessary for our work, it generalizes the work of Forster \cite{forster2002linear} and may be of independent interest. 

Finally, we are in position to prove \Cref{thm:arbitrary-ww-learner}.
\begin{proof}[Proof of \Cref{thm:arbitrary-ww-learner}]
By \Cref{cor:iso}, there exists a subspace $V$ of dimension $1 \leq k \leq d$ with the properties:
\begin{enumerate}
    \item $\mu(V \cap X) \geq \frac{k}{d}$
    \item There exists an invertible linear transformation $T: V \to V$ such that $((X\cap V)_T,(\mu|_{X \cap V})_T)$ is in $1/4$-isotropic position.
\end{enumerate}
Kane, Lovett, and Moran~\cite{kane2018generalized} observe that point location is invariant to invertible linear transformations. In more detail, let $T$ be such a transformation, $h' = (T^{-1})^{\top}(h)$, and $x' = \frac{T(x)}{\norm{T(x)}}$. Observe that $\langle x',h' \rangle = \langle \frac{x}{\norm{T(x)}}, h \rangle$. Thus not only is it sufficient to learn the labels of $x'$ with respect to $h'$, but we can simulate linear queries on $h'$ simply by normalizing $x$ by an appropriate constant.

In other words, we are free to act as if we are learning over $((X\cap V)_T,(\mu|_{X \cap V})_T)$, which is in $1/4$-isotropic position. Restricting to the $k$-dimensional subspace $V$, run \textsc{IsoLearn}\footnote{For simplicity we presented \text{IsoLearn} over $\mathbb{R}^k$, but it easy to see it may be performed over any $k$-dimensional Euclidean space.} with probability parameter $p/2$ using a simulated margin oracle $\bigo_{\lambda}$ for  $\lambda=\frac{\text{poly}(d)}{p}$. By \Cref{lemma:simulate-margin-oracle}, union bounding over the at most $O(d\log(d)\log(d/p))$ calls to $\bigo_{\lambda}$ shows that with probability at least $1-p/2$, all oracle calls will be successful. Since \textsc{IsoLearn} is otherwise $(1-p/2)$-reliable, the resulting learner is $p$-reliable. Further, with probability $1-p/2$ there is some parameter $1 \leq m \leq k$ such that \textsc{IsoLearn} has coverage at least $\frac{m}{k}$ on $(X \cap V, \mu|_{X \cap V})$ and makes at most $O(m\log(k)\log(k\lambda/p))$ queries. Since $\mu(X \cap V) \geq \frac{k}{d}$, this implies that the resulting learner's coverage over $(X,\mu)$ is at least $\frac{m}{d}$ as desired. The remaining properties follow directly from \textsc{IsoLearn}.
\end{proof}
\section{Bounded-error LDT: Boosting}\label{sec:boosting}
In this section, we provide the details of how to apply a boosting procedure to \textsc{PartialLearn} in order build a randomized LDT $T$ that $\delta$-reliably computes point location. Informally, recall that this means that for any hyperplane $h$, with probability at least $1-\delta$, $T$ labels \textit{every} point in $X$ correctly.
\begin{theorem}[\Cref{overview:bldt}]\label{thm:bldt}
Let $X \subset \mathbb{R}^d$, $|X|=n$. Then there exists a randomized LDT $T$ that $\delta$-reliably computes the point location problem on $X$ with maximum depth:
\[
\MD(T) \leq O\left(d\log^2(d)\log\left(\frac{n}{\delta}\right)\right).
\]
\end{theorem}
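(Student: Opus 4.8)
The plan is to boost the constant-coverage weak learner \textsc{WeakLearn} (\Cref{overview:w-leaner}) by an AdaBoost-style reweighting and then recover a global labeling by a majority vote across all the weak learners produced. Fix $R = C\log(n/\delta)$ for a sufficiently large universal constant $C$. Start from $\mu_0$ uniform on $X$ and maintain unnormalized weights with $w_0 \equiv 1$; at round $t = 1,\dots,R$, run \textsc{WeakLearn} on $(X,\mu_{t-1})$, where $\mu_{t-1} = w_{t-1}/W_{t-1}$ and $W_{t-1} = \sum_{x} w_{t-1}(x)$, obtaining a partial labeling $L_t\colon X \to \{-,0,+,\bot\}$. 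For every $x$ with $L_t(x)\neq\bot$ set $w_t(x) = \tfrac12 w_{t-1}(x)$, and $w_t(x)=w_{t-1}(x)$ otherwise. After $R$ rounds, output for each $x$ the majority label among $\{L_t(x): L_t(x)\neq\bot,\ 1\le t\le R\}$, breaking ties arbitrarily. Fixing all internal randomness turns this into a deterministic LDT (run \textsc{WeakLearn}'s tree, read off the labeling at the leaf, deterministically update $\mu_t$, recurse), so the construction is a valid randomized LDT. Since each run of \textsc{WeakLearn} uses at most $O(d\log^2 d)$ queries in the worst case and the reweighting and voting use none, its maximum depth is $O(d\log^2(d)\log(n/\delta))$, as claimed; it remains to establish $\delta$-reliability.

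Two events drive the analysis; let $\mathcal F_t$ denote the $\sigma$-algebra generated by the first $t$ executions. Call round $t$ \emph{covering} if the coverage of \textsc{WeakLearn} in that round with respect to $\mu_{t-1}$ (an $\mathcal F_{t-1}$-measurable distribution) is at least $0.99$; by the guarantee of \textsc{WeakLearn}, $\Pr[\text{round }t\text{ covering}\mid \mathcal F_{t-1}] \ge 0.99$, so $\sum_t \mathbbm{1}[\text{covering}_t]$ stochastically dominates $\bin(R,0.99)$ and a Chernoff bound gives at least $0.9R$ covering rounds except with probability $e^{-\Omega(R)} \le \delta/2$. On a covering round the labeled set $C_t$ has $\mu_{t-1}$-mass at least $0.99$, so $W_t = W_{t-1} - \tfrac12\sum_{x\in C_t} w_{t-1}(x) \le (1-0.495)\,W_{t-1} = 0.505\,W_{t-1}$, while on every round $W_t \le W_{t-1}$; hence on this event $W_R \le n\,(0.505)^{0.9R}$. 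If some point $x$ were labeled by fewer than $R/20$ of the learners, then $w_R(x) = (1/2)^{\#\{t:\,L_t(x)\neq\bot\}}$ and
\[
(1/2)^{R/20} \;<\; w_R(x) \;\le\; W_R \;\le\; n\,(0.505)^{0.9R},
\]
which is false once $C$ (hence $R$) is a large enough constant multiple of $\log n$ (taking logarithms, this forces $\log n \gtrsim 0.58\,R$). Therefore, except with probability $\delta/2$, every point of $X$ is labeled by at least $R/20$ of the $R$ learners.

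For correctness of the vote, call round $t$ \emph{faulty} if $L_t$ makes an error; since \textsc{WeakLearn} is $0.01$-reliable, $\Pr[\text{round }t\text{ faulty}\mid \mathcal F_{t-1}] \le 0.01$, so $\sum_t \mathbbm{1}[\text{faulty}_t]$ is stochastically dominated by $\bin(R,0.01)$ and a Chernoff bound gives at most $0.02R$ faulty rounds except with probability $e^{-\Omega(R)}\le \delta/2$. Condition on both good events, which together occur with probability at least $1-\delta$. Fix $x\in X$: at least $R/20$ learners assign it a non-$\bot$ label, and of these at most $0.02R$ arise from faulty rounds and could be incorrect, so at least $R/20 - 0.02R = 0.03R$ learners assign $x$ its correct label while at most $0.02R$ assign a wrong one; since $0.03R > 0.02R$, the majority label of $x$ is correct. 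As this holds for all $x\in X$ simultaneously, the output labeling makes no errors. Hence with probability at least $1-\delta$ the LDT makes no errors, i.e.\ it $\delta$-reliably computes point location on $X$.

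The computation is a routine boosting/covering argument, and I do not expect a genuine obstacle. The one point that needs care is the adaptive bookkeeping: the coverage and reliability guarantees of \textsc{WeakLearn} hold for each round conditioned on the history, which is why the stochastic-domination (martingale) form of the Chernoff bound is used above rather than a naive i.i.d.\ bound, and why the constants ($\beta=\tfrac12$, coverage $0.99$, reliability $0.01$, threshold $R/20$) must be chosen so that the two Chernoff slacks together with the weight-decay rate leave strict room for a correct majority on every point.
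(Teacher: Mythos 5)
Your proposal is correct and follows essentially the same route as the paper's proof of \Cref{thm:bldt}: repeatedly run the constant-coverage \textsc{WeakLearn} on a multiplicatively reweighted distribution for $O(\log(n/\delta))$ rounds, use the total-weight decay to argue every point is labeled a constant fraction of the time, and take a majority vote, with two Chernoff bounds controlling the numbers of low-coverage and faulty rounds (the paper uses decay factor $1/11$ and threshold $5\%$ where you use $1/2$ and $R/20$, but the arithmetic closes either way). Your explicit martingale/stochastic-domination handling of the adaptively chosen distributions $\mu_{t-1}$ is in fact slightly more careful than the paper's "independent Bernoulli" phrasing.
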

This theorem follows from the combination of two boosting procedures. To simplify the process, we first apply a standard boosting process from~\cite{kane2018generalized} to create a $.01$-reliable learner for arbitrary $(X,\mu)$ that has $99\%$ coverage with $99\%$ probability.
\begin{algorithm}[h!]
\SetAlgoLined
\KwIn{Pair $(X \subset \R^d, \mu)$.}
\nonl \textbf{Output:} Partial labeling of $X$.\\
\nonl\textbf{Constants:} $p = 1/\poly(d)$, constant $c$ such that \textsc{PartialLearn} uses no more than $ck\log^2(d)$ queries with probability $1-p$.\\
\textbf{Initialize:} $i=0$, $Q = 0$ and $X_0 = X$\\
\While{$Q \leq 5cd \log^2 (d)$ and $X_i \neq \{\}$}  {
    Run \textsc{PartialLearn} on $(X_i,\mu|_{X_i})$.\\
    Set $X_{i+1}$ to be the set of un-inferred points in $X_i$.\\
    $i \gets i+1$
}
 \caption{\textsc{WeakLearn}}
 \label{alg:weaklearner}
\end{algorithm}
\begin{lemma}[\textsc{WeakLearn} (\Cref{overview:w-leaner})]\label{lemma:weakLearner}
Given a pair $(X \subset \R^d,\mu)$, there exists a weak learner \textsc{WeakLearn} with the following guarantees. For any (unknown) hyperplane $h \in \R^d$:
\begin{enumerate}
    \item \textsc{WeakLearn} is $.01$-reliable.
    \item With probability at least $.99$, \textsc{WeakLearn} has coverage at least $.99$.
    \item \textsc{WeakLearn} makes at most $O(d\log^2(d))$ queries.
\end{enumerate}
\end{lemma}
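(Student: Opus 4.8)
The plan is to analyze Algorithm~\ref{alg:weaklearner} directly, treating it as a simple coverage‑boosting loop around \textsc{PartialLearn} (\Cref{thm:arbitrary-ww-learner}). Fix $p = 1/\poly(d)$ with a polynomial large enough for the union bound below; note this makes $\log(d/p) = O(\log d)$, so a single run of \textsc{PartialLearn} makes at most $O(d\log^2(d))$ queries unconditionally, and at most $ck\log^2(d)$ queries for some $\tfrac1{10}\le k\le d$ with probability $1-p$ (the constant $c$ being the one fixed in the algorithm). Property~3 is then immediate: the loop exits as soon as $Q > 5cd\log^2(d)$, and the final invocation adds at most another $O(d\log^2(d))$, so $\MD(\textsc{WeakLearn}) = O(d\log^2(d))$ on every input.

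For properties~1 and~2, I would first bound the number of iterations: every invocation of \textsc{PartialLearn} makes at least one query (it runs \textsc{StructureSearch} before any early return), so the loop, which continues only while $Q \le 5cd\log^2(d)$, performs at most $N = 5cd\log^2(d) + 1 = O(d\log^2(d))$ iterations. Let $E$ be the event that all $N$ possible invocations simultaneously satisfy the conclusions of \Cref{thm:arbitrary-ww-learner}: each is error‑free, and the $i$‑th has coverage at least $k_i/d$ while using at most $ck_i\log^2(d)$ queries for some $\tfrac1{10}\le k_i\le d$. A union bound over the $p$‑reliability and per‑run guarantees gives $\Pr[\overline{E}] \le Np \le 0.01$, by our choice of $p$. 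Since under $E$ no run errs, \textsc{WeakLearn} errs with probability at most $0.01$, which is property~1.

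It remains to derive property~2, again conditioning on $E$. Track the residual mass $\mu(X_i)$ measured against the original distribution $\mu$; since the $i$‑th run infers a $k_i/d$ fraction of $(X_i,\mu|_{X_i})$, we have $\mu(X_{i+1}) \le \mu(X_i)\bigl(1 - k_i/d\bigr)$, hence $\mu(X_m) \le \exp\!\bigl(-\tfrac1d\sum_{i<m} k_i\bigr)$ after $m$ iterations. When the loop halts, either $X_m = \emptyset$ and the coverage is $1$, or $Q > 5cd\log^2(d)$; in the latter case $c\log^2(d)\sum_{i<m} k_i \ge Q > 5cd\log^2(d)$, so $\sum_{i<m} k_i > 5d$ and $\mu(X_m) < e^{-5} < 0.01$, i.e. the coverage is at least $0.99$. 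Thus conditioned on $E$ — an event of probability at least $0.99$ — the coverage is at least $0.99$, which is property~2. The only mildly delicate bookkeeping is this coupling between the query budget $5cd\log^2(d)$ and the $e^{-5}$ bound on residual mass; I do not anticipate a genuine obstacle, since all of the real work is already contained in \textsc{PartialLearn}.
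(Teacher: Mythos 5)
Your proposal is correct and follows essentially the same argument as the paper: run \textsc{PartialLearn} with $p=1/\poly(d)$ on the residual set until the $5cd\log^2(d)$ query budget is exhausted, union-bound over the $O(d\log^2 d)$ invocations to get reliability and per-round guarantees, and use $\mu(X_m)\le\exp(-\tfrac1d\sum k_i)\le e^{-5}$ for coverage. Your bookkeeping (bounding the iteration count via one query per invocation, and coupling the query budget to $\sum k_i>5d$) is in fact slightly more careful than the paper's own write-up.
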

\begin{proof}
We apply the boosting procedure of~\cite{kane2018generalized}--restricting each round to the set of un-inferred points.
In particular, setting the probability parameter $p=1/\text{poly(d)}$, \textsc{PartialLearn} provides a $p$-reliable learner which, with probability $1-p$, for some $1\leq k \leq d$ has coverage at least $k/d$ while making fewer than $ck\log^2(d)$ queries for some constant $c>0$.
Consider the boosting process laid out in \Cref{alg:weaklearner}, that is setting $i = 0$, $X_0 = X$:
\begin{enumerate}
    \item Run \textsc{PartialLearn} (\Cref{thm:arbitrary-ww-learner}) on $(X_i,\mu|_{X_i})$.
    \item Set $X_{i+1}$ to be the set of un-inferred points in $X_i$. Set $i \leftarrow i+1$ and repeat.
\end{enumerate}
If at any point in the process more than $5cd \log^2 (d)$ queries are used, we abort. We analyze the coverage and reliability of this process.
Let $k_i$ denote the $k$ parameter for the $i$th learner, and let $t$ denote the final complete iteration before the learner is aborted. In the event that each learner uses at most $ck\log^2(d)$ queries, we can lower bound the sum of the $k_i$:
\begin{align*}
    \sum\limits_{i=1}^t k_i \geq 5.
\end{align*}
Assume further that each weak learner has coverage $k_i/d$ and makes no errors. We can then bound the coverage of our boosted learner by:
\begin{align*}
    C_{\mu}(A) &\geq 1 - \prod\limits_{i=0}^{t}\left ( 1- \frac{k}{d}\right )\\
    & \geq 1- e^{\frac{1}{d}\sum\limits_{i=0}^{t} k_i}\\
    & \geq 1 - e^{-5}.
\end{align*}
Notice that since we run at most $5cd\log^2(d)$ weak learners, for sufficiently small $p$ union bounding over these events gives that with at least $99\%$ probability, the resulting boosted weak learner makes no errors and has at least $99\%$ coverage.
\end{proof}
Unfortunately, this boosting procedure is not efficient enough to use for learning all of $X$. We would be forced to set the correctness probability for our weak learner too low, costing additional factors. Instead, we employ a boosting procedure that relies on re-weighting learned points. 
\begin{lemma}[\textsc{Boosting}]\label{lemma:boosting}
Let $X \subset \mathbb{R}^d$ be a finite set of size $n$. If there exists for all distributions $\mu$ over $X$ a learner $A_{\mu}$ with the following guarantees:
\begin{enumerate}
    \item $A_{\mu}$ is $.01$-reliable.
    \item With probability at least $.99$, $A_{\mu}$ has coverage at least $.99$.
    \item $A_{\mu}$ uses at most $Q$ queries.
\end{enumerate}
Then there exists a randomized LDT $T$ that $\delta$-reliably computes the point location problem on $X$ with maximum depth:
\[
\MD(T) \leq O(Q\log(n/\delta))
\]
\end{lemma}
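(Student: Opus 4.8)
\medskip
\noindent\textbf{Proof strategy.} The plan is a multiplicative-weights boosting argument in the spirit of the one from~\cite{kane2018generalized}. Set $R = C\log(n/\delta)$ for a sufficiently large constant $C$, initialize weights $w_0(x)=1$ for every $x\in X$, and for $r=0,1,\dots,R-1$ do the following: let $\mu_r$ be the normalization of $w_r$, run a fresh independent copy of the learner $A_{\mu_r}$, let $L_r\subseteq X$ be the set of points it labels (those $x$ with $A_{\mu_r}(x)\neq\bot$), and reweight by halving the weight of every point it labeled, i.e.\ $w_{r+1}(x)=w_r(x)/2$ for $x\in L_r$ and $w_{r+1}(x)=w_r(x)$ otherwise. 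The final tree $T$ outputs, for each $x$, the most frequently assigned label among the rounds $r$ with $x\in L_r$ (breaking ties arbitrarily). Since every deterministic realization of this procedure makes at most $R$ invocations of $A_\mu$, each costing at most $Q$ queries, we immediately get $\MD(T)\le RQ = O(Q\log(n/\delta))$, so the content is in proving $\delta$-reliability.

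The first step is to control the potential $W_r=\sum_{x\in X}w_r(x)$. Call round $r$ \emph{good} if its coverage satisfies $\mu_r(L_r)\ge .99$; then $W_{r+1}=W_r\bigl(1-\tfrac12\mu_r(L_r)\bigr)\le 0.51\,W_r$, while trivially $W_{r+1}\le W_r$ in any round. Hence if $g$ of the rounds are good, $W_R\le n\cdot 0.51^{\,g}$. On the other hand, if $\ell(x)$ denotes the number of rounds in which $x$ was labeled, then $w_R(x)=2^{-\ell(x)}$, so $2^{-\min_x\ell(x)}\le W_R\le n\cdot 0.51^{\,g}$, which rearranges to $\min_x\ell(x)\ge 0.97\,g-\log_2 n$. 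This is the crucial mechanism: the reweighting forces \emph{every} point --- even adversarial small-margin points that many individual learners decline to label --- to be covered a constant fraction of the time, as long as enough rounds are good.

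The next step is concentration. Because each $\mu_r$ depends on the randomness of the earlier rounds, goodness is not independent across rounds; however, property~(2) of $A_\mu$ guarantees that, conditioned on the history through round $r-1$, round $r$ is good with probability at least $.99$, so the number of non-good rounds is stochastically dominated by $\bin(R,.01)$. A Chernoff bound then gives $g\ge R/2$ except with probability $2^{-\Omega(R)}$, and choosing $C$ large enough that $R\ge 100\log_2 n$ we obtain $\min_x\ell(x)\ge 0.47\,R$ on this event. By the identical filtration argument applied to property~(1), the number of rounds in which $A_{\mu_r}$ is \emph{not} reliable (i.e.\ makes some error) is also stochastically dominated by $\bin(R,.01)$, hence at most $R/8$ except with probability $2^{-\Omega(R)}$. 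Condition on both of these events. Fix any $x\in X$: it is labeled in at least $0.47R$ rounds, at most $R/8$ of which belong to an unreliable learner, so in at least $0.47R-0.125R>0.3R$ of them $x$ receives its true label, while no incorrect label can be assigned to $x$ more than $0.125R$ times; hence the true label is the unique mode and $T$ outputs it. Taking $C$ large enough that each of the two failure probabilities is below $\delta/2$ and union bounding, with probability at least $1-\delta$ the tree $T$ correctly labels all of $X$, which is exactly $\delta$-reliable computation.

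The main obstacle is the second and third steps taken together: one must verify that halving weights on covered points really does drive the minimum coverage count up linearly in the number of good rounds, and that the adaptivity of the $\mu_r$ across rounds does not invalidate the concentration estimates --- the latter being handled by replacing ``independent across rounds'' with ``stochastically dominated by a binomial along the natural filtration.'' Given these, the majority-type vote and the query accounting are routine.
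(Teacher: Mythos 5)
Your proposal is correct and follows essentially the same route as the paper's proof: multiplicative down-weighting of covered points, a potential argument showing every point must be labeled in a constant fraction of rounds, Chernoff-type concentration for the coverage and reliability events, and a majority vote (the paper uses a factor $1/11$ rather than $1/2$ and slightly different constants, but the mechanism is identical). If anything, your treatment of the adaptivity of the distributions $\mu_r$ via stochastic domination along the filtration is more careful than the paper's, which informally treats the rounds as independent Bernoulli trials.
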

\begin{algorithm}[h!]
\SetAlgoLined
\KwIn{Set of points $X \subset \R^d$ and Margin oracle $\bigo_\lambda$.}
\nonl \textbf{Output:} Labeling of $X$.\\
\nonl\textbf{Constants:} number of iterations $T = O(\log(n/\delta))$.\\
\textbf{Initialize:} iteration $i = 0$, weights $\text{wt}(x) = 1$ and $\mu$ as the distribution induced by normalizing these weights.\\
Normalize points in $X$ to be unit vectors.\\ 
\While{$i \leq T$}  {
    Run \textsc{WeakLearn} (\Cref{alg:weaklearner}) on $(X,\mu)$ to receive partial labeling $A_i$.\\
    Update $\text{wt}(x) = \text{wt}(x)/11$ for all points $x \in X$ such that $A_i(x)\neq \bot$.\\
    Update $\mu$ according to the new weights.\\
    $i \gets i+1$
}
For each $x \in X$, label $x$ as the majority non-$\bot$ label from $\{A_i(x)\}_{i=1}^T$.
 \caption{\textsc{Boosting}}
 \label{alg:boosting}
\end{algorithm}
\begin{proof}
To begin, note that normalizing vectors in $X$ does not change their labels; we may therefore assume without loss of generality that $X$ consists of unit vectors. Assign each point $x \in X$ a weight $\text{wt}(x)$, initialized to $1$. Let the distribution over $X$ induced by normalizing these weights be denoted $\mu$.
We repeat the following strategy for $T=O(\log(n/\delta))$ iterations: 
\begin{enumerate}
    \item Run the weak learner $A_{\mu}$.
    \item Multiplicatively decrease the weight of any learned point by $\frac{1}{11}$.
\end{enumerate}
The idea behind exponentially decreasing the weights of learned points is that it forces each point to be labeled many times throughout this process. Since our weak learner is $.01$-reliable, most of these predictions must be correct, and in particular the majority prediction will be correct with high probability.

In greater detail, recall that our weak learner returns a partial classification with no mistakes with at least $99\%$ probability. Treating each run as an independent Bernoulli process, the probability that more than $2\%$ of runs have an error is at most $\delta/2$ by a Chernoff bound. If we can prove that each point will be labeled in at least $5\%$ of the iterations with probability at least $1-\delta/2$, then the majority label will be  correct for all points with probability $1-\delta$.

To prove that each point is labeled in at least $5\%$ of the iterations, notice that because our learner has $99\%$ coverage with at least $99\%$ probably, a Chernoff bound gives that at least $98\%$ of runs have $99\%$ coverage with probability at least $1-\delta/2$. Assuming this is the case, each run with $99\%$ coverage must reduce the total weight by at least $.01 \cdot 1 + 0.99 \cdot 1/11 = 1/10$. This means we can upper bound the total weight of $X$ after our process finishes by:
\[
\sum\limits_{x \in X} \text{wt}(x) \le n\left (\frac{1}{10} \right )^{0.98T}.
\]
On the other hand, assume some point $x$ was labeled in fewer than $5\%$ of runs. Then we can lower bound wt(x) at the end of the process by:
\[
wt(x) \geq \left(\frac{1}{11} \right )^{.05T}.
\]
For $T=O(\log(n/\delta))$ sufficiently large, this provides a contradiction.
\end{proof}
Together, \textsc{WeakLearn} and \Cref{lemma:boosting} immediately imply the desired bounded-error LDT.
\begin{proof}[Proof of \Cref{thm:bldt}]
 \textsc{WeakLearn} satisfies the conditions required by \Cref{lemma:boosting} with parameter $Q$ at most $O(d\log^2(d))$.
\end{proof}
Solving the point location problem in this bounded-error model is also sufficient to build a nearly optimal algorithm for actively learning homogeneous hyperplanes, denoted $\mathcal H_d^0$, with membership queries. The idea is simple. The sample complexity of passively PAC-learning this class is well known due to classic results on learning and VC dimension~\cite{vapnik1974theory, Blumer, hanneke2016optimal}. Once we have drawn a sample, building a learner reduces to solving a point location problem.
\begin{corollary}[\Cref{intro:active} (homogeneous case)]\label{thm:active}
There exists an active learner for $(\mathbb{R}^d,\mathcal H_d^0)$ using only
\[
n(\varepsilon,\delta) = O\left(\frac{d+\log(1/\delta)}{\varepsilon}\right)
\]
unlabeled samples, and 
\[
q(\varepsilon,\delta) = O\left (d\log^2(d)\log\left ( \frac{n(\varepsilon,\delta)}{\delta} \right ) \right )
\]
membership queries.
\end{corollary}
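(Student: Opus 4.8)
The plan is to combine two standard ingredients: the classical sample-complexity bound for PAC-learning a concept class of bounded VC dimension in the realizable setting, and the bounded-error linear decision tree of \Cref{thm:bldt}. Recall that $\mathcal H_d^0$ has VC dimension $d$, so by the classical results on learning and VC dimension~\cite{vapnik1974theory,Blumer} (with the tightening of Hanneke~\cite{hanneke2016optimal}), there is a constant $C$ such that if $X$ consists of $n(\varepsilon,\delta) = C\frac{d+\log(1/\delta)}{\varepsilon}$ i.i.d.\ unlabeled points drawn from the adversarially chosen distribution $D$, then with probability at least $1-\delta/2$ over $X$, \emph{every} hypothesis $g \in \mathcal H_d^0$ that agrees with the target $h$ on all of $X$ satisfies $\Pr_{x\sim D}[g(x)\neq h(x)] \le \varepsilon$.

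First I would draw such a sample $X$ of $n(\varepsilon,\delta)$ unlabeled points. Since we are in the membership-query-synthesis model, the learner may query the label $h(x) = \sign(\langle x, h\rangle)$ of any $x \in \mathbb{R}^d$, and in particular one membership query exactly implements one linear query of an LDT run on $X$. I would therefore run the randomized LDT $T$ of \Cref{thm:bldt} on the point set $X$ with reliability parameter $\delta/2$, simulating each internal node by a membership query. This produces a labeling of all of $X$ using at most $\MD(T) = O\!\left(d\log^2(d)\log(n(\varepsilon,\delta)/\delta)\right)$ membership queries, and by \Cref{thm:bldt} the labeling is correct on every point of $X$ with probability at least $1-\delta/2$. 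Finally, the learner outputs any $g \in \mathcal H_d^0$ consistent with the learned labels on $X$; such a $g$ exists since the (realizable) target $h$ itself is consistent.

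A union bound over the two failure events --- the sample being ``bad'' for uniform convergence, and the LDT mislabelling some point of $X$ --- shows that with probability at least $1-\delta$ the output $g$ agrees with $h$ on $X$ and hence has error at most $\varepsilon$ with respect to $D$. Substituting $n(\varepsilon,\delta) = O((d+\log(1/\delta))/\varepsilon)$ into the query bound yields $q(\varepsilon,\delta) = O\!\left(d\log^2(d)\log\!\left(\tfrac{n(\varepsilon,\delta)}{\delta}\right)\right)$, as claimed. There is no genuinely hard step here --- all the content sits in \Cref{thm:bldt}, and this corollary is the routine reduction from active PAC learning to point location. The only point needing a little care is the degenerate ternary label $0$ (points with $\langle x, h\rangle = 0$): one either folds the $D$-mass of this event into $\varepsilon$, or adopts the convention that such points are labelled $0$ by both $h$ and the chosen consistent $g$; either way the asymptotics are unaffected.
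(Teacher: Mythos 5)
Your proposal is correct and is essentially the paper's own proof: draw a VC-dimension-sized sample, run the bounded-error LDT of \Cref{thm:bldt} with failure probability $\delta/2$, and union-bound the sampling and labeling failure events. The one small point handled differently is the ternary/binary mismatch: the LDT's internal queries are ternary while the membership oracle is binary, and the paper resolves this by simulating each ternary query $\sign(\langle x,h\rangle)$ with the two binary queries $Q_x(h)$ and $Q_{-x}(h)$ (a constant-factor overhead), rather than by your output-level convention for zero-margin points.
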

\begin{proof}
Recall that $n(\varepsilon,\delta)$ is the sample complexity of learning $(\mathbb{R}^d,\mathcal H_d^0)$, the number of samples needed to passively learn the distribution and hidden classifier up to $\varepsilon$ error with probability $1-\delta$. Although our learner errs with some probability, we can set this $\delta$ to $\delta/2$, and learn the sample with probability at least $1-\delta/2$ as well. Thus our goal is simply to solve a point location problem with error at most $1-\delta/2$, which \Cref{thm:bldt} proves can be done in the desired number of queries. While \Cref{thm:bldt} uses ternary queries, in the homogeneous case these can be easily simulated via two binary queries: $Q_{x}(h)$ and $Q_{-x}(h)$.
\end{proof}
The full non-homogeneous version of \Cref{intro:active} follows from the same general argument combined with our generalization of \Cref{thm:bldt} to non-homogeneous hyperplanes and binary queries given in \Cref{sec:non-homogeneous}.
\section{Zero-error LDT: Verification}\label{sec:verification}
Up to this point, we have allowed our learner to err with low probability. While this is sufficient for the standard models in learning theory, point location in the computational geometry literature is often studied from the standpoint of zero-error. We now show how to employ a verification process for our learner that removes all errors at the cost of an additional additive $d^{1+o(1)}$ factor in the depth of our LDT. 
\begin{theorem}[\Cref{overview:zldt}]\label{thm:zldt}
Let $X \subset \mathbb{R}^d$ be an arbitrary finite set. Then there exists a randomized LDT $T$ that  reliably computes the point location problem on $X$ with expected depth:
\[
\ED(T) \leq O\left(d\log^2(d)\log\left(n\right)\right) + d \cdot 2^{O\left(\sqrt{\log(d)\log\log(d)}\right)}.
\]
\end{theorem}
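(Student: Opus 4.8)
The plan is to build the zero-error LDT by running the naive boosting of \textsc{PartialLearn} (in the sense of \textsc{WeakLearn}) in \emph{batches} of $d^2$ consecutive weak learners, each batch run on the currently uninferred set, and to interleave a cheap \emph{verification} step after each batch: if verification certifies that no learner in the batch erred we continue, otherwise we report failure and restart the whole process. Setting the probability parameter of \textsc{PartialLearn} to $p = 1/\poly(d)$ makes any single batch (which contains $d^2$ learners) fail with probability at most $1/2$, so the expected number of restarts is $O(1)$, and the bound on $\ED(T)$ follows from bounding the expected cost of one pass. The first term $O(d\log^2(d)\log n)$ will come from the labeling work alone: each \textsc{PartialLearn} call labels a $k/d$ fraction using $\tilde O(k)$ queries, so by the standard potential argument (as in \Cref{lemma:weakLearner}, continued until $\prod_i(1-k_i/d) < 1/n$) labeling all of $X$ costs $O(d\log^2(d)\log n)$ queries in expectation. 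Everything else is the cost of verification, and the heart of the proof is to show that verifying one batch costs only $d\cdot 2^{O(\sqrt{\log(d)\log\log(d)})}$ queries in expectation.

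To verify a batch I would first pin down the source of error in \textsc{PartialLearn}: tracing through \textsc{IsoLearn} (\Cref{lemma:isotropic-oracle-learner}) the only failure mode is that the ``low margin'' subspace $V$ produced by \textsc{DimReduce} (\Cref{lemma:subspace-algorithm}) fails \Cref{overview:eq-basis}, and by \Cref{lemma:subspace-structure} it suffices instead to certify the $O(d\log d)$ inequalities of \Cref{overview:eq-small}, one family per learner, saying that the sampled small-margin points $S_i$ really are small relative to that learner's reference point $x_i$. The reduction to matrix verification then exploits the extra information \textsc{IsoLearn} produces for free: by \Cref{eq:approx-relative}, for every inferred point $y$ it also learns a $2$-approximation of the relative margin $\langle y,h\rangle/\langle x_{\text{ref}},h\rangle$ — this is precisely why the slack between \Cref{eq:inf-small} and \Cref{eq:inf-small-2} was built in. Since (over the full boosting run) every point of $X$, hence every $s\in S_i$, is eventually labeled by some later learner $j$, its relative margin to $x_j$ is approximately known; chaining these approximations turns each inequality of \Cref{overview:eq-small} for learner $i$ into an inequality $\langle x_i,h\rangle \le C_{ij}\langle x_j,h\rangle$ with an explicit constant $C_{ij}$. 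Verifying the batch from the last learner backwards (the last one of the run can be checked directly) thus reduces to a single matrix verification problem of size $m = O(d\log d)$ on the reference points, yielding $T(d^{O(1)},d) \le C_1 d\log^3(d) + 2V(C_2 d\log d)$ as in \Cref{overview:p-to-v}; for a batch that is not the terminal one, a mild variant of this argument (\Cref{cor:veri-time}) still reduces its verification to a $d$-dimensional point location problem on $\poly(d)$ points.

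The other direction solves matrix verification recursively. For each row $i$ only the minimizing column matters, and computing the column argmin is itself a point location problem — knowing $\sign(\langle C_{ij}x_j - C_{ik}x_k, h\rangle)$ for all $j,k$ determines it — on $\poly(d)$ points. Splitting the $m$ columns into $m/b$ blocks of size $b$, the argmin within a block is a point location problem in only $b$ dimensions (its point set lies in the span of the $\le b$ reference points indexing that block), and the $m/b$ block minima are stitched together by direct comparison; this gives $V(m) \le 2m + m^2/b + (2m/b)\,T(mb^2,b)$ as in \Cref{overview:v-to-p}. Composing the two recurrences and unrolling with block size $b = 2^{\Theta(\sqrt{\log(d)\log\log(d)})}$ — chosen to balance the recursion depth $\log_b d = O(\sqrt{\log(d)/\log\log(d)})$ against the per-level blowup — yields $T(d^{O(1)},d) \le d\cdot 2^{O(\sqrt{\log(d)\log\log(d)})}$, which is \Cref{overview:final-p-bound}. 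Finally, for arbitrary $n$ one notes that a batch of $d^2$ learners reduces the uninferred mass by a factor $e^{-\Omega(d)}$, so only $O(1+\log(n)/d)$ batches occur; each costs $d\cdot 2^{O(\sqrt{\log(d)\log\log(d)})}$ expected queries, and since $d\log^2 d \gg 2^{O(\sqrt{\log(d)\log\log(d)})}$ the $\Theta(\log n/d)$-batch contribution is absorbed into $O(d\log^2(d)\log n)$, leaving the stated additive $d\cdot 2^{O(\sqrt{\log(d)\log\log(d)})}$ term; the degenerate case $h = 0$ is already handled inside \textsc{IsoLearn}. I expect the main obstacle to be the point-location-to-matrix-verification reduction, specifically propagating the $2$-approximate relative-margin estimates through the chain of later learners to extract honest constants $C_{ij}$ while keeping the slack between \Cref{eq:inf-small} and \Cref{eq:inf-small-2} under control; the recurrence analysis and the tuning of $b$ are the technical-but-routine second hurdle.
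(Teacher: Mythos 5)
Your overall route is the same as the paper's: batch the weak learners, reduce each batch's verification to a $\poly(d)$-point location problem via the matrix-verification equivalence (using the $2$-approximate relative margins of \Cref{eq:approx-relative} to build the constraint matrix by backward induction), reduce matrix verification back to lower-dimensional point location via row-wise argmins over column blocks, and solve the composed recurrence. However, two concrete steps fail as written.

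First, restarting \emph{the whole process} when a batch fails verification breaks the expected-depth bound. With $B=\Theta(1+\log(n)/d)$ batches each passing verification with probability only bounded below by $1/2$, a full pass succeeds with probability as small as $2^{-B}$, so the expected number of restarts is up to $2^{\Theta(\log(n)/d)}=n^{\Theta(1/d)}$, not $O(1)$; for fixed $d$ and large $n$ this is polynomial in $n$ and destroys the claimed bound. You must instead retry only the failed batch (its input, the currently uninferred set, is unchanged when you discard its inferences), so that by linearity of expectation each batch contributes at most a constant times its one-shot cost --- this is exactly how the paper's proof of \Cref{thm:zldt} accounts for failures. Second, your block size $b=2^{\Theta(\sqrt{\log(d)\log\log(d)})}$ is the wrong end of the trade-off: in your own recurrence $V(m)\leq 2m+m^2/b+(2m/b)T(mb^2,b)$ with $m=\Theta(d\log d)$, the stitching term $m^2/b$ becomes $d^{2-o(1)}$, which already exceeds the target $d^{1+o(1)}$. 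The correct choice (as in \Cref{cor:veri-time}) makes the \emph{number of blocks} $m/b$, not the block size, equal to roughly $\log(m)\cdot 2^{\sqrt{\log(m)\log\log(m)}}$; then $m^2/b=m\cdot\log(m)2^{\sqrt{\log(m)\log\log(m)}}$, each recursive subproblem has size about $m\cdot 2^{-\sqrt{\log(m)\log\log(m)}}$, the depth is $\sqrt{\log(m)/\log\log(m)}$, and the product of per-level overheads telescopes to $m\cdot 2^{O(\sqrt{\log(m)\log\log(m)})}$. Relatedly, your stated depth $\log_b d$ is not the depth of this recursion: the subproblem size drops from $m$ to about $b\log m$, not from $m$ to $m/b$, so the depth is controlled by the shrinkage factor $m/b$. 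Both issues are repairable without changing your architecture, but as stated neither the restart analysis nor the recurrence unrolling yields the claimed bound.
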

Notice that for large enough $n$ the lefthand term of the query complexity dominates, making our algorithm optimal in this regime up to factors logarithmic in the dimension. Recall that inferences made by \textsc{PartialLearn} only rely on a single unverified assumption, that the ``small margin'' points used to find our low margin subspace are indeed small as compared to our ``large margin'' reference point. We will prove that verifying these inequalities may be reduced to a combinatorial problem we call matrix verification. We recall the definition from \Cref{sec:overview-verification}.
\begin{definition}[Matrix Verification]
Let $S \subset \mathbb{R}^d$ be a subset of size $m$, $h \in \mathbb{R}^d$ a hyperplane, and $\{C_{ij}\}_{i,j=1}^m$ a constraint matrix in $\R^{m \times m}$. We call the problem of determining whether for all $i,j$:
\[
\langle  x_i, h \rangle \leq  C_{ij}\langle x_j, h \rangle
\]
a \textbf{matrix verification problem} of size $m$. Further, we denote by $V(m)$ the minimum expected number of queries made across randomized algorithms which solve verification problems of size $m$ in any dimension.
\end{definition}
We will actually use a slightly more general version of matrix verification in which entries in the constraint matrix may be empty, requiring no comparison. However, since this is a strictly easier problem, we will focus on the version presented above. Matrix verification and point location are closely related problems. Here, we show a two way equivalence: without much overhead, we can reduce point location to a small verification problem, and likewise may reduce matrix verification to several point location problems in fewer dimensions. Together, these observations set up a recurrence that allows us to solve both problems efficiently. Before proving these results, we introduce a useful notation for the worst case expected depth of point location.
\begin{definition}
Let $T(n,d)$ denote the worst-case minimum expected depth of any randomized LDT that reliably computes the point location problem on an $n$ point subset of $\R^d$. That is, calling the family of such randomized LDTs $\mathcal{T}_r$:
\[
T(n,d) = \max_{X \subset \R^d: |X|=n} \min_{T \in \mathcal{T}_r}\left[\ED(T)\right]
\]
\end{definition}
First, we show how point location can be reduced to matrix verification.
\begin{lemma}\label{lemma:pl-to-veri}
Point location reduces to matrix verification:
\[
T(n,d) \leq C_1d\log(d)\log(n)\log(d\log(n)) + 2V(C_2d\log(n)),
\]
for some constants $C_1,C_2$.
\end{lemma}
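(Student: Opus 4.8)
The plan is to follow the route sketched in \Cref{sec:overview-verification}: run the trivial boosting of \textsc{PartialLearn} until every point of $X$ is labelled, extract from that run a modest family of linear inequalities whose joint validity certifies that no inference was wrong, and recognise this family as an instance of matrix verification of size $O(d\log n)$.

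First I would boost \textsc{PartialLearn} naively. Fix a probability parameter $p=\Theta(1/(d\log n))$ and repeatedly run \textsc{PartialLearn} on the pair $(X_i,\mu|_{X_i})$ of still-unlabelled points, exactly as in \textsc{WeakLearn} (\Cref{alg:weaklearner}) but without the query cap, stopping when $X_i=\emptyset$. By the coverage guarantee of \Cref{overview:arbitrary-ww-learner}, on the event $\mathcal E$ that every call behaves as promised (probability $\ge 1-mp$) the set is exhausted after $m=O(d\log n)$ rounds, and the total number of queries is $\sum_i O(k_i\log d\log(d/p))=O(d\log d\log n\log(d\log n))$, using $\sum_i k_i=O(d\log n)$ and $\log(d/p)=\Theta(\log(d\log n))$. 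For each round $i$ I record the reference point $x_i$ returned by \textsc{StructureSearch} and the small-margin sample $S_i$ fed to \textsc{DimReduce}. The only unverified fact used in round $i$ is a single family of inequalities of the form \eqref{overview:eq-small} — each $s\in S_i$ has normalised margin at most a $1/d^{\Omega(1)}$ factor of that of $x_i$ — which, if true, implies \eqref{overview:eq-basis} and hence that every inference of round $i$ is correct.

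Next I would convert these inequalities into a matrix-verification instance on $\{x_1,\dots,x_m\}$. The key point is that on $\mathcal E$ each point of $X$, in particular each $s\in S_i$, is labelled in some later round $j(s)>i$ (it is a small-margin point of $X_i$, so it is labelled neither before nor during round $i$), and \textsc{IsoLearn} only ever labels a point $y$ under the strong condition $|\sum_\ell\beta_\ell\gamma_\ell|\ge\tfrac{2}{3\lambda\sqrt{10d}}$, which by \eqref{eq:approx-relative} simultaneously yields a factor-$2$ estimate of $|\langle s,h\rangle/\langle x_{j(s)},h\rangle|$. Substituting this estimate into \eqref{overview:eq-small} turns it into an inequality between $|\langle x_i,h\rangle|$ and $|\langle x_{j(s)},h\rangle|$ with an explicitly computable constant $C_{i,j(s)}$ that absorbs both the factor-$2$ slack and the $d^{\Omega(1)}$ gap; the signs $\sign(\langle x_i,h\rangle)$ are known, since they already appear in \textsc{IsoLearn}'s labelling rule, so the absolute values can be dropped. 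Ranging over all rounds $i<m$ and all $s\in S_i$, and keeping only the tightest constraint per pair $(i,j)$, this is a (partial) matrix-verification instance of size $m=O(d\log n)$; the final round carries no small-margin sample that needs certification (had it carried one, the process would not have terminated there), so it is verified directly with $\tilde O(d)$ extra queries.

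Finally I would run a matrix-verification algorithm on this instance and settle correctness and cost. An induction on the rounds in reverse order shows that if every constraint holds then \eqref{overview:eq-basis} holds for every round — the base case being the directly-verified last round, and the inductive step using that the relative-margin estimate for $s\in S_i$ is trustworthy once round $j(s)$'s subspace has been certified. If the verifier certifies all constraints I output the tentative labelling; otherwise some assumption was genuinely violated and I restart from scratch with fresh randomness. Since $mp=O(1)$, the event ``$\mathcal E$ holds and every assumption is valid'' has probability at least $1/2$, so the expected number of restarts is $O(1)$ and the total expected cost is $O(d\log d\log n\log(d\log n))+2V(O(d\log n))$, as claimed. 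I expect the main obstacle to be the middle step: squeezing the relative-margin data out of every labelled point and tracking the constants so that the unverified inequalities become honest matrix-verification entries of the right size, together with checking that the extracted constraint set is self-certifying in the reverse-round sense rather than circular.
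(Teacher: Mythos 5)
Your proposal is correct and follows essentially the same route as the paper's proof: naive boosting of \textsc{PartialLearn} with $p=1/\poly(d,\log n)$, extraction of the unverified small-margin inequalities per round, a reverse-round induction using the factor-$2$ relative-margin estimates from \Cref{eq:approx-relative} to collapse each round's constraints onto pairs of reference points (keeping the tightest constant per pair), sign queries to drop absolute values, and a restart-on-failure argument yielding the factor $2$ on $V$. The only cosmetic deviation is your treatment of the final round, which the paper handles by directly checking its $O(d\log d)$ inequalities — equivalent to your fallback.
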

\begin{proof}
Given a set $X \subset \mathbb{R}^d$ of size $n$, we apply the boosting procedure used in \textsc{WeakLearn} (\Cref{lemma:weakLearner}), restricting at each step to the set of un-inferred points to learn all of $X$. Setting the failure probability of our weak learner \textsc{PartialLearn} to $p=\frac{1}{\text{poly}(d,\log(n))}$ will be more than sufficient to ensure that every weak learner (and every simulated oracle call) succeeds with probability at least $1/2$ by a union bound.

We now explain our reduction to verification. We require no more than $O(d\log(n))$ weak learners to solve the point location problem with constant probability. The inferences made by the $j$th weak learner in this process rely on a single reference point $x_j$, and a set of relatively small margin points $S_j$ of size $O(d\log(d))$. The only un-verified part of our inference is based upon the gap in margin between $x_j$ and any $v \in S_j$, and in particular on statements of the form:
\begin{align}\label{eq:verify-small-large}
\frac{|\langle v, h \rangle|}{\norm{T_j(v)}} \leq \frac{1}{c(d)}\frac{|\langle x_j, h \rangle|}{\norm{T_j(x_j)}},
\end{align}
where $T_j$ is the Barthe transform used by the $j$th weak learner, and $c(d)$ is some term dependent on dimension given by \Cref{eq:inf-small}. Our goal is to reduce verifying this set of equations to a matrix verification problem on the reference points $x_j$. In order to see this, we recall an important property of our inference from the proof of \textsc{IsoLearn} (\Cref{lemma:isotropic-oracle-learner}): we infer not only the sign of points, but also their relative margin with respect to $x_j$ up to a factor of $2$. In more detail, for each point $v$ learned by the $i$th weak learner, we also compute a quantity $C_v$ which satisfies:
\begin{align*}
\frac{1}{2}C_v\frac{|\langle x_i, h \rangle|}{\norm{T_i(x_i)}} \leq \frac{|\langle v, h \rangle|}{\norm{T_i(v)}}\leq 2C_v\frac{|\langle x_i, h \rangle|}{\norm{T_i(x_i)}}.
\end{align*}
If this holds and some point $v \in S_j$ is later learned in step $i>j$, notice that to verify \Cref{eq:verify-small-large} it is sufficient to compare just the reference points $x_i$ and $x_j$:
\begin{align}\label{eq:x-v-ver}
    2C_v\frac{\norm{T_i(v)}}{ \norm{T_j(v)}\norm{T_i(x_i)}}|\langle x_i, h \rangle| \leq \frac{1}{c(d)}\frac{|\langle x_j, h \rangle|}{\norm{T_j(x_j)}}.
\end{align}
Further, notice that this equation holds with high probability, as otherwise 
\begin{align*}
    2C_v\frac{\norm{T_i(v)}}{ \norm{T_j(v)}\norm{T_i(x_i)}}|\langle x_i, h \rangle| &> \frac{1}{c(d)}\frac{|\langle x_j, h \rangle|}{\norm{T_j(x_j)}}\\
    \implies \frac{|\langle v, h \rangle|}{\norm{T_j(v)}} &> \frac{1}{4c(d)}\frac{|\langle x_j, h \rangle|}{\norm{T_j(x_j)}},
\end{align*}
which we proved in \Cref{lemma:isotropic-oracle-learner} occurs with probability at most $p$. With these facts in mind, we induct on the number of verified weak learners, starting at the end, to prove that verifying \Cref{eq:small-ver} can be reduced to a matrix verification problem on the set of reference points $x_j$.

We begin with the final learner as our base case, each of whose $O(d\log(d))$ inequalities may each be verified in a constant number of linear queries. For the inductive step, assume then that we have verified all weak learners past step $j$. We wish to show that we can verify the inequalities for the $j$th learner by comparing $x_j$ to $x_i$, for $i>j$. Notice that the small margin points in $S_j$ must be learned in some later stage $i>j$, since they are not learned on or before step $j$ by construction. For each $i>j$, let $S_{j,i}$ denote elements in $S_j$ which are inferred in step $i$. Restating \Cref{eq:verify-small-large}, for each $v \in S_{j,i}$, we would like to verify:
\begin{align*}
\frac{|\langle v, h \rangle|}{\norm{T_j(v)}} \leq \frac{1}{c(d)}\frac{|\langle x_j, h \rangle|}{\norm{T_j(x_j)}}.
\end{align*}
Since $i>j$ by assumption, we can now apply the inductive hypothesis, that we know the relative margin of $v$ to $x_i$ up to a factor of $2$. As noted previously, it is then sufficient to verify \Cref{eq:x-v-ver}. Finally, note that it is sufficient to check only the equation with the minimum constant. In particular, let the constant $C_{ij}$ determine the smallest threshold we would like to verify over $S_{j,i}$, that is:
\[
C_{ij} = \min_{v \in S_{j,i}} \left (\frac{1}{2c(d)}\frac{ \norm{T_i(x_i)}}{ \norm{T_j(x_j)}}\frac{\norm{T_j(v)}}{C_v\norm{T_i(v)}} \right ).
\]
Then it is sufficient to verify for every pair $(x_i, x_j)$ of reference points that:
\begin{align}\label{eq:mat-ver-abs}
|\langle x_i, h \rangle| \leq C_{ij} |\langle x_j, h \rangle|
\end{align}
where all inequalities hold with probability at least $\frac{1}{2}$.
All that is left to reduce to matrix verification is to remove the absolute value signs. To do this, note that if we know $\sign(\langle x_i,h \rangle)$ and $\sign(\langle x_j,h \rangle)$, we can determine whether \Cref{eq:mat-ver-abs} holds by checking one of:
\[
\langle x_i, h \rangle \leq C_{ij} \langle x_j, h \rangle~\text{or}~\langle x_i, h \rangle \leq -C_{ij} \langle x_j, h \rangle.
\]
Modifying the sign of $C_{ij}$ appropriately for each pair, we have reduced to a matrix verification problem on the reference points. Since checking these signs adds no asymptotic complexity, this allows us to bound the expected number of queries for point location by:
\[
T(n,d) \leq \underbrace{C_1d\log(d)\log(n)\log(d\log(n))}_{\text{Reduce to Verification}} + \underbrace{V(C_2d\log(n))}_{\text{Verify}} + \underbrace{\frac{1}{2}T(n,d)}_{\text{Failed Verification}},
\]
for some constants $C_1$ and $C_2$. Collecting the $T(n,d)$ terms then gives the desired bound.
\end{proof}
While reducing from point location to matrix verification is an important step in and of itself, we have only succeeded so far in lowering the number of inequalities we need to verify to $\tilde{O}(d^2)$. The key insight for reaching a nearly linear algorithm is to notice that one can split matrix verification up into smaller point location problems in fewer dimensions, and then recurse on this process.

The intuition lies in the fact that even though there are $\tilde{O}(d^2)$ inequalities to verify, each $x_i$ has a single corresponding $x_j$ for which checking the corresponding inequality is both necessary and sufficient. In particular, recall that our inequalities have the form
\[
\langle x_i,h \rangle \leq C_{ij}\langle x_j,h \rangle.
\]
Since we need to verify that every inequality holds, checking for each $j$ only the inequality where the right hand side is minimal is both necessary and sufficient. This reduces the number of inequalities we need to check to $\tilde{O}(d)$. However, there is a slight issue: we do not know the values of $\langle x_j,h \rangle$, nor can we easily find them. Our strategy for finding these minima, a reduction to a series of low-dimensional point location problems, is shown in \Cref{alg:mv} and analyzed in the following Lemma.
\begin{algorithm}[h!]
\SetAlgoLined
\KwIn{Points $\{x_1,\ldots,x_m\}$, constraint matrix $C_{i,j}$, batch size $b$.}
\For{$0 \leq k \leq \left \lceil \frac{m}{b} \right \rceil -1$}
{
Solve the Point Location problem on $X=\{C_{ij_1}x_{j_1} - C_{ij_2}x_{j_2}\}_{1 \leq i \leq m}^{b\cdot k+1 \leq j_1,j_2 \leq b(k+1)}$.\\
\For{$1 \leq i \leq m$}
{
Using the resulting labeling, compute $j^\ast_{i,k} = \argmin\limits_{b\cdot k +1 \leq j \leq b\cdot (k+1)} \{C_{i,j}\langle x_{i,j}, h\rangle\}$.
}
}
\For{$1 \leq i \leq m$}
{
Compute $j^\ast_i = \argmin\limits_k \{C_{i,j^\ast_{i,k}}\langle x_{i,j^\ast_{i,k}}, h\rangle\}$.\\
\If{ $\langle x_i, h \rangle > C_{i,j^\ast_{i}}\langle x_{i,j^\ast_{i}}, h\rangle$}
{
\Return{False}
}
}
\Return{True}
 \caption{Matrix Verification to Point Location}
 \label{alg:mv}
\end{algorithm}

\begin{lemma}\label{lemma:ver-to-pl}
Matrix verificiation reduces to point location. For any positive integer $b<m$:
\[
V(m) \leq 2m + \frac{m^2}{b} + \frac{2m}{b} T(mb^2,b)
\]
\end{lemma}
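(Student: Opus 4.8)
The plan is to analyze \Cref{alg:mv} directly, first establishing its correctness and then bounding its expected query cost by linearity of expectation. The starting observation is purely combinatorial: for a fixed row $i$, the family of constraints $\{\langle x_i,h\rangle \le C_{ij}\langle x_j,h\rangle\}_{j}$ holds if and only if the single inequality $\langle x_i,h\rangle \le \min_{j} C_{ij}\langle x_j,h\rangle$ holds. Hence it suffices, for each $i$, to identify one index $j^\ast_i$ attaining this row-minimum and then spend a single query checking $\langle x_i,h\rangle \le C_{i,j^\ast_i}\langle x_{j^\ast_i},h\rangle$ (with the sign-$0$, i.e.\ equality, case correctly returning \emph{True}). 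The crux is that identifying these minimizers can be done cheaply via a two-level batching scheme, because finding a within-batch minimizer is itself a \emph{low-dimensional} point location problem.

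Concretely, partition the columns into $\lceil m/b\rceil$ batches of size at most $b$. For batch $k$, observe that for all rows $i$ and all pairs $j_1,j_2$ in the batch,
\[
\sign\!\big(C_{ij_1}\langle x_{j_1},h\rangle - C_{ij_2}\langle x_{j_2},h\rangle\big) = \sign\!\big(\langle C_{ij_1}x_{j_1}-C_{ij_2}x_{j_2},\,h\rangle\big),
\]
so knowing the labels of the point set $\{C_{ij_1}x_{j_1}-C_{ij_2}x_{j_2}\}$ (at most $mb^2$ points) determines, for every $i$, a minimizer $j^\ast_{i,k}$ over the batch once ties are broken consistently (say, by smallest index). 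These points all lie in the span of $\{x_j : j \text{ in batch } k\}$, a subspace of dimension at most $b$; since for $y$ in that subspace $\sign(\langle y,h\rangle)$ depends only on $y$ and the projection of $h$ onto the subspace, this instance is (after a change of basis, and translating queries through it) a genuine $b$-dimensional point location problem, solvable in expected $T(mb^2,b)$ queries. Running this for each of the $\lceil m/b\rceil \le 2m/b$ batches yields all $j^\ast_{i,k}$. Then, for each $i$, I would stitch together the global minimizer $j^\ast_i$ from the $\lceil m/b\rceil$ candidates $j^\ast_{i,k}$ using $\lceil m/b\rceil - 1$ pairwise comparisons, each a single linear query of the form $\sign(\langle C_{i,j^\ast_{i,k}}x_{j^\ast_{i,k}} - C_{i,j^\ast_{i,k'}}x_{j^\ast_{i,k'}},\,h\rangle)$, and finally run the one verification query per row. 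Correctness then follows because $C_{i,j^\ast_i}\langle x_{j^\ast_i},h\rangle = \min_{j\in[m]} C_{ij}\langle x_j,h\rangle$, so the algorithm returns \emph{False} exactly when some constraint is violated.

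For the query bound I would use additivity of expectation across the sequentially executed subroutines. The batch point location subroutines contribute $\lceil m/b\rceil\,T(mb^2,b)\le \tfrac{2m}{b}T(mb^2,b)$ in expectation (using $T(\cdot,\cdot)$'s monotonicity in the number of points and in the dimension, and $\lceil m/b\rceil \le m/b+1 \le 2m/b$ since $b<m$); the stitching contributes $m(\lceil m/b\rceil-1)\le m^2/b$ deterministic queries; and the final checks contribute $m$ more. Since the only randomness lives in the invoked point location routines, which are themselves reliable, the resulting LDT is zero-error, and the totals sum to at most $2m + \tfrac{m^2}{b} + \tfrac{2m}{b}T(mb^2,b)$, with the looseness in the $2m$ term absorbing the final-check count and the ceiling rounding.

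The step I expect to demand the most care is the dimension-reduction claim for the batch subproblems: one must justify that point location on a point set spanning only a $b$-dimensional subspace of the (possibly much larger) ambient space really is an instance to which $T(\cdot,b)$ applies. This amounts to a change-of-basis argument together with the observation that every linear query issued by a $b$-dimensional solver against the projected hyperplane can be simulated by a linear query in the ambient space against $h$ itself (a fact the paper already uses implicitly for \textsc{IsoLearn}). The remaining ingredients — the "check only the row-minimum'' reduction, consistent tie-breaking, and the arithmetic of combining $\lceil m/b\rceil$ terms — are routine.
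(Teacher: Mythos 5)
Your proposal is correct and follows essentially the same route as the paper's proof: reduce each row to its single minimal constraint, find the within-batch minimizers via point location on the difference points $C_{ij_1}x_{j_1}-C_{ij_2}x_{j_2}$ (which live in a $b$-dimensional span), stitch the $\lceil m/b\rceil$ local minima together with pairwise comparisons, and do one final check per row. The only difference is that you make explicit the change-of-basis justification for treating the batch subproblems as $b$-dimensional instances, which the paper leaves implicit.
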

\begin{proof}
Given the constraint matrix $C_{ij}$, our goal is to find for each row $i$ the argmin of $C_{ij}\langle x_j, h \rangle$. As soon as we have found such minimal indices $k_i$ for each row, it is sufficient to know for all $i$:
\[
\sign(\langle x_i - C_{ik_i}x_{k_i}, h \rangle),
\]
which takes a total of $m$ queries to check. Our strategy for finding each row's minimum is to divide the constraint matrix into batches of columns of size at most $b$, i.e. $C_1 = \{1,\ldots,b\},C_2=\{b+1,\ldots,2b\},\ldots$, and find the local argmin within each $C_k$ as a sub-problem. Having found these $\left \lceil \frac{m}{b} \right \rceil $ minima, we can compare them to find a global solution. Since this process must be done for each of $m$ rows, finding these globally minimal indices takes at most $m\left \lceil \frac{m}{b} \right \rceil \leq m+\frac{m^2}{b}$ queries.

They key observation is that each sub-problem may be rewritten as a point location problem in $b$ dimensions. Consider without loss of generality the sub-problem on columns $C_1$. For each $i$, our goal is to find:
\[
\underset{{j \in [b]}}{\text{argmin}} \left (C_{ij}\langle x_j, h \rangle \right ).
\]
Notice that to find this minimum, it is sufficient to have comparisons on each pair of values, that is:
\[
\sign \left(C_{ij_1}\langle x_{j_1}, h \rangle - C_{ij_2}\langle x_{j_2}, h \rangle \right)  = \sign  \left( \langle C_{ij_1}x_{j_1} - C_{ij_2} x_{j_2}, h \rangle \right),
\]
for $j_1,j_2 \in [b]$. This, however, is just a point location problem on the set of less than $mb^2$ points:
\[
X =  \{C_{ij_1}x_{j_1} - C_{ij_2}x_{j_2}\}.
\]
Noticing that $X$ lies in the at most $b$-dimensional span of $\{x_{1},\ldots, x_b\}$ then gives the desired result. 
\end{proof}
Together, \Cref{lemma:pl-to-veri} and \Cref{lemma:ver-to-pl} set up a recurrence which we can use to bound both $V(m)$ and $T(n,d)$.
\begin{corollary}\label{cor:veri-time}
The expected query complexity of a verification problem of size $m$ is at most:
\[
V(m) \leq O \left (m \cdot 2^{5\sqrt{\log(m)\log\log(m)}} \right )
\]
The expected query complexity of a point location in $d$ dimensions on sets of size $n=\text{poly}(d)$ is at most:
\[
T(n,d) \leq d \cdot 2^{O \left (\sqrt{\log(d)\log\log(d)} \right )}
\]
\end{corollary}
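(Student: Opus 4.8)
The plan is to fold Lemmas~\ref{lemma:pl-to-veri} and~\ref{lemma:ver-to-pl} into a single recurrence for $V(m)$ alone and solve it by strong induction on $m$, after which the bound on $T(n,d)$ for $n=\poly(d)$ drops out in one line. Substituting the point location bound of Lemma~\ref{lemma:pl-to-veri} (invoked in dimension $b$) into the right-hand side of Lemma~\ref{lemma:ver-to-pl}, and abbreviating $L=\log m$, $\ell=\log\log m$, I would take the batch size to be $b=\lceil m/r\rceil$ with $r = 2^{\sqrt{L\ell}}$. This choice is forced by two competing pressures: the additive term $m^2/b = mr$ from Lemma~\ref{lemma:ver-to-pl} must stay essentially linear, which wants $r$ small, whereas the recursion must drive the log of the problem size down by a genuine $\Theta(\sqrt{L\ell})$ per step so that it terminates after only $O(\sqrt{L/\ell})$ levels, which wants $r$ large; $r=2^{\sqrt{L\ell}}$ is the balance point, and it is this shape that produces the $2^{O(\sqrt{\log\cdot\log\log})}$ overhead.

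Before running the induction I would check that the recursion is well founded and never leaves the regime where the target bound is meaningful. With $d'=b\approx m/r$, the inner point location instance produced by Lemma~\ref{lemma:ver-to-pl} has $n' = mb^2 = b^{3+o(1)}$ points, using that $r$ is sub-polynomial in $b$; hence $n'=\poly(b)$ and Lemma~\ref{lemma:pl-to-veri} applies there, reducing $T(mb^2,b)$ to a ``reduce-to-verification'' term of size $O\big((m/r)\,\polylog(m)\big)$ plus $2V(m'')$ with $m''=\Theta\big((m/r)\log m\big)$. Since $r$ outgrows every fixed power of $\log m$, we have $m''<m$, so the inductive hypothesis $V(m'')\le m''\cdot 2^{c_0\sqrt{\log m''\log\log m''}}$ is available, and unwinding Lemma~\ref{lemma:ver-to-pl} gives
\[
V(m)\;\le\; O\big(m\,\polylog(m)\big)\;+\;mr\;+\;O(r)\cdot V(m'').
\]

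The crux — the step I expect to be the main obstacle — is showing that the exponent does not creep up. The key point is that the multiplier $O(r)$ in front of $V(m'')$ cancels the factor $1/r$ implicit in $m''=\Theta\big((m/r)\log m\big)$, so $O(r)\cdot V(m'') = O(m\log m)\cdot 2^{c_0\sqrt{\log m''\log\log m''}}$; then using $\log m''\le L-\sqrt{L\ell}+O(\ell)$ and a first-order expansion of $\sqrt{\,1-\sqrt{\ell/L}\,}$ yields $\sqrt{\log m''\log\log m''}\le \sqrt{L\ell}-\tfrac{\ell}{2}+o(\ell)$, so this term is at most $O(m)\cdot 2^{c_0\sqrt{L\ell}}\cdot(\log m)^{\,1-c_0/2+o(1)}$. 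Choosing $c_0$ a small constant larger than $2$ (careful accounting of the hidden constants, with $r=2^{\sqrt{L\ell}}$, pins it at $c_0=5$) makes the trailing factor tend to $0$, swallowing all the constants; meanwhile $mr=m\cdot 2^{\sqrt{L\ell}}$ and the $O\big(m\,\polylog(m)\big)$ cost are well below $m\cdot 2^{c_0\sqrt{L\ell}}$ because $\ell=o(\sqrt{L\ell})$. This closes the induction, with $V(m)=O(m^2)$ handling the finitely many base cases, giving $V(m)\le O\!\left(m\cdot 2^{5\sqrt{\log m\log\log m}}\right)$. Finally, for $n=\poly(d)$ the bound on $T$ is immediate: plug $V(C_2 d\log d)\le O(d\log d)\cdot 2^{5\sqrt{\log(C_2 d\log d)\log\log(C_2 d\log d)}}$ into Lemma~\ref{lemma:pl-to-veri} and absorb the residual $\log^3 d$ and $\polylog(d)$ factors into $2^{O(\sqrt{\log d\log\log d})}$.
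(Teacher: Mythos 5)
Your proposal is correct and follows essentially the same route as the paper: substitute \Cref{lemma:pl-to-veri} into \Cref{lemma:ver-to-pl}, choose the batch size so that the reduction factor is $\beta(m)=2^{\sqrt{\log m\log\log m}}$ (your $b=m/r$ differs from the paper's $b=m/(C_4\log(m)\beta(m))$ only by a $\log m$ factor, which you correctly absorb into $m''$), and close the induction by showing the exponent $5\sqrt{\log m''\log\log m''}$ loses roughly $\tfrac{5}{2}\log\log m$ relative to $5\sqrt{\log m\log\log m}$, which dominates the $O(\log m)$ multiplier. Your first-order expansion making the constraint $c_0>2$ explicit is in fact a cleaner account of the step the paper's displayed inequality handles implicitly.
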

\begin{proof}
Plugging \Cref{lemma:pl-to-veri} into \Cref{lemma:ver-to-pl} implies a recurrence for any $b<m$:
\[
V(m) \leq \frac{m^2}{b} + C_3\log^3(m)m+4\frac{m}{b} V(C_4b\log(m))
\]
for some constants $C_3$ and $C_4$.
For simplicity, consider the form of this recurrence upon choosing $b = \frac{m}{C_4\log(m)\beta(m)}$, for $\beta(m)$ some function $\Omega(\log^2(m))$. The recurrence then reduces to:
\begin{align}\label{eq:V(n)} 
V(m) \leq C_5\log(m)\beta(m)m + C_6\log(m)\beta(m)V\left ( \frac{m}{\beta(m)} \right )
\end{align}
for some constants $C_5$ and $C_6$.
Notice that recursing to size even $\sqrt{m}$ requires at least $\Omega(\log(m)/\log(\beta(m)))$ iterations. Then at the very least any solution to this recursion must have both an $\Omega(\beta(m)m)$ term and an $\Omega(\log(m)^{\log(m)/\log(\beta(m))}m)$ term. Since the former term is monotonically increasing in $\beta(m)$, and the latter term monotonically decreasing, the optimal choice for $\beta(m)$ (asymptotically) is when these terms equalize, or in this case around:
\begin{align*}
\beta(m) = 2^{\sqrt{\log(m)\log\log(m)}}.
\end{align*}
We now solve the recurrence by induction for this particular choice of $\beta(m)$. For our base case, notice that a constant sized matrix verification problem may be solved in constant queries by brute force, giving:
\[
V(c_1) \leq c_2,
\]
for any constant $c_1$ and some corresponding constant $c_2$. For the inductive step, assume that our bound holds for $m' < m$, that is:
\[
V(m') \leq Cm'2^{5\sqrt{\log(m')\log\log(m')}}
\]
for some constant $C$. We may assume the left summand in \Cref{eq:V(n)} is smaller than the right hand term, else we are done for $m$ past some sufficiently large constant. Applying this and the inductive hypothesis, we may rewrite our bound as:
\begin{align*}
V(m) &\leq Cm2^{5\sqrt{(\log(m)-\log(\beta(m))\log(\log(m) - \log(\beta(m)))}+\log\log(m)+C_7}\\
&\leq Cm2^{5\sqrt{\log(m) \log\log(m)}}
\end{align*}
for some constant $C_7$ and sufficiently large $m$.
Plugging this bound for $V(m)$ into \Cref{lemma:pl-to-veri} immediately gives the desired bound on $T(n,d)$.
\end{proof}
Notice here that we have made the assumption $|X|=\text{poly}(d)$, despite the fact that we are interested in arbitrarily large point sets. This is due to the fact that applying \Cref{cor:veri-time} naively would result in a solution to point location problems of size $n$ with expected query complexity $(d\log(n))^{1+o(1)}$, which is just short of what we need to prove \Cref{thm:zldt}. To avoid additional terms in $n$, we will apply \Cref{cor:veri-time} to batches of weak learners rather than the process as a whole, showing first how to reduce each batch to a $\text{poly}(d)$-sized point location problem.
\begin{proof}[Proof of \Cref{thm:zldt}]
For simplicity, we will use \textsc{WeakLearn} rather than \textsc{PartialLearn}. We batch together the verification of $d^2$ learners at a time, and show that this verification can be reduced to solving a point location problem on poly($d$) points in  $d$ dimensions. 
As in \Cref{lemma:pl-to-veri}, note that we can verify our inferences by checking the relative margin of the reference points $x_i$ to their corresponding sets of small margin points $S_i$. For any $s \in S_i$ we can verify a statement like $|\langle x_i, h \rangle| \geq C_{ij}|\langle s, h \rangle|$ directly through some combination of the four linear queries: 
\[
\sign(x_i), \sign(s), \sign(\langle x_i + C_{ij}s, h \rangle), \sign(\langle x_i - C_{ij}s, h \rangle).
\]
Further, since each weak learner uses at most $\tilde{O}(d)$ reference points and $\tilde{O}(d^2)$ respective small margin points, verifying a batch of $d^2$ such learners only requires knowing the labels of at most $\tilde{O}(d^5)$ points in $d$ dimensions. This is just a point location problem on $\text{poly}(d)$ points, so by \Cref{cor:veri-time} we can reliably compute the labels in at most $d \cdot 2^{O\left(\sqrt{\log(d)\log\log(d)}\right)}$ queries in expectation.

Since our weak learner has constant coverage, we need to verify at most $c\log(n)$ (for some constant $c>0$) total learners to label all of $X$. All that is left is to analyze the expected query complexity of doing so. Since we are batching the learners together, we can treat the number of queries used by each batch as a variable and use linearity of expectation to claim that the final expected query complexity is at most the $\ceil{\frac{c\log(n)}{d^2}}$ batches times the complexity of a single batch. For our analysis, we break into two cases based on the size of $n$:
\paragraph{Case 1: $c\log(n) \geq d^2$.} In this first case, since our randomized point location solution is correct with at least constant probability, in expectation we only need to run each batch of learners plus verification a constant number of times. The $d^2$ weak learners together take $O(d^3\log^2(d))$ queries, which dwarfs the complexity of the verification process. Applying this across all $\ceil{\frac{c\log(n)}{d^2}}$ batches then gives a randomized LDT $T$ that reliably computes point location with expected depth:
\[
\ED(T) \leq O(d\log^2(d)\log(n)).
\]
Unfortunately, this analysis does not work when $c\log(n) < d^2$, since the ceiling of $\ceil{\frac{c\log(n)}{d^2}}$ may be more than just a constant times larger than $\frac{c\log(n)}{d^2}$ itself.
\paragraph{Case 2: $c\log(n) \leq d^2$.} In this case we only perform verification a single time at an expected cost of $d \cdot 2^{O(\sqrt{\log(d)\log\log(d))}}$ queries. Further, the weak learners take only $O(d\log^2(d)\log(n))$ queries in expectation, and succeed with constant probability. Thus we only need to run this process a constant number of times in expectation, giving a randomized LDT $T$ that reliably computes point location with expected depth:
\[
\ED(T) \leq O(d\log^2(d)\log(n))+d \cdot 2^{O\left(\sqrt{\log(d)\log\log(d)}\right)}.
\]
\end{proof}

\section{Binary Classifiers and Non-homogeneity}\label{sec:non-homogeneous}
In this section, we briefly discuss how to generalize our arguments to non-homogeneous hyperplanes and to binary labels. In particular, we consider labeling a set $X \subset \R^d$ with respect to a non-homogeneous halfspace $\langle \cdot, h \rangle + b$ via binary queries of the form $\sign(\langle \cdot, h\rangle + b) \in \{-,+\}$, where we assume without loss of generality that points on the hyperplane are labeled `$+$'. 

We need to address two differences: non-homogeneity, and binary rather than ternary queries. Our strategy is to homogeneously embed $X$ and $h$ into $d+1$ dimensions, sending each $x \in X$ to $x'=(x,1)$, and $h$ to $h'=(h,b)$. Notice that since $\langle x', h' \rangle = \langle x, h \rangle + b$, solving the point location problem over the embedded set $X'$ with respect to the homogeneous hyperplane $h'$ is then sufficient for our purposes. To apply our arguments, however, we need to be able to simulate queries of the form $\sign( \langle (x,\alpha) , h' \rangle )$ for $x \in \mathbb{R}^d, \alpha \in \R$. Consider the following potential query assignment:
\[
Q_{(x,\alpha)}(h') = \sign(\alpha) \cdot \sign \left ( \langle x/\alpha, h \rangle + b \right).
\]
This simulation gives the desired result in all but two circumstances: $\alpha=0$, or $\alpha < 0$ and $\langle (x,\alpha),h' \rangle = 0$. We argue that it is easy to modify our algorithm such that these cases occur with probability $0$ and/or do not adversely affect the algorithm. 

Assume for the moment that no point in $X'$ lies on the hyperplane $h'$. In this scenario, we can group our queries into two types, random labels, and random comparisons. In particular, for $x$ a random combination of points in $X'$, our queries are either of the form $\sign(\langle x, h' \rangle )$, or $\sign(\langle x - c(x,y)y, h' \rangle)$ for some $y \in \mathbb{R}^{d+1}$ and $c(x,y)$ some coefficient possibly dependent on $x$ and $y$.
Notice that because $X'$ lies off the hyperplane, random label queries will avoid both bad scenarios with probability $1$. Further, if either does occur, at worst it may cause us to restart our algorithm (with some care, neither case will cause a mistake). Comparisons are slightly more nuanced due to the fact that $c(x,y)$ depends on $x$ and $y$. Here the key fact is that none of our methods require using exactly $c(x,y)$. This means that we can (with probability $1$) fudge $c(x,y)$ slightly to avoid the $\alpha=0$ case, and that if $x-c(x,y)y$ does happen to lie on the hyperplane, it does not matter whether we receive `$+$' or `$-$'.

In general, however, we cannot assume no point in $X'$ lies on the hyperplane $(h,b)$. In our standard argument, we deal with this degenerate case simply by checking it manually, but since our queries are now binary rather than ternary, this is no longer so simple. Instead, we argue that we can reduce to the case that no point lies on the hyperplane via shifting it by an infinitesimal $\delta$ to $(h,b+\delta)$. Notice that this shifted problem has two properties: no point in $X'$ lies on the new hyperplane, and every query response with no infinitesimal part stays the same. Thus this problem is solvable by the previous argument, and indistinguishable from the case that points in $X'$ do lie on the hyperplane, which completes the argument for general $X'$ and in turn for general $X$.

\bibliographystyle{unsrtnat}  
\bibliography{references} 

\appendix

\section{Characterizing Isotropic Transformation}\label{app:iso}
In this section, we provide an exact characterization of when a pair $(X,\mu)$ can be transformed into isotropic position. First, recall the definition of isotropic position (previously $0$-isotropic position):
\begin{definition}
A pair $(X \subset \mathbb{R}^d, \mu)$ lies in isotropic position if:
\[
\forall v \in 
\R^d~:~\sum\limits_{x \in X}\mu(x) \frac{\langle x,v \rangle^2}{\norm{v}^2} = \frac{1}{d}.
\]
\end{definition}
Forster \cite{forster2002linear} proved that any uniformly weighted set $X \subset \R^d$ in general position may be transformed into isotropic position. We generalize Forster's result by showing that a weaker condition (similar to that of \Cref{cor:iso}) is both necessary and sufficient.
\begin{theorem}
Given a pair $(X,\mu)$, there exists an invertible linear transformation $T$ such that $(X_T,\mu_T)$ is in isotropic position if and only if for all $0 < k < d$, every $k$-dimensional subspace $V$ satisfies either:
\begin{enumerate}
\item $\mu(X \cap V) < \frac{k}{d}$, or
\item $\mu(X \cap V) = \frac{k}{d}$ and the remaining mass lies in a $(d-k)$-dimensional subspace.
\end{enumerate}
\end{theorem}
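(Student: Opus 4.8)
The plan is to prove the two directions separately; in both it helps to first discard points of zero $\mu$-mass, so that we may assume $\operatorname{supp}(\mu)=X$.

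\textbf{Forward direction (isotropic $\Rightarrow$ condition).} Suppose $T$ is invertible with $(X_T,\mu_T)$ in isotropic position, so its covariance $\Sigma=\sum_x\mu_T(x)\,x_Tx_T^\top$ equals $\tfrac1d I$. Fix a $k$-dimensional subspace $V$ with $0<k<d$ and let $P$ be the orthogonal projection onto $T(V)$. Taking the trace of $P\Sigma P$ gives $\tfrac{k}{d}=\operatorname{trace}(P\Sigma P)=\sum_x\mu_T(x)\,\norm{P x_T}^2$. Each point $Tx/\norm{Tx}$ lies in $T(V)$ exactly when $x\in V$, so the points of $X_T$ in $T(V)$ carry total $\mu_T$-mass $\mu(X\cap V)$, and for such a (unit) point $P x_T=x_T$; discarding the remaining nonnegative terms yields $\tfrac{k}{d}\ge\mu(X\cap V)$. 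Equality forces $\norm{Px_T}=0$ for every point $x_T\notin T(V)$ of positive mass, i.e.\ it lies in $T(V)^\perp$; pulling back by $T^{-1}$, the mass of $\mu$ outside $V$ lies in the $(d-k)$-dimensional subspace $T^{-1}(T(V)^\perp)$. This is exactly condition 1 or condition 2.

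\textbf{Backward direction (condition $\Rightarrow$ isotropic).} Here I would invoke the strong form of Barthe's theorem~\cite{barthe1998reverse}: an invertible $T$ with $(X_T,\mu_T)$ in isotropic position exists if and only if $d\mu$ lies in the \emph{relative interior} of the basis polytope $K(X)$. (\Cref{lemma:Barthe} is the analogue for \emph{approximate} isotropic position, for which mere membership in $K(X)$ suffices; passing to the relative interior is what captures exactness.) It then remains to check that the stated subspace condition is equivalent to $d\mu\in\operatorname{relint}K(X)$. Viewing $X$ as the ground set of its linear matroid, $K(X)$ is cut out by $y\ge 0$, $\sum_x y_x=d$, and $\sum_{x\in X\cap V}y_x\le\dim V$ over all subspaces $V$; a point of $K(X)$ lies in its relative interior iff every constraint not forced to equality on all of $K(X)$ is strict there; and the forced equalities among the subspace constraints are exactly those where $X\cap V$ is a \emph{separator} of the matroid, equivalently $X\setminus V$ spans a $(d-\dim V)$-dimensional subspace. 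So the argument is: "$\mu(X\cap V)\le\dim V/d$ for all $V$'' is precisely "$d\mu\in K(X)$''; and whenever $\mu(X\cap V)=\dim V/d$, the chain $\dim V=d\,\mu(X\cap V)\le\operatorname{rank}(X\cap V)\le\dim V$ forces $X\cap V$ to span $V$, so "the remaining mass lies in a $(d-\dim V)$-dimensional subspace'' is literally the statement that $X\cap V$ is a separator. Hence the equality cases permitted by the theorem are exactly the forced-equality constraints, and the condition says every other constraint is strict, i.e.\ $d\mu\in\operatorname{relint}K(X)$.

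\textbf{A self-contained fallback, and the main obstacle.} To avoid the strong form of Barthe one can instead argue as follows. When every subspace satisfies the strict inequality $\mu(X\cap V)<\dim V/d$, apply \Cref{prop:dense-subspace} to get transforms $T_\varepsilon$ putting $(X,\mu)$ into $\varepsilon$-isotropic position; after normalizing (say $\det T_\varepsilon=1$), show the $T_\varepsilon$ cannot degenerate, since if a subsequence compressed its image toward a proper subspace $W$ then the $W$-mass of the limit configuration would be pinned at $\dim W/d$, contradicting strictness; a limit $T$ then exists and gives exact isotropic position. The equality cases reduce to the strict case by induction on $d$: if $\mu(X\cap V)=k/d$ with complementary $(d-k)$-dimensional subspace $W$ as in condition 2, then $V\oplus W=\R^d$ (both contain parts of the spanning set $X$ and the dimensions add to $d$), the problem block-decomposes along $V\oplus W$, the renormalized restrictions to $V$ and to $W$ inherit the hypothesis in dimensions $k$ and $d-k$, and the block-diagonal combination of the two isotropic transforms — scaled so the blocks carry trace $k/d$ and $(d-k)/d$ — is isotropic on $(X,\mu)$. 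I expect the crux to be precisely this non-degeneracy/compactness step (or, on the other route, pinning down Barthe's relative-interior criterion together with the matroid translation of "separator'' into subspace language); the trace identity in the forward direction and the inductive bookkeeping in the decomposition are routine.
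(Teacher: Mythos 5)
Your forward direction is the paper's argument (the trace of the covariance restricted to $T(V)$), and your ``fallback'' is in fact exactly the route the paper takes: reduce by induction on $d$ to the case where every proper subspace has strictly less than its fair share of mass, then extract a limit of the $\varepsilon$-isotropic transforms and show it is invertible. Two corrections to that fallback. First, normalizing by $\det T_\varepsilon = 1$ does not give compactness --- the set $\{T : \det T = 1\}$ is unbounded, so a limit point need not exist. The paper instead scales each $T_\varepsilon$ so that its largest singular value is $1$; this makes the family bounded, and the entire difficulty is then concentrated in showing the limit $T$ is non-singular. Second, your one-line sketch of non-degeneracy (``the $W$-mass of the limit configuration would be pinned at $\dim W/d$'') cannot be run on the limit configuration itself, because points of $\ker T$ have no well-defined normalized image under $T$. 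The argument must be run at finite $\varepsilon$: for $x \notin \ker T$ one has $T_\varepsilon(x)/\norm{T_\varepsilon(x)} \to T(x)/\norm{T(x)} \in H := \mathrm{Im}(T)$, so for small $\varepsilon$ the trace of the covariance of $(X_{T_\varepsilon},\mu_{T_\varepsilon})$ along $H$ is at least $\mu(X \setminus \ker T) - o(1)$; since $\varepsilon$-isotropy forces this trace to tend to $\dim(H)/d$, one concludes $\mu(X \cap \ker T) \ge \dim(\ker T)/d$, contradicting strictness. With those two repairs your fallback is precisely the paper's proof.

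Your primary route --- the exact form of Barthe's theorem ($d\mu \in \operatorname{relint} K(X)$) combined with the matroid description of the relative interior of the base polytope via separators --- is a genuinely different and plausible path, and the translation of ``separator'' into ``the remaining mass lies in a complementary $(d-k)$-dimensional subspace'' is the right dictionary. But it imports two nontrivial external facts (the relative-interior criterion for exact attainment, and the affine-hull/facet description of $K(X)$) that the paper neither states nor proves; \Cref{lemma:Barthe} as available here only gives the approximate statement from membership in $K(X)$. So as written that route is an outline resting on unproved citations, whereas the compactness route can be closed entirely with the material already in the paper.
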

\begin{proof}
We begin by showing that this condition is necessary. By the same argument as \Cref{cor:iso}, there cannot exist a subspace $V$ with more than a $k/d$ fraction of $(X,\mu)$. Assume then there exists some $V$ containing exactly a $k/d$-fraction of $(X,\mu)$, but the remainder does not lie entirely in a $(d-k)$-dimensional subspace. After the application of any invertible linear transformation $T$, there will exist a $k$-dimensional subspace $T(V)$ with a
$k/d$-fraction of $(X_T,\mu_T)$. Further, the remaining mass in $X_T$ cannot be entirely orthogonal to $T(V)$, as this would imply it lies in a $(d-k)$-dimensional subspace. Since some point with non-zero measure then has non-zero projection onto $T(V)$, this forces the trace of the covariance matrix along $T(V)$ to be more than $k/d$.

Next, we show that this condition is sufficient. We  proceed by induction. The base case $d=1$ trivially holds. For the inductive step, we first show it is sufficient to consider the case where no $k$-dimensional subspace $V$ contains a $k/d$ fraction of $(X,\mu)$ for any $0<k<d$. If such a $V$ did exist, by our assumption there must exist a subspace $V'$ of dimension $d-k$ containing the remaining mass. Note that $V$ and $V'$ are complementary, as otherwise the entire measure of $X$ would
be contained in a subspace of dimension $d-1$ which is contrary to our assumptions. By the inductive hypothesis, we have isotropic transforms for $V \cap X$ and $V' \cap X$. Performing them on their appropriate subspaces and then making their images
orthogonal yields the desired isotropic transform for $X$.

We may now assume every $k$-dimensional subspace has less than $k/d$ mass. By \Cref{cor:iso}, we know that for any $\varepsilon>0$ there exists an $\varepsilon$-approximate isotropic transform $T_\varepsilon$. We can scale $T_\varepsilon$ so that its largest singular value is 1. By compactness, there must exist some limit point $T$ of the $T_\varepsilon$ as $\varepsilon \rightarrow 0$. We claim that $T$ is the appropriate isotropic transform. Perhaps the most difficult part of proving this lies in showing that $T$ is non-singular.

To start, note that by continuity, the largest singular value of $T$ must be $1$. Suppose then for the sake of contradiction that $V=\ker(T)$ is a dimension $k>0$ subspace. We claim that $V$ must contain at least a $k/d$-fraction of $(X,\mu)$. To see this, let $H=\text{Im}(T)$, and note that for any $x\in X \setminus (X \cap V)$ it must be the case that:
\[
\lim_{\varepsilon\rightarrow 0} \frac{T_{\varepsilon}(x)}{|T_{\varepsilon}(x)|} = \frac{T(x)}{|T(x)|} \in H.
\] 
Then for any $\delta>0$, there is some sufficiently small $\varepsilon$ such that for all $x \in X \setminus (X \cap V)$, $T_\varepsilon(x)/|T_\varepsilon(x)|$ is within $\delta$ of $H$. For small enough $\delta$, this will cause the trace of the covariance matrix of $T_\varepsilon(x)$ along $H$ to be arbitrarily close to $\mu(X \setminus (X \cap V))$. However, as $\varepsilon$ goes to $0$, this trace must be $\dim(H)/d$, which is a contradiction unless a $k/d$-fraction of $(X,\mu)$ lies in $V$.

Therefore, $T$ is non-singular and for every $x\in X$, we have that $T(x)/|T(x)| = \lim_{\varepsilon\rightarrow 0} T_\varepsilon(x)/|T_\varepsilon(x)|$. The covariance matrix of $T(X)$ is then just the limit of the covariance matrices of $T_\varepsilon(X)$, which approach $I/d$, making $T$ the desired isotropic transform.
\end{proof}
\end{document}